\documentclass[a4paper]{article}
\usepackage{a4wide}
\usepackage[UKenglish]{babel}
\usepackage{graphicx,subfigure,caption}
\graphicspath{{./}{figs/}}
\usepackage[colorlinks=true,linkcolor=blue,citecolor=red]{hyperref}
\usepackage{xcolor}
\usepackage{amssymb,amsthm,empheq,bbold}
\usepackage[inline]{enumitem}
\usepackage{authblk}
\usepackage{calrsfs}
    \DeclareMathAlphabet{\pazocal}{OMS}{zplm}{m}{n}

\DeclarePairedDelimiter\abs{\lvert}{\rvert}
\DeclarePairedDelimiter\ave{\langle}{\rangle}
\newcommand{\bA}{\mathbf{A}}
\newcommand{\cd}{\mathfrak{d}}
\newcommand{\dWinf}{\underline\cd_W}
\newcommand{\dWNinf}{\underline\cd_{W_N}}
\newcommand{\cE}{\pazocal{E}}
\newcommand{\cG}{\pazocal{G}}
\newcommand{\Leb}{\pazocal{L}^1}
\newcommand{\Lip}{\operatorname{Lip}}
\newcommand{\cm}{\mathfrak{m}}
\newcommand{\cM}{\mathfrak{M}}
\newcommand{\N}{\mathbb{N}}
\newcommand{\norm}[2]{\Vert#1\Vert_{#2}}
\newcommand{\cP}{\mathcal{P}}
\newcommand{\pr}[1]{{}^\prime\!#1}
\newcommand{\R}{\mathbb{R}}
\newcommand{\supp}[1]{\operatorname{supp}#1}
\newcommand{\cV}{\pazocal{V}}
\newcommand{\cW}{\pazocal{W}}

\newtheorem{assumption}{Assumption}[section]

\newtheorem{theorem}[assumption]{Theorem}
\newtheorem{lemma}[assumption]{Lemma}

\theoremstyle{remark}\newtheorem{remark}[assumption]{Remark}
\allowdisplaybreaks

\title{Homogeneous Boltzmann-type equations on dense graphs}
\author{Gabriele Taricco}
\author{Andrea Tosin}
\affil{{\small Department of Mathematical Sciences ``G. L. Lagrange'', Politecnico di Torino, Italy}}
\date{}
		
\begin{document}
\maketitle
	
\begin{abstract}
In kinetic theory, interactions between particles are typically assumed to be ``all-to-all'', meaning that any pair of randomly selected particles may, in principle, interact. This assumption originates from the theory of colliding gas molecules; however, it may be less appropriate for describing other forms of interaction, such as social interactions. These are more naturally characterised as ``some-to-some'', reflecting the existence of preferential connections between agents. In this paper, we consider homogeneous Boltzmann-type equations on finite graphs that model such networks of preferential interactions, and we rigorously derive their dense graph limit as the number of agents tends to infinity. We also investigate the long-time behaviour of the limiting equation in the case of linear pairwise interactions, characterising the emergent equilibrium distributions and relating them to their counterparts in the classical ``all-to-all'' setting.

\medskip

\noindent{\bf Keywords:} graphons, cut distance, Fourier and Wasserstein metrics, dense graph limit, asymptotic distributions

\medskip

\noindent{\bf Mathematics Subject Classification:} 35Q20, 35Q91, 35R02, 91D30
\end{abstract}

\section{Introduction}
In this paper, we study homogeneous Boltzmann-type equations on graphs as a model for networked interactions in multi-agent systems.

In the classical kinetic theory of gases, any pair of freely moving molecules may collide. Over the past decades, this paradigm has been extended to the kinetic modelling of collective behaviour in social systems, tacitly assuming that interactions are ``all-to-all'', that is that they occur, in principle, between any randomly selected pair of agents. See e.g.,~\cite{pareschi2013BOOK}. Such an assumption, however, overlooks the fact that social interactions are typically ``some-to-some'', namely constrained by an underlying network of preferential connections, which determines the set of feasible communications among the agents. In particular, agents can effectively exchange their characteristic traits only if they are directly connected or, in the terminology of social networks, if they ``follow'' each other.

This naturally raises the question of how the topology of the interaction network can be incorporated into a homogeneous Boltzmann-type framework, namely a statistical description of the evolution of the distribution function of the agents' characteristic trait under pairwise interactions. It is worth emphasising that the term \textit{homogeneous} refers here to the absence of the spatial transport term that appears in the Boltzmann equation of gas dynamics. The introduction of an interaction network does not, in itself, reintroduce spatial dynamics, as the network encodes latent connection patterns rather than spatial dynamics. Consequently, the Boltzmann-type equations on graphs considered in this paper remain homogeneous with respect to space.

In the literature, network interactions have typically been incorporated in kinetic models of sociophysical systems by treating agents' connectivity as a structural variable coupled with their characteristic trait and endowed with a probability density function. This density may either be prescribed \textit{a priori}, drawing inspiration from models describing the scaling properties of large-scale networks, such as~\cite{barabasi1999SCIENCE,barabasi1999PHYSA,franceschi2025PRSA}, or arise as the solution of a possibly coupled equation governing the evolution of social contacts; see, for instance,~\cite{albi2025EJAM,burger2025SIADS,he2023JAMP,loy2022PTRSA,toscani2018PRE}.

More recently, \textit{graphon}-based approaches have been proposed to account for interaction networks in Boltzmann-type equations. Roughly speaking, a graphon may be regarded as a continuum counterpart of the adjacency matrix of a graph in the limit of a large number of vertices and edges, when the graph becomes \textit{dense} and can therefore be represented by a continuous object encoding the density of connections. The notion of graphon was introduced by Lov\'{a}sz and Szegedy in their seminal work on limits of dense graph sequences~\cite{lovasz2006JCTB}. To date, graphons have mainly been employed as phenomenological interaction kernels in Boltzmann-type models of opinion formation~\cite{duering2024JNS}, possibly coupled with the spread of infectious diseases~\cite{ali2026PREPRINT,bondesan2026PREPRINT}. See also~\cite{burger2021VJM}, where a closely related formalism is adopted, albeit without any explicit reference to graphons; and~\cite{prisant2026TCNS}, where the evolution of the mean opinion on large undirected networks is studied through a framework bearing some similarities to an averaged Boltzmann-type kinetic description.

The present paper contributes to this emerging line of research in two complementary ways.

First, in Section~\ref{sect:kin_eq.graph}, we provide a rigorous derivation of the homogeneous Boltzmann-type equation on dense graphs as the limit of corresponding kinetic equations posed on a sequence of finite graphs. In the latter setting, the network of interactions is described by adjacency matrices, making these models more fundamental and closer to the physical interpretation of networked interactions. To this end, we build upon the framework introduced in~\cite{nurisso2024EJAM}, complementing it with concepts from the theory of graph limits. Standard references for the latter include the aforementioned paper by Lov\'{a}sz and Szegedy~\cite{lovasz2006JCTB} and Lov\'{a}sz's book~\cite{lovasz2012BOOK}. We also draw upon the purposeful monograph~\cite{janson2013NYJM}.

Secondly, in Section~\ref{sect:relax_equil}, focusing on the limiting equation in the benchmark case of linear interaction rules, we investigate some classes of stationary distributions arising for long times. These play a role analogous to that of Maxwellian distributions in classical kinetic theory. Their analysis allows us to clarify the relationship between solutions on dense graphs and solutions of the classical homogeneous Boltzmann-type equation without graph structure. In particular, we prove that, for suitable regimes of the interaction parameters, graphon-based solutions converge asymptotically in time to their graph-free counterparts.

\section{Formulation of kinetic equations on graphs}
\label{sect:kin_eq.graph}
In this section, we rigorously derive a homogeneous Boltzmann-type equation on a \textit{dense} graph as the limit of analogous equations posed on a sequence $\{\cG_N\}_{N\in\N}$ of \textit{finite} graphs with an increasing number $N$ of vertices. Each graph of the sequence is identified by the set of vertices $\cV_N=\{1,\,\dots,\,N\}$ and by a set of edges $\cE_N\subseteq\cV_N^2$. This information is encoded in the adjacency matrix $\bA_N\in\R^{N\times N}$ of $\cG_N$, whose generic entry $a^N_{ij}$ is
$$ a^N_{ij}=
    \begin{cases}
        1 & \text{if } (i,\,j)\in\cE_N \\
        0 & \text{otherwise.}
    \end{cases} $$
For the sake of simplicity, we consider \textit{undirected} graphs, which are such that if an edge exists between any two vertices $i$, $j$, i.e. $(i,\,j)\in\cE_N$, then it also serves as an edge between the vertices $j$, $i$, i.e. $(j,\,i)\in\cE_N$. As a consequence, the adjacency matrix $\bA_N$ is symmetric.

Recent literature provides applications of Boltzmann-type kinetic equations on dense graphs to problems in opinion formation~\cite{duering2024JNS} and epidemiology~\cite{ali2026PREPRINT,bondesan2026PREPRINT}. In those works, such equations are introduced heuristically and used as building blocks for more elaborated models. Here, by contrast, we focus on their rigorous justification as the limit, as $N\to\infty$, of less abstract kinetic descriptions consisting of finite systems of classical homogeneous Boltzmann-type equations coupled through the graph structure. This also reveals interesting applications of analytical techniques from graph theory to kinetic equations.

\subsection{An overview of graphon theory}
To each adjacency matrix $\bA_N\in\R^{N\times N}$, we associate a two-variable function $W_N:[0,\,1]^2\to\{0,\,1\}$, constructed as follows. First, we consider the partition of the interval $[0,\,1]\subset\R$ made by the subintervals
\begin{equation}
    I_1:=\left[0,\,\frac{1}{N}\right], \qquad I_i:=\left(\frac{i-1}{N},\,\frac{i}{N}\right], \quad i=2,\,\dots,\,N;
    \label{eq:Ii}
\end{equation}
then we set
\begin{equation}
    W_N(x,x_\ast):=\sum_{i=1}^{N}\sum_{j=1}^{N}a^N_{ij}\chi_{I_i}(x)\chi_{I_j}(x_\ast).
    \label{eq:WN}
\end{equation}
Consequently, $W_N$ is a piecewise constant function on the squares $I_i\times I_j\subset [0,\,1]^2$ taking in each of them the values $a^N_{ij}$: 
$$ W_N(x,x_\ast)=a^N_{ij} \quad \text{for} \quad (x,\,x_\ast)\in I_i\times I_j, \quad i,\,j=1,\,\dots,\,N. $$
In practice, $W_N$ maps $\bA_N$ on a pixelation of the unit square in $\R^2$. The symmetry of $\bA_N$ implies that also $W_N$ is a symmetric function, i.e. $W_N(x,x_\ast)=W_N(x_\ast,x)$ for all $x,\,x_\ast\in [0,\,1]$.

As the number $N$ of vertices of the graph grows, such pixelation gets finer and finer. The question then arises whether, for $N\to\infty$, the $W_N$'s converge to any function $W:[0,\,1]^2\to [0,\,1]$ representing the connectivity of a \textit{dense} limit graph for which neither the vertices nor the edges are described pointwise. If this is the case, the function $W$ is called a \textit{graphon}, a term suggestive of a contraction of ``\textit{graph} functi\textit{on}''.

To answer this question, an appropriate concept of convergence needs to be introduced. In graph theory, one of the most popular of such concepts is the one based on the so-called \textit{cut norm}, denoted $\norm{\cdot}{\Box}$ and defined, for every function $W\in L^\infty([0,\,1]^2)$, as
\begin{equation}
    \norm{W}{\Box}:=\sup_{U,\,V\subseteq [0,\,1]}\abs*{\iint_{U\times V}W(x,x_\ast)\,dx\,dx_\ast}.
    \label{eq:cut_norm}
\end{equation}
Other definitions are possible, see~\cite{janson2013NYJM} for details, but the one above has the merit of becoming particularly expressive when applied to $W_N$:
$$ \norm{W_N}{\Box}=\sup_{U,\,V\subseteq [0,\,1]}\abs*{\sum_{i=1}^{N}\sum_{j=1}^{N}a^N_{ij}\Leb(U\cap I_i)\Leb(V\cap I_j)}, $$
where $\Leb$ denotes the Lebesgue measure on $\R$. From here, we see that, informally speaking, the cut norm of $W_N$ counts the number of edges between pairs of vertices of $\cG_N$ in every rectangular subregion of the graph to detect the largest \textit{density of connections}.

The convergence of the sequence $\{W_N\}_{N\in\N}$ to a graphon $W$ in the cut norm means, as usual, that $\norm{W-W_N}{\Box}\to 0$ when $N\to\infty$. In particular, from~\eqref{eq:cut_norm} it is immediate to establish that $\norm{W-W_N}{\Box}\leq\norm{W-W_N}{L^1([0,\,1]^2)}$, hence that the convergence in the cut norm is weaker than the convergence in the classical $L^1$-norm on the unit square. A less straightforward property, which holds thanks to the piecewise constant form of each $W_N$ on a uniform pixelation of the unit square, is that
\begin{equation}
    \norm{W-W_N}{L^1([0,\,1]^2)}\leq\sqrt{2N}\norm{W-W_N}{\Box},
    \label{eq:L1<=cut}
\end{equation}
see~\cite[Lemma~10.7 and Remark~10.8]{janson2013NYJM}, whence we deduce that the convergence of the $W_N$'s in the cut norm may imply the convergence in the $L^1$-norm provided it is sufficiently fast.

Interestingly, the limit graphon $W$ may not take only the values $0$ and $1$ like the $W_N$'s. Its image may be, in general, a Lebesgue-non-negligible subset of the interval $[0,\,1]$, the values $W(x,x_\ast)\in (0,\,1)$ being interpreted as the probability of an edge between the points $x,\,x_\ast\in [0,\,1]$ in the dense limit graph. Applying such an interpretation back to $W_N$, one usually calls the latter a \textit{random-free $N$-step graphon}, where ``random-free'' refers to the fact that $W_N(x,x_\ast)\in\{0,\,1\}$, hence that the probability of an edge between the points $x,\,x_\ast$ is either $0$ or $1$.

The variable $x\in [0,\,1]$ is usually referred to as a \textit{latent variable}, which parametrises the pattern of connections within the communities described by the graphons $W_N$ and their limit $W$. This variable can often be endowed with a meaningful modelling interpretation in specific applications. We will return to this point in Section~\ref{sect:ex_graphons} below.

A useful quantity that can be computed from a graphon $W$ is the \textit{degree function}
$$ \cd_W(x):=\int_0^1W(x,x_\ast)\,dx_\ast, $$
which represents the \textit{mean degree} of the point $x$ in the dense limit graph. To better understand this interpretation, it is instructive to compute such quantity for $W_N$. We have:
$$ \cd_{W_N}(x)=\int_0^1W_N(x,x_\ast)\,dx_\ast=\frac{1}{N}\sum_{i=1}^{N}\sum_{j=1}^{N}a^N_{ij}\chi_{I_i}(x), $$
whence, if $\bar{\imath}\in\{1,\,\dots,\,N\}$ is such that $x\in I_{\bar{\imath}}$, we get
$$ \cd_{W_N}(x)=\frac{1}{N}\sum_{j=1}^{N}a^N_{\bar{\imath}j}=\frac{1}{N}\deg{(\bar{\imath})}. $$

Observe that $0\leq\cd_W(x)\leq 1$ for every $x\in [0,\,1]$. Moreover, associated with the degree function, we introduce the following \textit{connectivity parameters} of the graphon:
$$ \dWinf:=\inf_{x\in [0,\,1]}\cd_W(x)\in [0,\,1], \qquad \norm{\cd_W}{\infty}:=\sup_{x\in [0,\,1]}\cd_W(x)\in [0,\,1], $$
which will play a key role in the subsequent analysis.

The construction of the $W_N$'s is intrinsically tied to the labelling of the vertices of the $\cG_N$'s, which is, however, to some extent arbitrary. Changing the way in which the vertices are labelled clearly does not alter the graph $\cG_N$ or the topology of its connections. Nevertheless, it may significantly affect the adjacency matrix $\bA_N$ and consequently the corresponding random-free $N$-step graphon $W_N$. In turn, this may impact on the convergence of the sequence $\{W_N\}_{N\in\N}$ and on the structure of the limit graphon $W$. To remove this source of arbitrariness, it is useful to introduce another metric, derived from that induced by the cut norm, known as the \textit{cut distance} $\delta_\Box$, see~\cite{lovasz2012BOOK} and also~\cite[Theorem~6.9(vi)]{janson2013NYJM}:
$$ \delta_\Box(W_N,W):=\inf_{\tilde{\sigma}}\norm{W-W_N^{\tilde{\sigma}}}{\Box}, $$
where $W_N^{\tilde{\sigma}}(x,x_\ast):=W_N(\tilde{\sigma}(x),\tilde{\sigma}(x_\ast))$ and $\tilde{\sigma}:[0,\,1]\to [0,\,1]$ is a (Lebesgue) measure-preserving bijection built from a permutation $\sigma:\{1,\,\dots,\,N\}\to\{1,\,\dots,\,N\}$ in such a way that $\tilde{\sigma}(I_i)=I_{\sigma(i)}$, the $I_i$'s being the intervals defined in~\eqref{eq:Ii}. In practice, one quotients the $W_N$'s by permutations of the mesh $\{I_i\}_{i=1}^{N}$, thereby identifying as equivalent all random-free $N$-step graphons that differ only by a rearrangement of their pixels. Then, one takes the infimum of their cut-norm distances from $W$, which indeed defines a norm on the quotient space.

Notice that
\begin{align*}
    W_N^{\tilde{\sigma}}(x,x_\ast) &= \sum_{i=1}^{N}\sum_{j=1}^{N}a^N_{ij}\chi_{I_i}(\tilde{\sigma}(x))\chi_{I_j}(\tilde{\sigma}(x_\ast)) \\
    &= \sum_{i=1}^{N}\sum_{j=1}^{N}a^N_{ij}\chi_{\tilde{\sigma}^{-1}(I_i)}(x)\chi_{\tilde{\sigma}^{-1}(I_j)}(x_\ast)
\end{align*}
and that $\tilde{\sigma}^{-1}(I_i)$ is, by construction of $\tilde{\sigma}$, one of the intervals~\eqref{eq:Ii} for every $i=1,\,\dots,\,N$. Thus, also $W_N^{\tilde{\sigma}}$ is a random-free $N$-step graphon on the same pixelation as that of $W_N$. This means that~\eqref{eq:L1<=cut} applies also to the $W_N^{\tilde{\sigma}}$'s, yielding
\begin{equation}
    \inf_{\tilde{\sigma}}\norm{W-W_N^{\tilde{\sigma}}}{L^1([0,\,1]^2)}\leq\sqrt{2N}\delta_\Box(W,W_N).
    \label{eq:inf_L1<=delta_cut}
\end{equation}

\subsection{Examples of graphons}
\label{sect:ex_graphons}
\begin{figure}[!t]
\centering
\includegraphics[width=.8\linewidth]{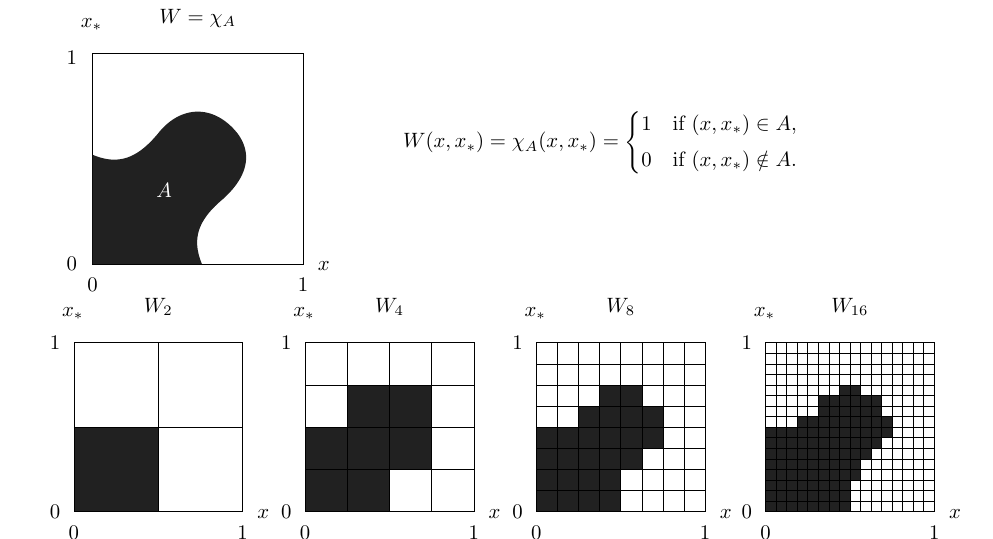}
\caption{Approximation of the graphon $W(x,x_\ast)=\chi_A(x,x_\ast)$. Black pixels correspond to $1$, white pixels correspond to $0$}
\label{fig:det_graphon}
\end{figure}

We now present examples of graphons $W:[0,\,1]^2\to [0,\,1]$ that arise as limits, in the cut distance $\delta_\Box$, of random-free $N$-step graphons $W_N:[0,\,1]^2\to\{0,\,1\}$, and discuss their potential modelling interpretations.

To begin, we consider a random-free graphon $W:[0,\,1]^2\to\{0,\,1\}$, that is, a graphon taking only the values $0$ and $1$, obtained as the limit of random-free $N$-steps graphons. Let $A\subseteq [0,\,1]^2$ be a measurable subset of the unit square with smooth boundary $\partial A$. For $(x,\,x_\ast)\in I_i\times I_j$, define
$$ W_N(x,x_\ast)=
    \begin{cases}
        1 & \text{if } (I_i\times I_j)\cap A\neq\emptyset \\
        0 & \text{otherwise},
    \end{cases}
    \qquad
    i,\,j=1,\,\dots,\,N, $$
so that $W_N$ takes the value $1$ on those cells $I_i\times I_j$ that intersect $A$, and $0$ on all remaining cells. We claim that $W_N$ converges to the random-free graphon $W(x,x_\ast)=\chi_A(x,x_\ast)$.

Indeed, $W_N$ and $W$ coincide on all cells that are either entirely contained in $A$ or entirely contained in $[0,\,1]^2\setminus A$: by construction, both take the value $1$ on the former and the value $0$ on the latter. The two graphons differ only on cells intersecting $\partial A$. The $L^1$-distance between them can therefore be estimated by the area of a tubular neighbourhood of $\partial A$, whose length is that of $\partial A$ and whose width is bounded from above by the length of the diagonal of a cell, namely $\frac{\sqrt{2}}{N}$. Consequently,
$$ \delta_\Box(W_N,W)\leq\norm{W-W_N}{\Box}\leq\norm{W-W_N}{L^1([0,\,1]^2)}\leq \frac{C}{N} $$
for some constant $C>0$, whence $\delta_\Box(W_N,W)\to 0$ as $N\to\infty$. Figure~\ref{fig:det_graphon} illustrates the convergence.

From a modelling perspective, the graphon $W(x,x_\ast)=\chi_A(x,x_\ast)$ represents a fully connected and isolated community $A$ within a larger multi-agent system. Indeed, every member of $A$ is connected to every other member, while having no connections to agents outside $A$. Depending on the structure of $A$, the latent variable $x$ may admit different interpretations. For instance, it may encode \textit{homophily} if $A$ consists predominantly of pairs $(x,\,x_\ast)$ with similar values of $x$ and $x_\ast$, as in the case $A=\{(x,\,x_\ast)\in [0,\,1]^2\,:\,\abs{x_\ast-x}\leq r\}$ for some threshold $r>0$. Alternatively, it may encode \textit{hierarchy} if $A$ consists mainly of pairs $(x,\,x_\ast)$ whose values differ substantially, as in the case $A=\{(x,\,x_\ast)\in [0,\,1]^2\,:\,\abs{x_\ast-x}\geq r\}$ for some threshold $r>0$. For $W$ to be symmetric, the set $A$ must itself be symmetric with respect to the diagonal $x_\ast=x$ of the unit square.

As a second example, let $W:[0,\,1]^2\to [0,\,1]$ be a Lipschitz continuous graphon and consider a \textit{weighted} graph $\cG_N$, whose adjacency matrix $\bA_N$ has entries
$$ a^N_{ij}=W\!\left(\frac{i}{N},\frac{j}{N}\right). $$
This provides a natural extension of the classical $0-1$ adjacency matrix of Section~\ref{sect:kin_eq.graph} and does not affect the subsequent theory. The quantity $a^N_{ij}\in [0,\,1]$ is typically interpreted as the probability that an edge exists between the vertices $i,\,j\in\cV_N$. We claim that, as $N\to\infty$, the $N$-step graphons\footnote{Note that these $W_N$ are generally \textit{not} random-free.} $W_N$ associated with this adjacency matrix as in~\eqref{eq:WN} converge to $W$ in the cut distance.

Indeed, we have:
\begin{align*}
    \norm{W-W_N}{L^1([0,\,1]^2)} &= \sum_{i=1}^{N}\sum_{j=1}^{N}\int_{I_j}\int_{I_i}\abs{W(x,x_\ast)-W_N(x,x_\ast)}\,dx\,dx_\ast \\
    &= \sum_{i=1}^{N}\sum_{j=1}^{N}\int_{I_j}\int_{I_i}\abs*{W(x,x_\ast)-a^N_{ij}}\,dx\,dx_\ast \\
    &= \sum_{i=1}^{N}\sum_{j=1}^{N}\int_{I_j}\int_{I_i}\abs*{W(x,x_\ast)-W\!\left(\frac{i}{N},\frac{j}{N}\right)}\,dx\,dx_\ast \\
    &\leq L_W\sum_{i=1}^{N}\sum_{j=1}^{N}\int_{I_j}\int_{I_i}\left(\abs*{x-\frac{i}{N}}+\abs*{x_\ast-\frac{j}{N}}\right)\,dx\,dx_\ast,
\end{align*}
where $L_W>0$ is the Lipschitz constant of $W$. Since, for $x\in I_i$ and $x_\ast\in I_j$, we have $\abs{x-\frac{i}{N}}\leq\frac{1}{N}$ and $\abs{x_\ast-\frac{j}{N}}\leq\frac{1}{N}$, and since the pixelation of the unit square consists of $N^2$ pixels, each of area $\frac{1}{N^2}$, it follows that
$$ \norm{W-W_N}{L^1([0,\,1]^2)}\leq L_W\cdot\frac{2}{N}\cdot\frac{1}{N^2}\cdot N^2=\frac{2L_W}{N}. $$
We therefore obtain the desired convergence in the $L^1$-norm and, consequently, in the cut distance as well.

\begin{figure}[!t]
\centering
\includegraphics[width=.8\linewidth]{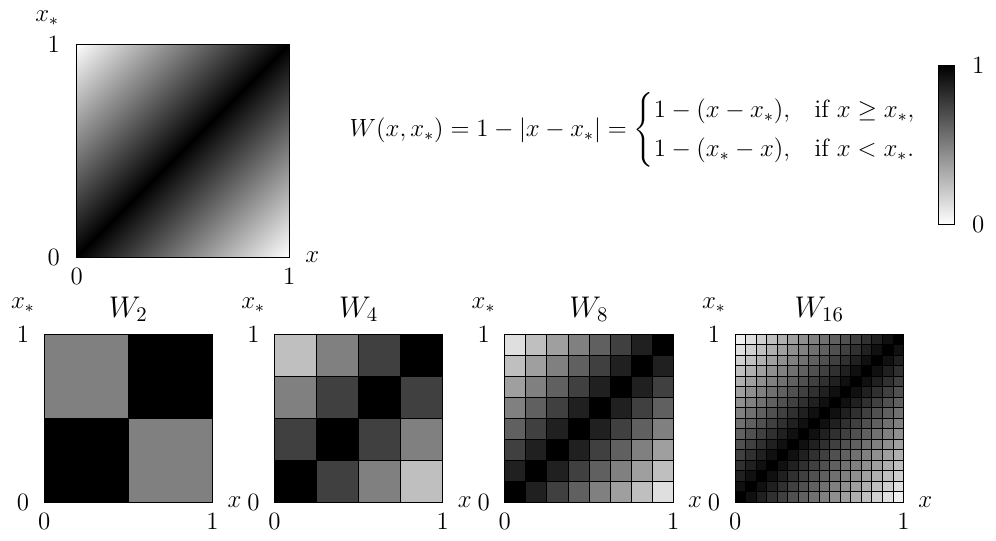}
\caption{Approximation of the graphon $W(x,x_\ast)=1-\abs{x-x_\ast}$}
\label{fig:stoch_graphon}
\end{figure}

Figure~\ref{fig:stoch_graphon} illustrates this convergence for the graphon $W(x,x_\ast)=1-\abs{x-x_\ast}$, which models a homophilic community in which agents with similar values of the latent variable are more likely to be connected. Such a graphon arises, for instance, in scientific collaboration networks, where $x$ represents an agent's expertise along a given disciplinary spectrum; in simplified geographical models, where $x$ denotes an agent's position on a line; and in preference-based models, where $x$ may represent an agent's opinion or a consumer's personal taste.

Convergence in the cut distance to a graphon $W:[0,\,1]^2\to [0,\,1]$ which takes intermediate values between $0$ and $1$ can also be achieved, in probability, starting from a \textit{random} graph, i.e. one whose adjacency matrix's entries are Bernoulli random variables:
$$ a^N_{ij}\sim\operatorname{Bernoulli}\!\left(W\!\left(\frac{i}{N},\frac{j}{N}\right)\right). $$
This means that an edge between vertices $i,\,j$ exists with probability $W(\frac{i}{N},\frac{j}{N})$. In this case, the $W_N$'s in~\eqref{eq:WN} are random variables and their realisations are random-free $N$-step graphons. If $W$ is smooth enough, for instance Lipschitz continuous, then it can be shown that $\delta_\Box(W_N,W)\to 0$ almost surely as $N\to\infty$, cf. e.g.,~\cite{lovasz2012BOOK}.

\subsection{Kinetic equations on finite graphs}
In the classical kinetic theory for multi-agent systems~\cite{loy2026RMUP,pareschi2013BOOK}, homogeneous Boltzmann-type equations modelling pairwise interactions among agents structured by a \textit{microscopic trait} $v\in\R$ are written as
\begin{equation}
    \frac{\partial f}{\partial t}(v,t)=\int_\R\ave*{\frac{1}{J}f(\pr{v},t)f(\pr{v}_\ast,t)-f(v,t)f(v_\ast,t)}\,dv_\ast,
    \label{eq:Boltz-type.strong}
\end{equation}
where $f:\R\times [0,\,+\infty)\to\R_+$ is the time-varying \textit{distribution function} of the trait and $\pr{v},\,\pr{v}_\ast\in\R$ are the \textit{pre-interaction} traits, which generate the \textit{post-interaction} traits $v,\,v_\ast\in\R$ in accordance with a prescribed transformation $(\pr{v},\,\pr{v}_\ast)\mapsto (v,\,v_\ast)$, called the \textit{interaction rule}, with Jacobian factor $J>0$. Since the interaction rule may depend on random parameters, in~\eqref{eq:Boltz-type.strong} one also takes the expectation $\ave{\cdot}$ with respect to the laws of such parameters.

More often, homogeneous Boltzmann-type equations are written in weak form by introducing an \textit{observable} $\varphi:\R\to\R$, i.e. any function of the trait $v$, which plays the role of a test function, and integrating~\eqref{eq:Boltz-type.strong} over $v$ after multiplying both sides by $\varphi$:
\begin{equation}
    \frac{d}{dt}\int_\R\varphi(v)f(v,t)\,dv=\int_\R\int_\R\ave{\varphi(v')-\varphi(v)}f(v,t)f(v_\ast,t)\,dv\,dv_\ast,
    \label{eq:Boltz-type.weak}
\end{equation}
where now $v'\in\R$ denotes the post-interaction trait obtained from the pre-interaction traits $v,\,v_\ast\in\R$ via the interaction rule. In particular, we consider the latter in the \textit{symmetric} form
\begin{equation}
    v'=\Psi(v,v_\ast,\omega), \qquad v_\ast'=\Psi(v_\ast,v,\omega),
    \label{eq:int_rules}
\end{equation}
meaning that either rule is obtained from the other by swapping $v$ and $v_\ast$. Here, $\Psi:\R\times\R\times\Omega\to\R$ is a function of the pre-interaction traits and of an auxiliary variable $\omega$ ranging in a sample space $\Omega$, which is introduced to account for the possible presence of random parameters in the interaction rule as mentioned above. Accordingly,
$$ \ave{\cdot}:=\int_\Omega(\cdot)\,dP(\omega), $$
where $P$ is a probability measure defined on the measurable space consisting of $\Omega$ together with some suitable $\sigma$-algebra of its subsets.

A tacit assumption in this setting is that interactions are ``all-to-all'', i.e. that any pair of agents sampled from the available pool of individuals may possibly interact. In~\cite{nurisso2024EJAM}, instead, agents have been regarded as the vertices of a finite graph $\cG_N=(\cV_N,\,\cE_N)$, whose edges connect those among them that may actually interact. The homogeneous Boltzmann-type equation generalising~\eqref{eq:Boltz-type.strong} within this new perspective is
\begin{equation}
    \frac{\partial f_i}{\partial t}(v,t)=\frac{1}{N}\sum_{j=1}^{N}a^N_{ij}\int_\R\ave*{\frac{1}{J}f_i(\pr{v},t)f_j(\pr{v}_\ast,t)
        -f_i(v,t)f_j(v_\ast,t)}\,dv_\ast,
    \label{eq:Boltz-type.strong.graph}
\end{equation}
for $i=1,\,\dots,\,N$, which in weak form reads
\begin{equation}
    \frac{d}{dt}\int_\R\varphi(v)f_i(v,t)\,dv=\frac{1}{N}\sum_{j=1}^{N}a^N_{ij}\int_\R\int_\R\ave{\varphi(v')-\varphi(v)}f_i(v,t)f_j(v_\ast,t)\,dv\,dv_\ast.
    \label{eq:Boltz-type.weak.graph}
\end{equation}
Here, $f_i:\R\times [0,\,+\infty)\to\R_+$ is the distribution function of the trait of the $i$-th agent and the $a^N_{ij}$'s are the entries of the adjacency matrix $\bA_N$ of $\cG_N$. In particular, each $f_i$ is taken as a probability density function (or, more generally, a probability measure), therefore it satisfies the normalisation condition
\begin{equation}
    \int_\R f_i(v,t)\,dv=1, \qquad \forall\,t\geq 0, \quad \forall\,i=1,\,\dots,\,N.
    \label{eq:fi.int}
\end{equation}

The system of equations~\eqref{eq:Boltz-type.strong.graph}, or equivalently~\eqref{eq:Boltz-type.weak.graph}, has been formally deduced in~\cite{nurisso2024EJAM} from an underlying stochastic particle system whose interaction rates are proportional to the entries of $\bA_N$. For completeness, we point out that in~\cite{nurisso2024EJAM} the more general case of a directed graph with non-symmetric adjacency matrix, together with non-symmetric interaction rules, is considered. Systems~\eqref{eq:Boltz-type.strong.graph},~\eqref{eq:Boltz-type.weak.graph} are what the equations deduced there reduce to under the assumptions of undirected graph and symmetric interaction rules.

We now rewrite~\eqref{eq:Boltz-type.weak.graph} using the random-free $N$-step graphon $W_N$. For this, we introduce
$$ f_N(x,v,t):=\sum_{i=1}^{N}f_i(v,t)\chi_{I_i}(x), $$
which defines a piecewise constant extension in $x$ of the $f_i$'s over the partition of $[0,\,1]$ given by the subintervals $I_i$ in~\eqref{eq:Ii}. We observe that such an $f_N$ is a probability distribution function with respect to the variable $v\in\R$, for each fixed $x\in [0,\,1]$. Indeed:
$$ \int_\R f_N(x,v,t)\,dv=\sum_{i=1}^{N}\left(\int_\R f_i(v,t)\,dv\right)\chi_{I_i}(x)=\chi_{[0,\,1]}(x)\equiv 1,
    \qquad \forall\,x\in [0,\,1],\,t\geq 0. $$

Next, we observe that for every $\phi:[0,\,1]\to\R$ it results
$$ \int_0^1\phi(x)f_N(x,v,t)\,dx=\sum_{i=1}^{N}f_i(v,t)\int_{I_i}\phi(x)\,dx $$
and also
$$ \int_0^1\int_0^1\phi(x)W_N(x,x_\ast)f_N(x,v,t)f_N(x_\ast,v_\ast,t)\,dx\,dx_\ast
    =\frac{1}{N}\sum_{i=1}^{N}\sum_{j=1}^{N}a^N_{ij}f_i(v,t)f_j(v_\ast,t)\int_{I_i}\phi(x)\,dx, $$
where we have used the expression~\eqref{eq:WN} of $W_N$. Bearing these two relationships in mind, we multiply both sides of~\eqref{eq:Boltz-type.weak.graph} by $\int_{I_i}\phi(x)\,dx$ and sum over $i$ to discover:
\begin{multline*}
    \frac{d}{dt}\int_0^1\int_\R\phi(x)\varphi(v)f_N(x,v,t)\,dv\,dx \\
    =\int_0^1\int_0^1\int_\R\int_\R\phi(x)W_N(x,x_\ast)\ave{\varphi(v')-\varphi(v)}f_N(x,v,t)f_N(x_\ast,v_\ast,t)\,dv\,dv_\ast\,dx\,dx_\ast,
\end{multline*}
which, owing to the arbitrariness of $\phi$, can be rewritten as
\begin{multline}
    \frac{\partial}{\partial t}\int_\R\varphi(v)f_N(x,v,t)\,dv \\
    =\int_0^1\int_\R\int_\R W_N(x,x_\ast)\ave{\varphi(v')-\varphi(v)}f_N(x,v,t)f_N(x_\ast,v_\ast,t)\,dv\,dv_\ast\,dx_\ast.
    \label{eq:Boltz-type.WN}
\end{multline}
We regard this latter equation as the continuous-in-$x$ analogue of~\eqref{eq:Boltz-type.weak.graph}.

\subsection{The dense graph limit of Boltzmann-type equations}
In this section, we perform the so-called \textit{dense graph limit} of~\eqref{eq:Boltz-type.WN}: we consider a sequence $\{\cG_N\}_{N\in\N}$ of growing graphs -- where ``growing'' means $N\to\infty$, being $N$ the cardinality of the set of vertices $\cV_N$ -- and we study the limit behaviour of~\eqref{eq:Boltz-type.WN} under the assumption that the sequence $\{W_N\}_{N\in\N}$ of random-free $N$-step graphons which encode the $\bA_N$'s converges to some graphon $W$ in the sense of the cut distance $\delta_\Box$.

\subsubsection{Linear interaction rules}
\label{sect:lin_int}
As a prototypical case, which is both amenable to detailed analytical investigation and relevant for applications, we consider \textit{linear} interaction rules featuring a function $\Psi$ in~\eqref{eq:int_rules} of the form
\begin{equation}
    \Psi(v,v_\ast,\omega)=p(\omega)v+q(\omega)v_\ast,
    \label{eq:lin_int}
\end{equation}
where $p,\,q\geq 0$, with\footnote{This is introduced to guarantee that $p,\,q$ are not simultaneously zero.} $pq>0$, are, in general, random interaction coefficients with prescribed probability laws. Rules of this type include, for instance, most one-dimensional Boltzmann-type models for Maxwell molecules available in the literature, among them the celebrated Kac model~\cite{kac1959BOOK}. More recently, Boltzmann-type sociophysical models have also been formulated by means of linear symmetric interaction rules, with agent's traits defined on the whole real line; see e.g.,~\cite{loy2020CMS}.

In the following, for brevity, we will omit the explicit dependence of $p$ and $q$ on $\omega$.

Homogeneous Boltzmann-type equations with linear interactions may be conveniently approached by means of a particular class of metrics for probability measures, known as \textit{Fourier metrics}. They are based on the Fourier transform of probability measures, which, as first noticed by Bobylev~\cite{bobylev1975DANSSSR}, confers a particularly tractable structure on the homogeneous Boltzmann equation for Maxwellian molecules.

Let $\cP(\R)$ be the space of probability measures defined on the Borel $\sigma$-algebra of $\R$. Moreover, for $s\in\N$, $s>0$, let
$$ \cP_s(\R):=\left\{\mu\in\cP(\R)\,:\,\int_\R\abs{v}^s\,d\mu(v)<+\infty\right\}. $$
Given $\mu\in\cP(\R)$, the Fourier transform of $\mu$ is the bounded and continuous function $\hat{\mu}:\R\to\mathbb{C}$,
$$ \hat{\mu}(\xi):=\int_\R e^{-i\xi v}\,d\mu(v). $$
If $\nu\in\cP(\R)$ is another probability measure, the \textit{$s$-Fourier distance} between $\mu$ and $\nu$ is
$$ d_s(\mu,\nu):=\sup_{\xi\in\R\setminus\{0\}}\frac{\abs{\hat{\nu}(\xi)-\hat{\mu}(\xi)}}{\abs{\xi}^s}. $$
Notice that $\hat{\mu}(0)=\hat{\nu}(0)=1$, therefore the quantity $\frac{\abs{\hat{\nu}(\xi)-\hat{\mu}(\xi)}}{\abs{\xi}^s}$ need not be unbounded for $\xi\to 0$. In more detail, as stated in~\cite{carrillo2007RMUP,loy2026RMUP},
\begin{lemma} \label{lemma:ds.finite}
Let $\mu,\,\nu\in\cP_s(\R)$, $s\in\N$. If $\mu,\,\nu$ have the same moments at least up to the order $s-1$, i.e. if
$$ \int_\R v^n\,d\mu(v)=\int_\R v^n\,d\nu(v), \qquad \forall\,n\in\N,\ n\leq s-1, $$
then $d_s(\mu,\nu)<+\infty$.
\end{lemma}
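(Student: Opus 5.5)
The idea is to bound the supremum defining $d_s(\mu,\nu)$ separately on two ranges of $\xi$: far from the origin, where boundedness is trivial, and near the origin, where a Taylor expansion of the Fourier transforms is needed. For $\abs{\xi}\geq 1$ the bound is immediate: since $\abs{\hat\mu(\xi)},\,\abs{\hat\nu(\xi)}\leq 1$, we have $\abs{\hat\nu(\xi)-\hat\mu(\xi)}/\abs{\xi}^s\leq 2$. Hence everything reduces to showing that the quotient stays bounded as $\xi\to 0$ with $0<\abs{\xi}<1$.

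Near the origin I would expand the exponential rather than differentiate $\hat\mu$. I would use the elementary Taylor estimate with integral remainder for $e^{-i\theta}$, $\theta\in\R$:
$$ \abs*{e^{-i\theta}-\sum_{n=0}^{s-1}\frac{(-i\theta)^n}{n!}}\leq\frac{\abs{\theta}^s}{s!}. $$
This follows by writing the remainder as $\frac{(-i)^s}{(s-1)!}\int_0^\theta(\theta-\tau)^{s-1}e^{-i\tau}\,d\tau$ and bounding $\abs{e^{-i\tau}}\leq 1$. Apply it with $\theta=\xi v$ and integrate against $\mu$ and $\nu$. All moments up to order $s$ are finite because $\mu,\,\nu\in\cP_s(\R)$, so the polynomial parts are integrable. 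This gives
$$ \hat\mu(\xi)=\sum_{n=0}^{s-1}\frac{(-i\xi)^n}{n!}\int_\R v^n\,d\mu(v)+R_\mu(\xi),\qquad \abs{R_\mu(\xi)}\leq\frac{\abs{\xi}^s}{s!}\int_\R\abs{v}^s\,d\mu(v), $$
and the same expansion holds for $\nu$.

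Subtracting the two expansions, the polynomial parts cancel exactly by the moment-matching hypothesis (for $n=0$ both moments equal $1$). What remains is only the difference of the remainders, so
$$ \frac{\abs{\hat\nu(\xi)-\hat\mu(\xi)}}{\abs{\xi}^s}\leq\frac{1}{s!}\left(\int_\R\abs{v}^s\,d\mu(v)+\int_\R\abs{v}^s\,d\nu(v)\right)<+\infty. $$
This bound holds for every $\xi\neq0$, so the split into two ranges is not even strictly necessary.

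The only point needing care is the remainder estimate: it must be uniform and use exactly the $s$-th absolute moment. Using the standard bound $\abs{e^{i\theta}-\sum_{n<s}(i\theta)^n/n!}\leq\abs{\theta}^s/s!$ handles this with no integrability issues beyond $\cP_s$. The case $s=0$, if admitted, is trivial because then $d_0\leq 2$.
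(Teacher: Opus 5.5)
Your proof is correct. The paper does not prove this lemma itself; it quotes it from \cite{carrillo2007RMUP,loy2026RMUP}. Your argument is the standard proof behind that citation: expand $\theta\mapsto e^{-i\theta}$ to order $s-1$, integrate against $\mu$ and $\nu$, cancel the polynomial parts by the moment hypothesis, and bound the remainders by the $s$-th absolute moments. It is also the order-$s$ version of the first-order expansion the paper carries out by hand in \eqref{eq:Taylor_exp.proof-1}.

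One point is in your favour. You use the integral form of the remainder, which gives $\abs{e^{-i\theta}-\sum_{n<s}(-i\theta)^n/n!}\leq\abs{\theta}^s/s!$ rigorously for complex-valued functions. The paper's ``Lagrange remainder'' identity $e^{-i\xi v}=1-ive^{-i\bar{\xi}v}\xi$ cannot hold as an equality when $\xi v\neq 0$. Indeed, $e^{-i\xi v}-1$ has modulus $2\abs{\sin(\xi v/2)}<\abs{\xi v}$, whereas the claimed remainder has modulus exactly $\abs{\xi v}$. The inequality the paper extracts from it is nonetheless true.

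Your explicit bound $\frac{1}{s!}\bigl(\int_\R\abs{v}^s\,d\mu(v)+\int_\R\abs{v}^s\,d\nu(v)\bigr)$ is also slightly more than the lemma asks for, since it quantifies the finiteness of $d_s(\mu,\nu)$.
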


In particular, from Lemma~\ref{lemma:ds.finite} with $s=1$ we deduce that the $1$-Fourier distance $d_1$ between any two probability measures $\mu,\,\nu\in\cP_1(\R)$ is finite because $\mu,\,\nu$ have clearly the same zeroth-order moment.

In the following, we study the dense graph limit of~\eqref{eq:Boltz-type.WN} with linear interaction rules~\eqref{eq:lin_int}, taking advantage of the $1$-Fourier distance. To this end, we first prove that
$$ f_N(x,\cdot,t)\in\cP_1(\R), \qquad \forall\,x\in [0,\,1],\,t>0, $$
i.e. that the solutions to~\eqref{eq:Boltz-type.WN} belong to $\cP_1(\R)$ as probability measures in $v\in\R$, keeping $x\in [0,\,1]$ and $t>0$ fixed.

\begin{lemma} \label{lemma:m1N}
Let
$$ \cm_{1,N}(x,t):=\int_\R\abs{v}f_N(x,v,t)\,dv, $$
$f_N$ being a solution to~\eqref{eq:Boltz-type.WN}-\eqref{eq:lin_int}. Then
$$ \norm{\cm_{1,N}(t)}{\infty}\leq\norm{\cm_{1,N}^0}{\infty}e^{\ave{p+q}t}, \qquad \forall\,t>0, $$
where $\cm_{1,N}^0(x)=\cm_{1,N}(x,0)$.
\end{lemma}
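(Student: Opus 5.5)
The approach is to test the weak form~\eqref{eq:Boltz-type.WN} with the observable $\varphi(v)=\abs{v}$, bound the right-hand side linearly in the first moments, and close with a Gr\"onwall-type argument on the supremum in $x$. Since $\abs{v}$ is not smooth, I would either accept it as an admissible observable (it is Lipschitz with linear growth), or first use $\varphi_\varepsilon(v)=\sqrt{v^2+\varepsilon^2}$ and let $\varepsilon\to 0$ at the end. The same estimates hold for $\varphi_\varepsilon$ because $\varphi_\varepsilon(pv+qv_\ast)\leq p\abs{v}+q\abs{v_\ast}+\varepsilon$ when $p+q\geq 1$; I would avoid relying on that condition, though, and just pass to the limit formally.

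First, insert $\varphi(v)=\abs{v}$ and the rule~\eqref{eq:lin_int} into~\eqref{eq:Boltz-type.WN}. The triangle inequality and $p,q\geq 0$ give
$$\abs{v'}-\abs{v}\leq (p-1)\abs{v}+q\abs{v_\ast}\leq p\abs{v}+q\abs{v_\ast}.$$
I deliberately drop the $-\abs{v}$ term to get the clean rate $\ave{p+q}$, since $W_N\geq 0$. The right-hand side is then bounded by
$$\int_0^1 W_N(x,x_\ast)\bigl(\ave{p}\,\cm_{1,N}(x,t)+\ave{q}\,\cm_{1,N}(x_\ast,t)\bigr)\,dx_\ast.$$
Here I have used the normalisation $\int_\R f_N(x_\ast,v_\ast,t)\,dv_\ast=1$ for the $p$-term and $\int_\R f_N(x,v,t)\,dv=1$ for the $q$-term. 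Since $0\leq W_N\leq 1$, this is at most $\ave{p+q}\,\norm{\cm_{1,N}(t)}{\infty}$.

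This yields $\partial_t\cm_{1,N}(x,t)\leq\ave{p+q}\norm{\cm_{1,N}(t)}{\infty}$ for every $x$. Because $f_N$ is piecewise constant in $x$, the supremum is really a maximum over the finitely many indices $i=1,\dots,N$. Integrating in time gives, for every $x$,
$$\cm_{1,N}(x,t)\leq\norm{\cm_{1,N}^0}{\infty}+\ave{p+q}\int_0^t\norm{\cm_{1,N}(s)}{\infty}\,ds.$$
Taking the supremum over $x$ and applying Gr\"onwall's lemma to $t\mapsto\norm{\cm_{1,N}(t)}{\infty}$ gives the claim.

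The main obstacle is justification. One must ensure $\cm_{1,N}(x,t)$ is finite and absolutely continuous in $t$, so that the weak form can be used with an unbounded test function. If $\cm_{1,N}^0$ is unbounded there is nothing to prove. Otherwise, I would first work with the truncated observables $\min(\abs{v},R)$ or $\varphi_\varepsilon$, for which the same pointwise bound holds up to constants independent of $R$. Then the integral Gr\"onwall inequality gives a bound uniform in $R$, and monotone convergence as $R\to\infty$ finishes the proof. The finite dimensionality in $x$ (only $N$ distinct $f_i$) means no measurability issues arise with the supremum.
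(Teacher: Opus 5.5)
Your proof is correct, and it has the same skeleton as the paper's: test with $\abs{v}$, bound linearly in the first moments, and apply Gr\"onwall to the supremum in $x$. The difference is in one step, and it changes what the proof delivers.

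The paper keeps the loss term and writes
$\partial_t\cm_{1,N}(x,t)\leq\ave{p+q}\cd_{W_N}(x)\norm{\cm_{1,N}(t)}{\infty}-\cd_{W_N}(x)\cm_{1,N}(x,t)$.
It then multiplies by the integrating factor $e^{\cd_{W_N}(x)t}$ and replaces $\cd_{W_N}(x)$ by $\dWNinf$ in the exponentials. This gives the sharper bound $\norm{\cm_{1,N}(t)}{\infty}\leq\norm{\cm_{1,N}^0}{\infty}e^{(\ave{p+q}\norm{\cd_{W_N}}{\infty}-\dWNinf)t}$. Only in the last line is this weakened to $e^{\ave{p+q}t}$.

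You drop $-\abs{v}$ and use $W_N\leq 1$ at once. A plain integral Gr\"onwall inequality then suffices, with no integrating factor and no comparison with $\dWNinf$. For the lemma as stated, your route is simpler.

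What it gives up is the intermediate rate. The paper later applies this lemma to $f$ in~\eqref{eq:m1} to conclude that $\cm_1\to 0$ uniformly when $\ave{p+q}<\dWinf/\norm{\cd_W}{\infty}$, and $e^{\ave{p+q}t}$ cannot show that decay. The same integrating-factor device is also the template reused in Theorems~\ref{theo:dense_graph.linear} and~\ref{theo:ds}.

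One small caveat on your justification sketch concerns the bounded truncation $\min(\abs{v},R)$. With it, the growth constant is in general worse than $\ave{p+q}$. For example, take $p=q=1/4$, $v=0$, $v_\ast=8R$. Then $\min(\abs{v'},R)-\min(\abs{v},R)=R=\min(\abs{v_\ast},R)$, which forces a coefficient of at least $1>\ave{p+q}$ on the $v_\ast$-moment. So the $R$-uniform bound plus monotone convergence only shows that $\cm_{1,N}$ is finite. You then need to rerun the untruncated argument to get the stated rate.
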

\begin{proof}
We take $\varphi(v)=\abs{v}$ in~\eqref{eq:Boltz-type.WN} to deduce that $\cm_{1,N}$ satisfies
\begin{align*}
    \frac{\partial\cm_{1,N}}{\partial t}(x,t) &=
        \int_0^1\int_\R\int_\R W_N(x,x_\ast)\ave{\abs{v'}-\abs{v}}f_N(x,v,t)f_N(x_\ast,v_\ast,t)\,dv\,dv_\ast\,dx_\ast \\
    &\leq \int_0^1W_N(x,x_\ast)\bigl(\ave{p-1}\cm_{1,N}(x,t)+\ave{q}\cm_{1,N}(x_\ast,t)\bigr)\,dx_\ast \\
    &\leq \ave{p+q}\cd_{W_N}(x)\norm{\cm_{1,N}(t)}{\infty}-\cd_{W_N}(x)\cm_{1,N}(x,t).
\intertext{Moving the term $-\cd_{W_N}(x)\cm_{1,N}(x,t)$ to the left-hand side, multiplying both sides by $e^{\cd_{W_N}(x)t}$, and integrating then in time over $[0,\,t]$, $t>0$, yields}
    e^{\cd_{W_N}(x)t}\cm_{1,N}(x,t) &\leq \norm{\cm_{1,N}^0}{\infty}
        +\ave{p+q}\norm{\cd_{W_N}}{\infty}\int_0^te^{\cd_{W_N}(x)\tau}\norm{\cm_{1,N}(\tau)}{\infty}\,d\tau.
\end{align*}
Now, multiplying both sides by $e^{-\cd_{W_N}(x)t}$ we observe that, on the right-hand side, we can take advantage of the bounds $e^{-\cd_{W_N}(x)t}\leq e^{-\dWNinf t}$ for every $t\geq 0$ and $e^{\cd_{W_N}(x)(\tau-t)}\leq e^{\dWNinf(\tau-t)}$ for every $\tau\in [0,\,t]$. Thus,
$$ \cm_{1,N}(x,t)\leq\norm{\cm_{1,N}^0}{\infty}e^{-\dWNinf t}
    +\ave{p+q}\norm{\cd_{W_N}}{\infty}\int_0^te^{\dWNinf(\tau-t)}\norm{\cm_{1,N}(\tau)}{\infty}\,d\tau, $$
and consequently
$$ e^{\dWNinf t}\norm{\cm_{1,N}(t)}{\infty}\leq\norm{\cm_{1,N}^0}{\infty}
    +\ave{p+q}\norm{\cd_{W_N}}{\infty}\int_0^te^{\dWNinf\tau}\norm{\cm_{1,N}(\tau)}{\infty}\,d\tau. $$
At this point, applying Gr\"{o}nwall's inequality to the function $e^{\dWNinf t}\norm{\cm_{1,N}(t)}{\infty}$ yields
$$ \norm{\cm_{1,N}(t)}{\infty}\leq\norm{\cm_{1,N}^0}{\infty}e^{\left(\ave{p+q}\norm{\cd_{W_N}}{\infty}-\dWNinf\right)t}, $$
whence the thesis follows using $\dWNinf\geq 0$ and $\norm{\cd_{W_N}}{\infty}\leq 1$.
\end{proof}

Lemma~\ref{lemma:m1N} gives the finiteness of $\cm_{1,N}(x,t)$, thereby implying that $f_N(x,\cdot,t)\in\cP_1(\R)$, for every $x\in [0,\,1]$ and every $t>0$ provided the same holds at $t=0$.

We are now in a position to prove the dense graph limit of~\eqref{eq:Boltz-type.WN},~\eqref{eq:lin_int}.
\begin{theorem}[Dense graph limit with linear interactions] \label{theo:dense_graph.linear}
Let $f_N^0(x,v):=f_N(x,v,0)$ and assume that the sequence $\{f_N^0(x,\cdot)\}_{N\in\N}\subset\cP_1(\R)$ satisfies the following properties:
\begin{enumerate}[label=(\roman*)]
\item \label{ass:m10N.unif_bound} the $\cm_{1,N}^0$'s are uniformly bounded with respect to $N$, i.e. there exists a constant $c>0$ such that $\norm{\cm_{1,N}^0}{\infty}\leq c$ for every $N\in\N$;
\item \label{ass:conv.f_N^0->f^0.d1} $\{f_N^0(x,\cdot)\}_{N\in\N}$ converges, for $N\to\infty$, to some $f^0=f^0(x,\cdot)\in\cP_1(\R)$ in the $1$-Fourier distance, i.e.
$$ \lim_{N\to\infty}d_1(f_N^0(x),f^0(x))=0, \qquad \text{for a.e. } x\in [0,\,1]. $$
\end{enumerate}
Assume moreover that the sequence $\{W_N\}_{N\in\N}$ converges, for $N\to\infty$, to some $W\in L^\infty([0,\,1]^2)$ in the cut distance with a rate of convergence higher than $N^{-1/2}$:
$$ \delta_\Box(W_N,W)=o\!\left(\frac{1}{\sqrt{N}}\right) \quad \text{as } N\to\infty. $$
Then the solutions of~\eqref{eq:Boltz-type.WN},~\eqref{eq:lin_int} converge, in the $1$-Fourier distance, to the solutions of the following limit equation:
\begin{equation}
    \frac{\partial}{\partial t}\int_\R\varphi(v)f(x,v,t)\,dv
        =\int_0^1\int_\R\int_\R W(x,x_\ast)\ave{\varphi(v')-\varphi(v)}f(x,v,t)f(x_\ast,v_\ast,t)\,dv\,dv_\ast\,dx_\ast
    \label{eq:Boltz-type.W}
\end{equation}
emanating from $f^0$, where $\varphi$ is an arbitrary observable and $v'=pv+qv_\ast$ as in~\eqref{eq:int_rules},~\eqref{eq:lin_int}.
\end{theorem}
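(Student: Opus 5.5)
The plan is to pass to Fourier variables. There the Boltzmann-type operator with linear rules becomes a bilinear equation for the characteristic functions. Then I will estimate the weighted $d_1$ distance between $f_N$ and $f$ through a Gr\"onwall argument. The graphon mismatch will be controlled in $L^1$ using~\eqref{eq:inf_L1<=delta_cut}.

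\textbf{Step 1: Fourier form of the equations.} Taking $\varphi(v)=e^{-i\xi v}$ in~\eqref{eq:Boltz-type.WN} and~\eqref{eq:Boltz-type.W} with $v'=pv+qv_\ast$ gives
\begin{equation*}
    \partial_t\hat f_N(x,\xi,t)=\int_0^1W_N(x,x_\ast)\bigl(\ave{\hat f_N(x,p\xi,t)\hat f_N(x_\ast,q\xi,t)}-\hat f_N(x,\xi,t)\bigr)\,dx_\ast,
\end{equation*}
and the analogous equation for $\hat f$ with $W$. Since $\delta_\Box$ is invariant under relabelling, I will fix for each $N$ a permutation $\tilde\sigma_N$ that realises the infimum in~\eqref{eq:inf_L1<=delta_cut} up to a factor $2$, and relabel $f_N$ accordingly. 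Relabelling only permutes the intervals $I_i$, so the relabelled $f_N$ solves the same equation with $W_N^{\tilde\sigma_N}$. I will keep calling it $W_N$. We then have $\norm{W-W_N}{L^1}\le 2\sqrt{2N}\,\delta_\Box(W_N,W)=o(1)$ by the rate hypothesis. This is exactly why the rate $o(N^{-1/2})$ is assumed. I must also check that hypothesis~\ref{ass:conv.f_N^0->f^0.d1} is compatible with this relabelling: either the initial data are assumed aligned with the optimal labelling, or the convergence is read as holding for the relabelled data. I will remark on this.

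\textbf{Step 2: Gr\"onwall estimate.} Set $h(x,\xi,t):=\abs{\hat f_N-\hat f}/\abs{\xi}$ and $D(t):=\int_0^1\sup_{\xi\neq0}h(x,\xi,t)\,dx$. Subtracting the two equations, I split the right-hand side into a term with $W_N$ acting on differences and a term with $W_N-W$ acting on $\hat f$. For the first term I use $\abs{\hat\mu}\le1$ and the telescoping bound
\begin{equation*}
    \abs{ab-cd}\le\abs{a-c}+\abs{b-d}\quad\text{for }\abs{a},\abs{b},\abs{c},\abs{d}\le1.
\end{equation*}
Dividing by $\abs{\xi}$ then gives contributions $\ave{p}\sup h(x)+\ave{q}\sup h(x_\ast)$. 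For the loss term I use the same integrating-factor trick with $\cd_{W_N}$ as in Lemma~\ref{lemma:m1N}, or simply bound it by $\sup h(x)$.

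The second term is the delicate one. It reads
\begin{equation*}
    \int_0^1(W_N-W)(x,x_\ast)\bigl(\ave{\hat f(x,p\xi)\hat f(x_\ast,q\xi)}-\hat f(x,\xi)\bigr)\,dx_\ast .
\end{equation*}
Since $\hat\mu(0)=1$ and $\abs{\hat\mu(\eta)-1}\le\abs{\eta}\,\int\abs{v}\,d\mu$, the bracket divided by $\abs{\xi}$ is bounded by $C(1+\ave{p+q})\sup_x\cm_1(x,t)$. Here $\cm_1$ is the first absolute moment of $f$, which is bounded like $\cm_{1,N}$ in Lemma~\ref{lemma:m1N} using hypothesis~\ref{ass:m10N.unif_bound} and its limit. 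Integrating in $x$, this term is at most $C e^{\ave{p+q}t}\norm{W_N-W}{L^1}$. Working in integrated form and taking suprema in $\xi$ carefully, I expect
\begin{equation*}
    D(t)\le D(0)+C_T\norm{W_N-W}{L^1}+\int_0^t(1+\ave{p+q})D(\tau)\,d\tau .
\end{equation*}
Gr\"onwall then yields $D(t)\le\bigl(D(0)+C_T\norm{W_N-W}{L^1}\bigr)e^{(1+\ave{p+q})t}$.

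\textbf{Step 3: Conclusion.} It remains to show $D(0)\to0$. Each $d_1(f_N^0(x),f^0(x))\to0$ a.e., and these distances are dominated by $\cm_{1,N}^0+\cm_1^0\le 2c$ because $\abs{\hat\mu-\hat\nu}\le\abs{\xi}(\cm_1(\mu)+\cm_1(\nu))$. Dominated convergence therefore gives $D(0)\to0$. Hence $D(t)\to0$ on compact time intervals, and along a subsequence $d_1(f_N(x,t),f(x,t))\to0$ for a.e. $x$.

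\textbf{Main obstacles.} I expect two. The first is to make the $\sup_\xi$ commute with the time integration and the $x_\ast$-integral. I would handle this by working with the pointwise-in-$\xi$ inequality first and taking the supremum only after integrating in $x$. The second is the permutation and labelling issue in Step 1. Existence and well-posedness of $f$ for~\eqref{eq:Boltz-type.W} I would obtain by the same contraction argument in this weighted $d_1$ metric.
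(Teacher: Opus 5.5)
Your argument follows the paper's proof in all essentials. Both use the Fourier transform with $\varphi(v)=e^{-i\xi v}$, a splitting into $d_1$-difference terms plus a graphon-mismatch term bounded by first moments times $\norm{W-W_N}{L^1([0,\,1]^2)}$, and Gr\"onwall for $\int_0^1 d_1\,dx$. Both then close with \eqref{eq:inf_L1<=delta_cut}, the $o(N^{-1/2})$ rate, and dominated convergence for the initial term.

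The minor differences are harmless. You put all differences under $W_N$, so your mismatch term contains only $\hat f$; the paper's contains $\hat f_N(x,p\xi,t)\hat f(x_\ast,q\xi,t)-\hat f(x,\xi,t)$ and needs moments of both. Bounding the loss term crudely, instead of using the factor $e^{\cd_{W_N}(x)t}$, only worsens the exponent. Your claim that $\cm_1^0\leq c$ is also correct: $d_1$-convergence gives weak convergence, and $\int\abs{v}\,d\mu$ is weakly lower semicontinuous. The paper instead assumes $\norm{\cm_1^0}{\infty}<+\infty$ separately.

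The real departure is the permutation step, and there your version is the correct one. The paper asserts that its estimate holds unchanged with $W_N^{\tilde\sigma}$ in place of $W_N$ while keeping $f_N$. But $f_N$ solves the equation with $W_N$. Only $f_N(\tilde\sigma(x),v,t)$ solves it with $W_N^{\tilde\sigma}$, and its initial datum need not be $d_1$-close to $f^0$.

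Your caveat is therefore a necessary hypothesis, not a formality. Here is a counterexample to the statement without it:
\begin{itemize}
\item take $N\in 4\N$ and two cliques, one on the odd labels and one on the even labels;
\item take $W=\chi_{[0,1/2]^2}+\chi_{(1/2,1]^2}$, so that $\delta_\Box(W_N,W)\leq 1/N$;
\item take $p=q=\frac12$;
\item take $f_N^0=f^0$ equal to $\delta_1$ on $[0,\frac12]$ and to $\delta_{-1}$ on $(\frac12,1]$.
\end{itemize}
All hypotheses hold. Each clique has zero total mean, so the finite-graph means are $\pm e^{-t/4}$. Meanwhile $f$ is stationary, so $d_1(f_N(x,t),f(x,t))\geq 1-e^{-t/4}$ for every $N$. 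What can actually be proved is convergence up to relabelling, exactly as you set it up.

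One slip in your ``main obstacles'' paragraph concerns the order of the supremum over $\xi$. It must be taken for each fixed $x$ after the time integration, which is possible because the right-hand side is already uniform in $\xi$, and before integrating in $x$. Taking it after the $x$-integration only controls $\sup_{\xi}\int_0^1 h\,dx\leq D(t)$. That quantity does not close the Gr\"onwall loop, because of the rescaled frequencies $p\xi$ and $q\xi$. Your Step~2 in fact already proceeds in the right order.
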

\begin{proof}
To begin with, we observe that, letting $\varphi(v)=1$ in~\eqref{eq:Boltz-type.W}, we have
$$ \frac{\partial}{\partial t}\int_\R f(x,v,t)\,dv=0, $$
whence we deduce that also $f$ is a probability distribution function with respect to the variable $v$, for each fixed $x\in [0,\,1]$ and $t>0$, as it is so at the initial time $t=0$ by assumption. Moreover,
since the limit equation~\eqref{eq:Boltz-type.W} has the same structure as that of~\eqref{eq:Boltz-type.WN}, owing to Lemma~\ref{lemma:m1N} we deduce that $f(x,\cdot,t)\in\cP_1(\R)$ for every $x\in [0,\,1]$ and every $t>0$ if $\norm{\cm_1^0}{\infty}<+\infty$.

We will show that, in the stated assumptions, $f_N$ converges to $f$ by proving, in particular, that $d_1(f_N(x,t),f(x,t))\to 0$ when $N\to\infty$ for a.e. $x\in [0,\,1]$ and every $t>0$.

For this, it is convenient to Fourier-transform both~\eqref{eq:Boltz-type.WN} and~\eqref{eq:Boltz-type.W} by choosing $\varphi(v)=e^{-i\xi v}$. If
$$ \hat{f}_N(x,\xi,t):=\int_\R f_N(x,v,t)e^{-i\xi v}\,dv, \qquad \hat{f}(x,\xi,t):=\int_\R f(x,v,t)e^{-i\xi v}\,dv $$
denote the $v$-Fourier transforms of $f_N,\,f$, respectively, we get:
\begin{equation}
    \frac{\partial\hat{f}_N}{\partial t}(x,\xi,t)
        =\int_0^1W_N(x,x_\ast)\ave{\hat{f}_N(x,p\xi,t)\hat{f}_N(x_\ast,q\xi,t)}\,dx_\ast-\cd_{W_N}(x)\hat{f}_N(x,\xi,t),
    \label{eq:Boltz-type.WN.Fourier}
\end{equation}
and likewise
\begin{equation}
    \frac{\partial\hat{f}}{\partial t}(x,\xi,t)
        =\int_0^1W(x,x_\ast)\ave{\hat{f}(x,p\xi,t)\hat{f}(x_\ast,q\xi,t)}\,dx_\ast-\cd_W(x)\hat{f}(x,\xi,t).
    \label{eq:Boltz-type.W.Fourier}
\end{equation}

Subtracting~\eqref{eq:Boltz-type.WN.Fourier} from~\eqref{eq:Boltz-type.W.Fourier}, dividing by $\abs{\xi}$, and rearranging the terms we obtain:
\begin{align*}
    \frac{\partial}{\partial t}&\frac{\hat{f}(x,\xi,t)-\hat{f}_N(x,\xi,t)}{\abs{\xi}} \\
    &= \int_0^1W(x,x_\ast)\ave*{\frac{\hat{f}(x,p\xi,t)-\hat{f}_N(x,p\xi,t)}{\abs{\xi}}\hat{f}(x_\ast,q\xi,t)}\,dx_\ast \\
    &\phantom{=} +\int_0^1W_N(x,x_\ast)\ave*{\hat{f}_N(x,p\xi,t)\frac{\hat{f}(x_\ast,q\xi,t)-\hat{f}_N(x_\ast,q\xi,t)}{\abs{\xi}}}\,dx_\ast \\
    &\phantom{=} +\int_0^1\left(W(x,x_\ast)-W_N(x,x_\ast)\right)
        \ave*{\frac{\hat{f}_N(x,p\xi,t)\hat{f}(x_\ast,q\xi,t)-\hat{f}(x,\xi,t)}{\abs{\xi}}}\,dx_\ast \\
    &\phantom{=} -\cd_{W_N}(x)\frac{\hat{f}(x,\xi,t)-\hat{f}_N(x,\xi,t)}{\abs{\xi}}.
\end{align*}
Now, moving the last term on the right-hand to the left-hand side and multiplying both sides by $e^{\cd_{W_N}(x)t}$ we deduce:
\begin{align*}
    \frac{\partial}{\partial t} &\left(e^{\cd_{W_N}(x)t}\frac{\hat{f}(x,\xi,t)-\hat{f}_N(x,\xi,t)}{\abs{\xi}}\right) \\
    &= e^{\cd_{W_N}(x)t}\int_0^1W(x,x_\ast)\ave*{p\frac{\hat{f}(x,p\xi,t)-\hat{f}_N(x,p\xi,t)}{\abs{p\xi}}\hat{f}(x_\ast,q\xi,t)}\,dx_\ast \\
    &\phantom{=} +e^{\cd_{W_N}(x)t}\int_0^1W_N(x,x_\ast)
        \ave*{q\hat{f}_N(x,p\xi,t)\frac{\hat{f}(x_\ast,q\xi,t)-\hat{f}_N(x_\ast,q\xi,t)}{\abs{q\xi}}}\,dx_\ast \\
    &\phantom{=} +e^{\cd_{W_N}(x)t}\int_0^1\left(W(x,x_\ast)-W_N(x,x_\ast)\right)
         \ave*{\frac{\hat{f}_N(x,p\xi,t)\hat{f}(x_\ast,q\xi,t)-\hat{f}(x,\xi,t)}{\abs{\xi}}}\,dx_\ast.
\end{align*}
From here, taking the absolute value of both sides and recalling that $\abs{\partial_t(\cdot)}\geq\partial_t\abs{\cdot}$ and also that $\abs{\hat{f}},\,\abs{\hat{f}_N}\leq 1$ (as a general property of the Fourier transform of probability measures), we discover:
\begin{align}
    \begin{aligned}[b]
    \frac{\partial}{\partial t} &\left(e^{\cd_{W_N}(x)t}\frac{\abs{\hat{f}(x,\xi,t)-\hat{f}_N(x,\xi,t)}}{\abs{\xi}}\right) \\
    &\leq e^{\cd_{W_N}(x)t}\Biggl[
        \norm{\cd_W}{\infty}\ave{p}d_1(f_N(x,t),f(x,t)) \\
    &\phantom{\leq} +\ave{q}\int_0^1W_N(x,x_\ast)d_1(f_N(x_\ast,t),f(x_\ast,t))\,dx_\ast \\
    &\phantom{\leq} +\int_0^1\abs{W(x,x_\ast)-W_N(x,x_\ast)}
        \ave*{\frac{\abs{\hat{f}_N(x,p\xi,t)\hat{f}(x_\ast,q\xi,t)-\hat{f}(x,\xi,t)}}{\abs{\xi}}}\,dx_\ast\Biggr].
    \end{aligned}
    \label{eq:dense_graph_limit.proof}
\end{align}

We now turn to the last term on the right-hand side of~\eqref{eq:dense_graph_limit.proof}. Write
$$ \hat{f}_N(x,p\xi,t)\hat{f}(x_\ast,q\xi,t)-\hat{f}(x,\xi,t)
    =\hat{f}_N(x,p\xi,t)\left(\hat{f}(x_\ast,q\xi,t)-1\right)+\hat{f}_N(x,p\xi,t)-\hat{f}(x,\xi,t) $$
to see that
$$ \frac{\abs{\hat{f}_N(x,p\xi,t)\hat{f}(x_\ast,q\xi,t)-\hat{f}(x,\xi,t)}}{\abs{\xi}}
    \leq\frac{\abs{\hat{f}(x_\ast,q\xi,t)-1}}{\abs{\xi}}+
        \frac{\abs{\hat{f}_N(x,p\xi,t)-\hat{f}(x,\xi,t)}}{\abs{\xi}}. $$
By performing a first-order Taylor expansion of the map $\xi\mapsto e^{-i\xi v}$ about $\xi=0$ with Lagrange remainder, we obtain $e^{-i\xi v}=1-ive^{-i\bar{\xi}v}\xi$ for some $\bar{\xi}=\theta\xi$ with $\theta\in [0,\,1]$. Hence,
\begin{align}
    \begin{aligned}[b]
    \frac{\abs{\hat{f}(x_\ast,q\xi,t)-1}}{\abs{\xi}} &= \frac{1}{\abs{\xi}}\abs*{\int_\R f(x_\ast,v,t)e^{-iq\xi v}\,dv-1} \\
    &= q\abs*{\int_\R vf(x_\ast,v,t)e^{-iq\bar{\xi}v}\,dv}\leq q\cm_1(x_\ast,t)\leq q\norm{\cm_1(t)}{\infty}
    \end{aligned}
    \label{eq:Taylor_exp.proof-1}
\end{align}
and likewise
\begin{align}
    \begin{aligned}[b]
    \frac{\abs{\hat{f}_N(x,p\xi,t)-\hat{f}(x,\xi,t)}}{\abs{\xi}} &=
        \frac{1}{\abs{\xi}}\abs*{\int_\R f_N(x,v,t)e^{-ip\xi v}\,dv-\int_\R f(x,v,t)e^{-i\xi v}\,dv} \\
    &= \abs*{p\int_\R vf_N(x,v,t)e^{-ip\tilde{\xi}v}\,dv-\int_\R vf(x,v,t)e^{-i\bar{\xi}v}\,dv} \\
    &\leq p\cm_{1,N}(x,t)+\cm_1(x,t) \\
    &\leq p\norm{\cm_{1,N}(t)}{\infty}+\norm{\cm_1(t)}{\infty}.
    \end{aligned}
    \label{eq:Taylor_exp.proof-2}
\end{align}
Owing to assumption~\ref{ass:m10N.unif_bound} combined with Lemma~\ref{lemma:m1N} applied to both $f_N$ and $f$, we conclude that there exists $C>0$ such that
$$ \max\{\norm{\cm_{1,N}(t)}{\infty},\,\norm{\cm_1(t)}{\infty}\}\leq Ce^{\ave{p+q}t}, $$
with $C:=\max\{c,\,\norm{\cm_1^0}{\infty}\}$. Therefore,
$$ \ave*{\frac{\abs{\hat{f}_N(x,p\xi,t)\hat{f}(x_\ast,q\xi,t)-\hat{f}(x,\xi,t)}}{\abs{\xi}}}
    \leq C(\ave{p+q}+1)e^{\ave{p+q}t}. $$

Using this, we continue the estimate~\eqref{eq:dense_graph_limit.proof} by integrating both sides over $[0,\,t]$, with $t>0$, and then taking the supremum over $\xi\in\R\setminus\{0\}$ on the left-hand side:
\begin{align*}
    e^{\cd_{W_N}(x)t}d_1(f_N(x,t),f(x,t)) &\leq d_1(f_N^0(x),f^0(x))+\int_0^te^{\cd_{W_N}(x)\tau}\Biggl[
        \norm{\cd_W}{\infty}\ave{p}d_1(f_N(x,\tau),f(x,\tau)) \\
    &\phantom{\leq} +\ave{q}\int_0^1W_N(x,x_\ast)d_1(f_N(x_\ast,\tau),f(x_\ast,\tau))\,dx_\ast \\
    &\phantom{\leq} +C(\ave{p+q}+1)e^{\ave{p+q}\tau}\int_0^1\abs{W(x,x_\ast)-W_N(x,x_\ast)}\,dx_\ast\Biggr]\,d\tau.
\end{align*}
By the same argument as in the proof of Lemma~\ref{lemma:m1N}, we replace $e^{\cd_W(x)t}$ on both sides by $e^{\dWNinf t}$, and then integrate with respect to $x$ over $[0,\,1]$ to deduce:
\begin{align*}
    e^{\dWNinf t}\int_0^1 &d_1(f_N(x,t),f(x,t))\,dx \\ 
    &\leq \int_0^1d_1(f_N^0(x),f^0(x))\,dx \\
    &\phantom{\leq} +C\frac{\ave{p+q}+1}{\ave{p+q}+\dWNinf}\left(e^{\left(\ave{p+q}+\dWNinf\right)t}-1\right)\norm{W-W_N}{L^1([0,\,1]^2)} \\
    &\phantom{\leq} +\ave{p\norm{\cd_W}{\infty}+q\norm{\cd_{W_N}}{\infty}}\int_0^te^{\dWNinf\tau}\int_0^1d_1(f_N(x,\tau),f(x,\tau))\,dx\,d\tau.
\end{align*}
In particular, we observe that, owing to the symmetry of $W_N$, in the term proportional to $\ave{q}$ it results $\int_0^1W_N(x,x_\ast)\,dx=\cd_{W_N}(x_\ast)$.

Applying Gr\"{o}nwall's inequality to the function $t\mapsto e^{\dWNinf t}\int_0^1d_1(f_N(x,t),f(x,t))\,dx$ gives
\begin{multline}
    \int_0^1d_1(f_N(x,t),f(x,t))\,dx \\
    \leq\left(\int_0^1d_1(f_N^0(x),f^0(x))\,dx+\psi(t)\norm{W-W_N}{L^1([0,\,1]^2)}\right)e^{\left(\ave{p\norm{\cd_W}{\infty}+q\norm{\cd_{W_N}}{\infty}}
        -\dWNinf\right)t},
    \label{eq:W-WN.no.sigma}
\end{multline}
where
$$ \psi(t):=C\frac{\ave{p+q}+1}{\ave{p+q}}\left(e^{\left(\ave{p+q}+1\right)t}-1\right)\geq
    C\frac{\ave{p+q}+1}{\ave{p+q}+\dWNinf}\left(e^{\left(\ave{p+q}+\dWNinf\right)t}-1\right). $$
We point out that $\ave{p+q}>0$ since, by assumption, $p,\,q$ are both non-negative and not simultaneously zero.

If $W_N$ is replaced by any of its ``permutations'' $W_N^{\tilde{\sigma}}$, the estimate~\eqref{eq:W-WN.no.sigma} still holds unchanged:
\begin{multline*}
    \int_0^1d_1(f_N(x,t),f(x,t))\,dx \\
    \leq\left(\int_0^1d_1(f_N^0(x),f^0(x))\,dx+\psi(t)\norm{W-W_N^{\tilde{\sigma}}}{L^1([0,\,1]^2)}\right)
        e^{\left(\ave{p\norm{\cd_W}{\infty}+q\norm{\cd_{W_N}}{\infty}}-\dWNinf\right)t},
\end{multline*}
therefore it holds also for the infimum of the right-hand side taken over all $\tilde{\sigma}$. Invoking~\eqref{eq:inf_L1<=delta_cut}, we conclude then
$$ \int_0^1d_1(f_N(x,t),f(x,t))\,dx\leq\left(\int_0^1d_1(f_N^0(x),f^0(x))\,dx+\psi(t)\sqrt{2N}\delta_\Box(W_N,W)\right)e^{\ave{p+q}t}, $$
which, in the stated assumptions, implies
$$ \lim_{N\to\infty}\int_0^1d_1(f_N(x,t),f(x,t))\,dx=0, \qquad \forall\,t>0. $$
In particular, we note that from assumptions~\ref{ass:m10N.unif_bound},~\ref{ass:conv.f_N^0->f^0.d1} and Lemma~\ref{lemma:m1N} it follows, by dominated convergence,
$$ \lim_{N\to\infty}\int_0^1d_1(f_N^0(x),f^0(x))\,dx=\int_0^1\lim_{N\to\infty}d_1(f_N^0(x),f^0(x))\,dx=0, $$
since $d_1(f_N^0(x),f^0(x))\leq\norm{\cm_{1,N}^0}{\infty}+\norm{\cm_1^0}{\infty}\leq c+\norm{\cm_1^0}{\infty}$ by an argument analogous to that underlying~\eqref{eq:Taylor_exp.proof-1},~\eqref{eq:Taylor_exp.proof-2}.

From this, we conclude that, up to subsequences, $\lim_{N\to\infty}d_1(f_N(x,t),f(x,t))=0$ for a.e. $x\in [0,\,1]$ and every $t>0$, which completes the proof.
\end{proof}

\subsubsection{More general interaction rules}
In this section, we extend the dense graph limit result of Theorem~\ref{theo:dense_graph.linear} to interaction rules~\eqref{eq:int_rules} that are not necessarily linear, unlike those in~\eqref{eq:lin_int}. To this end, we first recall another notion of distance between probability measures, which we use to assess the convergence of $f_N$ to $f$, since Fourier methods are not well suited to the analysis of Boltzmann-type kinetic equations with non-linear interactions.

The \textit{$1$-Wasserstein distance} $\cW_1$ between $f_N(x,\cdot,t),\,f(x,\cdot,t)\in\cP_1(\R)$ is
$$ \cW_1(f_N(x,t),f(x,t)):=\sup_{\varphi\in\Lip_1(\R)}\abs*{\int_\R\varphi(v)(f(x,v,t)-f_N(x,v,t))\,dv}, $$
where $\Lip_1(\R)$ is the space of Lipschitz continuous functions on $\R$ with at most unitary Lipschitz constant. In particular, we claim that we can assume $\varphi(0)=0$. Indeed, if this is not the case then, defining $\tilde{\varphi}(v):=\varphi(v)-\varphi(0)$, which is clearly such that $\tilde{\varphi}\in\Lip_1(\R)$ with $\tilde{\varphi}(0)=0$, we obtain
\begin{align*}
    \int_\R\varphi(v)(f(x,v,t)-f_N(x,v,t))\,dv &= \int_\R\left(\tilde{\varphi}(v)+\varphi(0)\right)(f(x,v,t)-f_N(x,v,t))\,dv \\
    &= \int_\R\tilde{\varphi}(v)(f(x,v,t)-f_N(x,v,t))\,dv.
\end{align*}
Moreover, in the definition of $\cW_1$ we can omit the absolute value, for if $\varphi\in\Lip_1(\R)$ then also $-\varphi\in\Lip_1(\R)$ and either $\int_\R\varphi(v)(f(x,v,t)-f_N(x,v,t))\,dv$ or $\int_\R(-\varphi(v))(f(x,v,t)-f_N(x,v,t))\,dv$ is non-negative.

The formulation of the distance $\cW_1$ given above is not the original one, but rather a reformulation based on the so-called Kantorovich-Rubinstein duality~\cite{ambrosio2008BOOK,villani2009BOOK}. For our purposes, however, this representation of $\cW_1$ is the most convenient. The first use of Wasserstein distances for the analysis of kinetic equations goes back to the seminal papers by Dobrushin~\cite{dobrushin1979FAA} and Tanaka~\cite{tanaka1978ZWVG}. Connections between Fourier and Wasserstein distances are studied in detail in~\cite{carrillo2007RMUP}.

To effectively handle interaction rules more general than the linear ones, we need to restrict the class of functions $\Psi$ used in~\eqref{eq:int_rules}. Specifically, we assume:
\begin{enumerate}[label=(H\arabic*)]
\item \label{ass:Psi.Lip} The interaction function $\Psi$ is Lipschitz continuous in the first two variables, i.e. there exists $L_\Psi=L_\Psi(\omega)>0$ such that
$$ \abs{\Psi(v_2,v_2^\ast,\omega)-\Psi(v_1,v_1^\ast,\omega)}\leq L_\Psi(\omega)\left(\abs{v_2-v_1}+\abs{v_2^\ast-v_1^\ast}\right) $$
for all $(v_1,\,v_1^\ast),\,(v_2,\,v_2^\ast)\in\R^2$, with moreover $\ave{L_\Psi}<+\infty$.
\end{enumerate}

We begin by proving an analogue of Lemma~\ref{lemma:m1N} for a general $\Psi$ complying with this assumption.
\begin{lemma} \label{lemma:m1N.Lip}
Let
$$ \cm_{1,N}(x,t):=\int_\R\abs{v}f_N(x,v,t)\,dv, $$
$f_N$ being a solution to~\eqref{eq:Boltz-type.WN} with interaction rules~\eqref{eq:int_rules}, and assume that the interaction function $\Psi$ in~\eqref{eq:int_rules} satisfies~\ref{ass:Psi.Lip}. Then
$$ \norm{\cm_{1,N}(t)}{\infty}\leq\norm{\cm_{1,N}^0}{\infty}e^{2\ave{L_\Psi}t}
    +\frac{\ave{\abs{\Psi(0,0,\omega)}}}{2\ave{L_\Psi}}\left(e^{2\ave{L_\Psi}t}-1\right), \qquad \forall\,t>0, $$
where $\cm_{1,N}^0(x)=\cm_{1,N}(x,0)$.

The same estimate holds with $\cm_{1,N}$, $\cm_{1,N}^0$ replaced by $\cm_1$, $\cm_1^0$, where
$$ \cm_1(x,t):=\int_\R\abs{v}f(x,v,t)\,dv, \qquad \cm_1^0(x):=\cm_1(x,0) $$
and $f$ is a solution to~\eqref{eq:Boltz-type.W}.
\end{lemma}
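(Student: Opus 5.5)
The plan is to mimic the proof of Lemma~\ref{lemma:m1N}, replacing the linear bound on $\abs{v'}$ with one derived from~\ref{ass:Psi.Lip}.

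\textbf{Step 1: bound $\abs{v'}$.} Taking $(v_1,v_1^\ast)=(0,0)$ in~\ref{ass:Psi.Lip} gives
$$ \abs{v'}=\abs{\Psi(v,v_\ast,\omega)}\leq\abs{\Psi(0,0,\omega)}+L_\Psi(\omega)\left(\abs{v}+\abs{v_\ast}\right). $$
Taking the expectation yields
$$ \ave{\abs{v'}-\abs{v}}\leq\ave{\abs{\Psi(0,0,\omega)}}+\ave{L_\Psi}(\abs{v}+\abs{v_\ast})-\abs{v}. $$

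\textbf{Step 2: moment inequality.} Choose $\varphi(v)=\abs{v}$ in~\eqref{eq:Boltz-type.WN}; this is legitimate, at least formally, as in Lemma~\ref{lemma:m1N}. Integrate against $f_N(x,v,t)f_N(x_\ast,v_\ast,t)$, using that each $f_N(x,\cdot,t)$ has unit mass. This gives
$$ \partial_t\cm_{1,N}(x,t)\leq\int_0^1W_N(x,x_\ast)\Bigl(\ave{\abs{\Psi(0,0,\cdot)}}+\ave{L_\Psi}\bigl(\cm_{1,N}(x,t)+\cm_{1,N}(x_\ast,t)\bigr)\Bigr)\,dx_\ast-\cd_{W_N}(x)\cm_{1,N}(x,t). $$
Since $0\leq W_N\leq1$ and $\cd_{W_N}\geq 0$, the last term can be dropped. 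The middle term is bounded by $2\ave{L_\Psi}\norm{\cm_{1,N}(t)}{\infty}$, and the first by $\ave{\abs{\Psi(0,0,\omega)}}$. One could instead keep the $-\cd_{W_N}$ term and use the integrating-factor argument of Lemma~\ref{lemma:m1N}, but the stated rate $2\ave{L_\Psi}$ does not need it.

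\textbf{Step 3: Grönwall.} Integrating in time over $[0,t]$ and taking the supremum in $x$ gives
$$ \norm{\cm_{1,N}(t)}{\infty}\leq\norm{\cm_{1,N}^0}{\infty}+\int_0^t\left(\ave{\abs{\Psi(0,0,\omega)}}+2\ave{L_\Psi}\norm{\cm_{1,N}(\tau)}{\infty}\right)d\tau. $$
Grönwall's inequality in its affine form $u\leq u_0+\int_0^t(a+bu)$ gives $u(t)\leq u_0e^{bt}+\frac{a}{b}(e^{bt}-1)$. With $a=\ave{\abs{\Psi(0,0,\omega)}}$ and $b=2\ave{L_\Psi}$ this is exactly the claimed estimate. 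The argument for $f$ solving~\eqref{eq:Boltz-type.W} is identical with $W$ in place of $W_N$, since it only uses $0\leq W\leq 1$.

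\textbf{Main obstacle.} The delicate point is justifying the passage to the sup norm in Step 3. It requires $\norm{\cm_{1,N}(t)}{\infty}$ to be finite and measurable in $t$, and requires $\varphi=\abs{v}$ to be admissible as a test function even though it is unbounded. I would handle this as in Lemma~\ref{lemma:m1N}. Alternatively, one can use a truncated observable $\min\{\abs{v},R\}$, which is still $1$-Lipschitz, and let $R\to\infty$ by monotone convergence. I would also assume $\ave{\abs{\Psi(0,0,\omega)}}<+\infty$, which is implicit in the statement.
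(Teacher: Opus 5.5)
Your proof is correct and yields exactly the stated estimate, but it treats the loss term differently from the paper. Both arguments start the same way. They use the affine bound $\abs{\Psi(v,v_\ast,\omega)}\leq\abs{\Psi(0,0,\omega)}+L_\Psi(\omega)(\abs{v}+\abs{v_\ast})$ and test~\eqref{eq:Boltz-type.WN} with $\abs{v}$.

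The paper then keeps the loss term. It moves $-\cd_{W_N}(x)\cm_{1,N}(x,t)$ to the left-hand side and multiplies by $e^{\cd_{W_N}(x)t}$. It replaces this factor by $e^{\dWNinf t}$ as in the proof of Lemma~\ref{lemma:m1N} and applies Gr\"{o}nwall to $e^{\dWNinf t}\norm{\cm_{1,N}(t)}{\infty}$. This gives a sharper intermediate bound, with exponent $2\ave{L_\Psi}\norm{\cd_{W_N}}{\infty}-\dWNinf$ and source coefficient $\ave{\abs{\Psi(0,0,\omega)}}\norm{\cd_{W_N}}{\infty}$. Only at the end is it coarsened, using $\dWNinf\geq 0$ and $\norm{\cd_{W_N}}{\infty}\leq 1$.

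You instead drop the nonpositive loss term and use $\cd_{W_N}\leq 1$ immediately, so a plain affine Gr\"{o}nwall suffices and no integrating factor is needed. For the lemma as stated, nothing is lost. The same holds for its role in Theorem~\ref{theo:dense_graph.Lip}, which only needs bounds uniform in $N$ on finite time intervals. What the paper's route retains is the graph-dependent rate. Whenever $2\ave{L_\Psi}\norm{\cd_{W_N}}{\infty}<\dWNinf$, that rate yields bounds uniform in time, and decay when $\Psi(0,0,\omega)=0$. This parallels how the sharp form of Lemma~\ref{lemma:m1N} is exploited in~\eqref{eq:m1}.

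A minor caution on your rigour remark, which goes beyond the paper (the paper tests with $\abs{v}$ only formally). With $\varphi_R(v)=\min\{\abs{v},R\}$, the pointwise bound $\varphi_R(v')\leq\abs{\Psi(0,0,\omega)}+L_\Psi(\varphi_R(v)+\varphi_R(v_\ast))$ fails when $L_\Psi<1$. For example, take $\Psi(v,v_\ast,\omega)=v/2$, $v_\ast=0$ and $\abs{v}>2R$. So the truncated argument only gives finiteness of $\cm_{1,N}$ with $\max\{L_\Psi,1\}$ in place of $L_\Psi$. Once finiteness is known, your untruncated computation applies as written.
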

\begin{proof}
Because of~\ref{ass:Psi.Lip}, the function $\Psi$ satisfies
$$ \abs{\Psi(v,v_\ast,\omega)}\leq\abs{\Psi(0,0,\omega)}+L_\Psi(\omega)(\abs{v}+\abs{v_\ast}). $$
Therefore, letting $\varphi(v)=\abs{v}$ in~\eqref{eq:Boltz-type.WN}, we find:
\begin{align*}
    \frac{\partial\cm_{1,N}}{\partial t}(x,t) &= \int_0^1\int_\R\int_\R W_N(x,x_\ast)\ave{\abs{\Psi(v,v_\ast,\omega)}-\abs{v}}
        f_N(x,v,t)f_N(x_\ast,v_\ast,t)\,dv\,dv_\ast\,dx_\ast \\
    &\leq \ave{\abs{\Psi(0,0,\omega)}}\cd_{W_N}(x)+(\ave{L_\Psi}-1)\cd_{W_N}(x)\cm_{1,N}(x,t) \\
    &\phantom{\leq} +\ave{L_\Psi}\int_0^1W_N(x,x_\ast)\cm_{1,N}(x_\ast,t)\,dx_\ast.
\end{align*}
Consequently,
$$ \frac{\partial\cm_{1,N}}{\partial t}(x,t)+\cd_{W_N}(x)\cm_{1,N}(x,t)\leq
    \ave{\abs{\Psi(0,0,\omega)}}\norm{\cd_{W_N}}{\infty}+2\ave{L_\Psi}\norm{\cd_{W_N}}{\infty}\norm{\cm_{1,N}(t)}{\infty}, $$
whence, multiplying both sides by $e^{\cd_{W_N}(x)t}$ and integrating over $[0,\,t]$, with $t>0$,
\begin{align*}
    e^{\cd_{W_N}(x)t}\cm_{1,N}(x,t) &\leq 
        \cm_{1,N}^0(x)+\ave{\abs{\Psi(0,0,\omega)}}\norm{\cd_{W_N}}{\infty}\int_0^te^{\cd_{W_N}(x)\tau}\,d\tau \\
    &\phantom{=} +2\ave{L_\Psi}\norm{\cd_{W_N}}{\infty}\int_0^te^{\cd_{W_N}(x)\tau}\norm{\cm_{1,N}(\tau)}{\infty}\,d\tau.
\end{align*}
Repeating the argument for $e^{\cd_{W_N}(x)t}$ used in the proof of Lemma~\ref{lemma:m1N}, we further deduce:
\begin{align*}
    e^{\dWNinf t}\norm{\cm_{1,N}(t)}{\infty} &\leq 
        \norm{\cm_{1,N}^0}{\infty}+\ave{\abs{\Psi(0,0,\omega)}}\norm{\cd_{W_N}}{\infty}\int_0^te^{\dWNinf\tau}\,d\tau \\
    &\phantom{=} +2\ave{L_\Psi}\norm{\cd_{W_N}}{\infty}\int_0^te^{\dWNinf\tau}\norm{\cm_{1,N}(\tau)}{\infty}\,d\tau,
\end{align*}
which, invoking Gr\"{o}nwall's inequality in the form provided in~\cite[Lemma~6.2]{loy2026RMUP} for the function $t\mapsto e^{\dWNinf t}\norm{\cm_{1,N}(t)}{\infty}$, gives
\begin{align*}
    \norm{\cm_{1,N}(t)}{\infty} &\leq \norm{\cm_{1,N}^0}{\infty}e^{\left(2\ave{L_\Psi}\norm{\cd_{W_N}}{\infty}-\dWNinf\right)t} \\
    &\phantom{\leq} +\ave{\abs{\Psi(0,0,\omega)}}\norm{\cd_{W_N}}{\infty}
        \int_0^te^{\left(2\ave{L_\Psi}\norm{\cd_{W_N}}{\infty}-\dWNinf\right)(t-\tau)}\,d\tau.
\end{align*}
The thesis then follows from $\dWNinf\geq 0$ and $\norm{\cd_{W_N}}{\infty}\leq 1$.

Letting $\varphi(v)=\abs{v}$ in~\eqref{eq:Boltz-type.W} and repeating the same arguments as above, with $W_N$ and $\cd_{W_N}$ replaced by $W$ and $\cd_W$, respectively, yields the conclusion also for $\cm_1$.
\end{proof}

A straightforward consequence of Lemma~\ref{lemma:m1N.Lip} is that, also in this case, $f_N(x,\cdot,t),\,f(x,\cdot,t)\in\cP_1(\R)$ for every $x\in [0,\,1]$ and every $t>0$ if the same holds at the initial time. Consequently, the distance $\cW_1(f_N(x,t),f(x,t))$ is well defined and we can prove the main result of this section:
\begin{theorem}[Dense graph limit with Lipschitz continuous interactions] \label{theo:dense_graph.Lip}
Under the same assumptions as in Theorem~\ref{theo:dense_graph.linear}, except that assumption~\ref{ass:conv.f_N^0->f^0.d1} is replaced by
\begin{enumerate}[label=(\roman*$'$)]
\setcounter{enumi}{1}
\item \label{ass:conv.f_N^0->f^0.W1} $\{f_N^0(x,\cdot)\}_{N\in\N}$ converges, for $N\to\infty$, to some $f^0=f^0(x,\cdot)\in\cP_1(\R)$ in the $1$-Wasserstein distance, i.e.
$$ \lim_{N\to\infty}\cW_1(f_N^0(x),f^0(x))=0, \qquad \text{for a.e. } x\in [0,\,1], $$
\end{enumerate}
the solutions of~\eqref{eq:Boltz-type.WN}, with interaction rules~\eqref{eq:int_rules} and interaction function $\Psi$ satisfying~\ref{ass:Psi.Lip}, converge in the $1$-Wasserstein distance to the solutions of~\eqref{eq:Boltz-type.W} emanating from $f^0$.
\end{theorem}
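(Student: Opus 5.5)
We mirror the proof of Theorem~\ref{theo:dense_graph.linear}, with the Fourier transform replaced by the Kantorovich--Rubinstein dual formulation of $\cW_1$. Fix $\varphi\in\Lip_1(\R)$ with $\varphi(0)=0$ and subtract~\eqref{eq:Boltz-type.WN} from~\eqref{eq:Boltz-type.W}. Adding and subtracting the mixed products $W f(x,v)f(x_\ast,v_\ast)$, $W_N f_N(x,v) f(x_\ast,v_\ast)$ and $W_N f_N(x,v)f_N(x_\ast,v_\ast)$ splits the right-hand side into four pieces.
\begin{enumerate*}[label=(\alph*)]
\item A gain term $\int_0^1 W(x,x_\ast)\int\!\!\int\ave{\varphi(\Psi(v,v_\ast,\omega))}(f-f_N)(x,v)f(x_\ast,v_\ast)$.
\item A gain term $\int_0^1W_N(x,x_\ast)\int\!\!\int\ave{\varphi(\Psi(v,v_\ast,\omega))}f_N(x,v)(f-f_N)(x_\ast,v_\ast)$.
\item A remainder $\int_0^1(W-W_N)\int\!\!\int\ave{\varphi(v')-\varphi(v)}f_N(x,v)f(x_\ast,v_\ast)$.
\item The loss term $-\cd_{W_N}(x)\int\varphi\,(f-f_N)(x,v)\,dv$.
\end{enumerate*}
For fixed $v_\ast$ and $\omega$, the map $v\mapsto\varphi(\Psi(v,v_\ast,\omega))/L_\Psi(\omega)$ lies in $\Lip_1$ by~\ref{ass:Psi.Lip}, and symmetrically in $v_\ast$. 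Hence (a) is bounded by $\ave{L_\Psi}\norm{\cd_W}{\infty}\cW_1(f_N(x,t),f(x,t))$, and (b) is bounded by $\ave{L_\Psi}\int_0^1W_N(x,x_\ast)\cW_1(f_N(x_\ast,t),f(x_\ast,t))\,dx_\ast$.

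For (c) we use $\abs{\varphi(v')-\varphi(v)}\le\abs{\Psi(v,v_\ast,\omega)}+\abs{v}\le\abs{\Psi(0,0,\omega)}+(L_\Psi+1)\abs{v}+L_\Psi\abs{v_\ast}$. By Lemma~\ref{lemma:m1N.Lip} applied to both $f_N$ and $f$, together with assumption~\ref{ass:m10N.unif_bound}, this gives $\abs{\text{(c)}}\le K(t)\int_0^1\abs{W-W_N}\,dx_\ast$. Here $K(t)$ is an explicit, locally bounded, exponentially growing function, independent of $N$ and of $\varphi$.

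We cannot differentiate a supremum over $\varphi$, so we argue as follows. Move (d) to the left, multiply by $e^{\cd_{W_N}(x)t}$, and integrate in time on $[0,t]$ for the \emph{fixed} $\varphi$. The right-hand side is then already independent of $\varphi$: it involves $\cW_1(f_N^0(x),f^0(x))$, the time integrals of $\cW_1$ at earlier times, and the $K$-term. Taking the supremum over $\varphi$ then yields a Duhamel-type integral inequality for $e^{\cd_{W_N}(x)t}\cW_1(f_N(x,t),f(x,t))$. This is the integrated analogue of~\eqref{eq:dense_graph_limit.proof}, and avoiding the derivative of a supremum is the main technical care point.

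From here the proof follows Theorem~\ref{theo:dense_graph.linear} step by step.
\begin{enumerate*}[label=(\arabic*)]
\item Replace the exponential weights by $e^{\dWNinf t}$ as in Lemma~\ref{lemma:m1N}.
\item Integrate in $x$, using the symmetry of $W_N$ to turn $\int_0^1 W_N(x,x_\ast)\,dx$ into $\cd_{W_N}(x_\ast)\le1$.
\item Apply Gr\"onwall's inequality to $t\mapsto e^{\dWNinf t}\int_0^1\cW_1(f_N(x,t),f(x,t))\,dx$.
\end{enumerate*}
This gives
$$\int_0^1\cW_1(f_N(x,t),f(x,t))\,dx\le\Bigl(\int_0^1\cW_1(f_N^0(x),f^0(x))\,dx+\tilde\psi(t)\norm{W-W_N}{L^1([0,\,1]^2)}\Bigr)e^{2\ave{L_\Psi}t}.$$

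Finally, the estimate is invariant under relabelling $W_N\mapsto W_N^{\tilde\sigma}$, so we may take the infimum over $\tilde\sigma$ and use~\eqref{eq:inf_L1<=delta_cut} together with $\delta_\Box(W_N,W)=o(N^{-1/2})$ to kill the graphon term. The initial term vanishes by dominated convergence, using assumption (ii$'$) and the bound $\cW_1(\mu,\nu)\le\cm_1(\mu)+\cm_1(\nu)$, which holds since $\abs{\varphi(v)}\le\abs{v}$. Hence $\int_0^1\cW_1(f_N,f)\,dx\to0$, and along a subsequence $\cW_1(f_N(x,t),f(x,t))\to0$ for a.e.\ $x$.
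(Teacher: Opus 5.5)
Your proposal follows the paper's proof essentially line by line. It uses the same four-term splitting and the same Lipschitz bounds on the two gain terms. It bounds the graphon remainder through Lemma~\ref{lemma:m1N.Lip} in the same way, and it integrates in time for fixed $\varphi$ before taking the supremum. The $e^{\dWNinf t}$ weights, Gr\"onwall, relabelling and dominated convergence also match.

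There is one harmless bookkeeping slip. With (c) written as $\int_0^1(W-W_N)\int_\R\int_\R\ave{\varphi(v')-\varphi(v)}f_N(x,v)f(x_\ast,v_\ast)$, the four pieces add up to the true right-hand side only if the loss term carries $\cd_W(x)$. With $\cd_{W_N}(x)$ you are missing $-(\cd_W(x)-\cd_{W_N}(x))\int_\R\varphi(v)(f-f_N)(x,v)\,dv$. The paper keeps exactly this term inside its third piece. Since $\abs{\varphi(v)}\le\abs{v}$, it is bounded by $(\cm_1(x,t)+\cm_{1,N}(x,t))\int_0^1\abs{W-W_N}\,dx_\ast$ and is absorbed into $K(t)$.

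The real problem is the relabelling step, which you inherit from the proof of Theorem~\ref{theo:dense_graph.linear}. Your integral inequality relies on $f_N$ solving~\eqref{eq:Boltz-type.WN} with kernel $W_N$. Replacing $W_N$ by $W_N^{\tilde\sigma}$ therefore forces you to replace $f_N$ and $f_N^0$ by $f_N(\tilde\sigma(\cdot),\cdot,t)$ and $f_N^0(\tilde\sigma(\cdot),\cdot)$ on both sides. So the infimum over $\tilde\sigma$ cannot be taken in the graphon term alone.

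Without some compatibility between the labelling and the initial data, the conclusion is in fact false. Here is a counterexample.
\begin{itemize}
\item Let $a^N_{ij}=1$ iff $i-j$ is even. Then $\delta_\Box(W_N,W)=O(1/N)$ (indeed $0$ for even $N$) with $W=\chi_{[0,1/2]^2}+\chi_{(1/2,1]^2}$.
\item Take $\Psi(v,v_\ast)=(v+v_\ast)/2$, and $f_N^0=f^0$ equal to $\delta_{-1}$ for $x\le 1/2$ and to $\delta_1$ for $x>1/2$.
\item The limit solution is stationary.
\item For $4\mid N$, each finite clique contains both values in equal proportion, and the first moments obey $m_{1,i}(t)=m_{1,i}(0)e^{-t/4}$.
\item Hence $\cW_1(f_N(x,t),f(x,t))\ge 1-e^{-t/4}$ for every $x$ and all such $N$.
\end{itemize}

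What your estimate does establish is one of two things. First, convergence when $\norm{W-W_N}{L^1([0,\,1]^2)}\to 0$ in the given labelling; this holds, for instance, if $\norm{W-W_N}{\Box}=o(N^{-1/2})$, by~\eqref{eq:L1<=cut}. Second, convergence of $f_N(\tilde\sigma_N(\cdot),\cdot,t)$ along near-optimal relabellings $\tilde\sigma_N$, provided (ii$'$) is assumed for $f_N^0(\tilde\sigma_N(\cdot),\cdot)$.
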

\begin{proof}
In~\eqref{eq:Boltz-type.WN} and~\eqref{eq:Boltz-type.W} we consider $\varphi\in\Lip_1(\R)$ with $\varphi(0)=0$ and observe, for future reference, that this implies $\abs{\varphi(v)}\leq\abs{v}$ for every $v\in\R$.

Subtracting~\eqref{eq:Boltz-type.WN} from~\eqref{eq:Boltz-type.W} and rearranging the terms, we get:
\begin{align}
    \begin{aligned}[b]
    \frac{\partial}{\partial t} &\int_\R\varphi(v)(f(x,v,t)-f_N(x,v,t))\,dv \\
    &= \int_0^1\int_\R\int_\R W(x,x_\ast)\ave{\varphi(v')}(f(x,v,t)-f_N(x,v,t))f(x_\ast,v_\ast,t)\,dv\,dv_\ast\,dx_\ast \\
    &\phantom{=} +\int_0^1\int_\R\int_\R W_N(x,x_\ast)\ave{\varphi(v')}f_N(x,v,t)(f(x_\ast,v_\ast,t)-f_N(x_\ast,v_\ast,t))\,dv\,dv_\ast\,dx_\ast \\
    &\phantom{=} +\int_0^1(W(x,x_\ast)-W_N(x,x_\ast))\left(\int_\R\int_\R\ave{\varphi(v')}f_N(x,v,t)f(x_\ast,v_\ast,t)\,dv\,dv_\ast\right. \\
    &\phantom{=+\int_0^1(W(x,x_\ast)-W_N(x,x_\ast))} \left.-\int_\R\varphi(v)f(x,v,t)\,dv\right)dx_\ast \\
    &\phantom{=} -\cd_{W_N}(x)\int_\R\varphi(v)(f(x,v,t)-f_N(x,v,t))\,dv.
    \end{aligned}
    \label{eq:W1.rearrangement-proof}
\end{align}

We now analyse the first three terms on the right-hand side.

First, we have
\begin{align*}
    \int_0^1\int_\R\int_\R &W(x,x_\ast)\ave{\varphi(v')}(f(x,v,t)-f_N(x,v,t))f(x_\ast,v_\ast,t)\,dv\,dv_\ast\,dx_\ast \\
    &=\int_0^1\int_\R W(x,x_\ast)f(x_\ast,v_\ast,t)\left(\int_\R\ave{\varphi(v')}(f(x,v,t)-f_N(x,v,t))\,dv\right)dv_\ast\,dx_\ast
\intertext{and we observe that, for every fixed $v_\ast\in\R$ and $\omega\in\Omega$, the mapping $v\mapsto\ave{\varphi(v')}=\ave{\varphi(\Psi(v,v_\ast,\omega))}$ is Lipschitz continuous by composition, because so are both $\varphi$ and $\Psi(\cdot,v_\ast,\omega)$. Thus,}
    &\leq \ave{L_\Psi}\int_0^1\int_\R W(x,x_\ast)f(x_\ast,v_\ast,t)\cW_1(f_N(x,t),f(x,t))\,dv_\ast\,dx_\ast \\
    &= \ave{L_\Psi}\cd_W(x)\cW_1(f_N(x,t),f(x,t)).
\end{align*}

Analogously,
\begin{align*}
    \int_0^1\int_\R\int_\R &W_N(x,x_\ast)\ave{\varphi(v')}f_N(x,v,t)(f(x_\ast,v_\ast,t)-f_N(x_\ast,v_\ast,t))\,dv\,dv_\ast\,dx_\ast \\
    &= \int_0^1\int_\R W_N(x,x_\ast)f_N(x,v,t)\left(\int_\R\ave{\varphi(v')}(f(x_\ast,v_\ast,t)-f_N(x_\ast,v_\ast,t))\,dv_\ast\right)dv\,dx_\ast \\
    &\leq \ave{L_\Psi}\int_0^1\int_\R W_N(x,x_\ast)f_N(x,v,t)\cW_1(f_N(x_\ast,t),f(x_\ast,t))\,dv\,dx_\ast \\
    &= \ave{L_\Psi}\int_0^1 W_N(x,x_\ast)\cW_1(f_N(x_\ast,t),f(x_\ast,t))\,dx_\ast,
\end{align*}
where now we have used the fact that, for every fixed $v\in\R$ and $\omega\in\Omega$, the mapping $v_\ast\mapsto\ave{\varphi(v')}$ is Lipschitz continuous.

Finally,
\begin{align*}
    &\int_0^1(W(x,x_\ast)-W_N(x,x_\ast))\left(\int_\R\int_\R\ave{\varphi(v')}f_N(x,v,t)f(x_\ast,v_\ast,t)\,dv\,dv_\ast\right. \\
    &\phantom{\int_0^1(W(x,x_\ast)-W_N(x,x_\ast))} \left.-\int_\R\varphi(v)f(x,v,t)\,dv\right)dx_\ast \\
    &= \int_0^1(W(x,x_\ast)-W_N(x,x_\ast))\left(\int_\R\int_\R\ave{\varphi(v')-\varphi(v)}f_N(x,v,t)f(x_\ast,v_\ast,t)\,dv\,dv_\ast\right. \\
    &\phantom{\int_0^1(W(x,x_\ast)-W_N(x,x_\ast))} \left.-\int_\R\varphi(v)(f(x,v,t)-f_N(x,v,t))\,dv\right)dx_\ast.
\end{align*}
We observe, in particular, that
\begin{multline*}
    \abs*{\int_\R\int_\R\ave{\varphi(v')-\varphi(v)}f_N(x,v,t)f(x_\ast,v_\ast,t)\,dv\,dv_\ast-\int_\R\varphi(v)(f(x,v,t)-f_N(x,v,t))\,dv} \\
        \leq\int_\R\int_\R\ave{\abs{\varphi(v')-\varphi(v)}}f_N(x,v,t)f(x_\ast,v_\ast,t)\,dv\,dv_\ast
            +\int_\R\abs{v}(f(x,v,t)+f_N(x,v,t))\,dv
\end{multline*}
and that $\ave{\abs{\varphi(v')-\varphi(v)}}\leq\ave{\abs{\Psi(v,v_\ast,\omega)-v}}\leq\ave{\abs{\Psi(0,0,\omega)}} (\ave{L_\Psi}+1)\abs{v}+\ave{L_\Psi}\abs{v_\ast}$, whence
\begin{multline*}
    \abs*{\int_\R\int_\R\ave{\varphi(v')-\varphi(v)}f_N(x,v,t)f(x_\ast,v_\ast,t)\,dv\,dv_\ast-\int_\R\varphi(v)(f(x,v,t)-f_N(x,v,t))\,dv} \\
    \leq\ave{\abs{\Psi(0,0,\omega)}}+(\ave{L_\Psi}+2)\norm{\cm_{1,N}(t)}{\infty}+(\ave{L_\Psi}+1)\norm{\cm_1(t)}{\infty}.
\end{multline*}
Invoking Lemma~\ref{lemma:m1N.Lip}, along with the uniform boundedness of the $\cm_{1,N}^0$'s with respect to $N$, we deduce that there exists an exponentially growing function $\alpha=\alpha(t)$, independent of $N$, such that
\begin{align*}
    &\int_0^1(W(x,x_\ast)-W_N(x,x_\ast))\left(\int_\R\int_\R\ave{\varphi(v')}f_N(x,v,t)f(x_\ast,v_\ast,t)\,dv\,dv_\ast\right. \\
    &\phantom{\int_0^1(W(x,x_\ast)-W_N(x,x_\ast))} \left.-\int_\R\varphi(v)f(x,v,t)\,dv\right)dx_\ast\leq
        \alpha(t)\int_0^1\abs{W(x,x_\ast)-W_N(x,x_\ast)}\,dx_\ast.
\end{align*}

On the whole, we continue~\eqref{eq:W1.rearrangement-proof} as
\begin{align*}
    \frac{\partial}{\partial t} &\left(e^{\cd_{W_N}(x)t}\int_\R\varphi(v)(f(x,v,t)-f_N(x,v,t))\,dv\right) \\
    &\leq e^{\cd_{W_N}(x)t}\ave{L_\Psi}\left(\norm{\cd_W}{\infty}\cW_1(f_N(x,t),f(x,t))
        +\int_0^1W_N(x,x_\ast)\cW_1(f_N(x_\ast,t),f(x_\ast,t))\,dx_\ast\right) \\
    &\phantom{\leq} +\alpha(t)e^{\cd_{W_N}(x)t}\int_0^1\abs{W(x,x_\ast)-W_N(x,x_\ast)}\,dx_\ast,
\end{align*}
whence, integrating both sides over $[0,\,t]$, for $t>0$, and taking the supremum over $\varphi$ on the left-hand side,
\begin{align*}
    e^{\cd_{W_N}(x)t}\cW_1(f_N(x,t),f(x,t)) &\leq \cW_1(f_N^0(x),f^0(x)) \\
    &\phantom{\leq} +\ave{L_\Psi}\norm{\cd_W}{\infty}\int_0^te^{\cd_{W_N}(x)\tau}\cW_1(f_N(x,\tau),f(x,\tau))\,d\tau \\
    &\phantom{\leq} +\ave{L_\Psi}\int_0^te^{\cd_{W_N}(x)\tau}\int_0^1W_N(x,x_\ast)\cW_1(f_N(x_\ast,t),f(x_\ast,t))\,dx_\ast\,d\tau \\
    &\phantom{\leq} +\psi(t)\int_0^1\abs{W(x,x_\ast)-W_N(x,x_\ast)}\,dx_\ast,
\end{align*}
where we have defined
$$ \psi(t):=\int_0^t\alpha(\tau)e^{\dWNinf\tau}\,d\tau. $$
Repeating the argument for $e^{\cd_{W_N}(x)t}$ used in the proof of Lemma~\ref{lemma:m1N} and integrating then both sides over $[0,\,1]$ with respect to $x$, we discover
\begin{align*}
    e^{\dWNinf t} &\int_0^1\cW_1(f_N(x,t),f(x,t))\,dx \\ 
    &\leq \int_0^1\cW_1(f_N^0(x),f^0(x))\,dx+\psi(t)\norm{W-W_N}{L^1([0,\,1]^2)} \\
    &\phantom{\leq} +\ave{L_\Psi}(\norm{\cd_W}{\infty}+\norm{\cd_{W_N}}{\infty})\int_0^te^{\dWNinf\tau}\int_0^1\cW_1(f_N(x,\tau),f(x,\tau))\,dx\,d\tau.
\end{align*}
At this point, Gr\"{o}nwall's inequality applied to the function $t\mapsto e^{\dWNinf t}\int_0^1\cW_1(f_N(x,t),f(x,t))\,dx$ implies
\begin{multline*}
    \int_0^1\cW_1(f_N(x,t),f(x,t))\,dx \\
        \leq\left(\int_0^1\cW_1(f_N^0(x),f^0(x))\,dx+\psi(t)\norm{W-W_N}{L^1([0,\,1]^2)}\right)
            e^{\left(\ave{L_\Psi}(\norm{\cd_W}{\infty}+\norm{\cd_{W_N}}{\infty})-\dWNinf\right)t},
\end{multline*}
whence the thesis follows by an argument analogous to that used in the proof of Theorem~\ref{theo:dense_graph.linear}.
\end{proof}

\subsubsection{Interaction rules with confined trait~\texorpdfstring{$\boldsymbol{v}$}{}}
In several applications, the trait $v$ of the agents does not span the whole real line, but instead belongs to a subset $K\subseteq\R$, which may be either unbounded, e.g. $K=\R_+$ in wealth redistribution~\cite{cordier2005JSP} or social contacts~\cite{dimarco2021JMB,loy2021MBE} problems, or bounded, e.g. $K=[-1,\,1]$ in opinion formation problems~\cite{toscani2006CMS} or $K=[0,\,1]$ in epidemiological problems~\cite{dellamarca2022NHM,dellamarca2023JMB}, to mention just a few examples. In these cases, homogeneous Boltzmann-type equations are formulated by integrating over $K$, rather than over $\R$, with respect to $v$ and $v_\ast$.

If such problems are posed on networks, the question arises as to whether~\eqref{eq:Boltz-type.W} can still be regarded as the dense graph limit of~\eqref{eq:Boltz-type.WN} when the agents' trait is confined to $K$.

Assume that the interaction function $K$ satisfies the following property:
\begin{enumerate}[label=(H\arabic*)]
\setcounter{enumi}{1}
\item \label{ass:Psi.K} For every $v,\,v_\ast\in K$ and $\omega\in\Omega$, it results $\Psi(v,v_\ast,\omega)\in K$.
\end{enumerate}
Then we have:
\begin{theorem} \label{theo:suppf.K}
Under assumption~\ref{ass:Psi.K}, solutions of~\eqref{eq:Boltz-type.W} are such that if $\supp{f^0(x,\cdot)}\subseteq K$ for every $x\in [0,\,1]$ then $\supp{f(x,\cdot,t)}\subseteq K$ for every $x\in [0,\,1]$ and every $t>0$.

The same holds for solutions of~\eqref{eq:Boltz-type.WN}, with $f^0$, $f$ replaced by $f_N^0$, $f_N$, respectively.
\end{theorem}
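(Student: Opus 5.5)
We will show that the mass of $f(x,\cdot,t)$ lying outside $K$ stays zero, by writing an evolution inequality for it and closing it with Gr\"{o}nwall's inequality, in the same way as in Lemma~\ref{lemma:m1N}.

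Assume $K$ is closed, which covers the examples above. Set
$$ \cM(x,t):=\int_\R\chi_{\R\setminus K}(v)f(x,v,t)\,dv, $$
and take $\varphi=\chi_{\R\setminus K}$ in~\eqref{eq:Boltz-type.W}. If a bounded measurable observable is not admitted directly, we will approximate $\chi_{\R\setminus K}$ from below by continuous functions $0\leq\varphi_k\uparrow\chi_{\R\setminus K}$, for instance $\varphi_k(v)=\min\{1,\,k\operatorname{dist}(v,K)\}$, and pass to the limit by monotone convergence.

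Splitting $\R^2=(K\times K)\cup(K\times K)^c$, assumption~\ref{ass:Psi.K} gives $\chi_{\R\setminus K}(\Psi(v,v_\ast,\omega))=0$ on $K\times K$. Therefore
$$ \ave{\chi_{\R\setminus K}(v')-\chi_{\R\setminus K}(v)}\leq\chi_{\R\setminus K}(v)+\chi_{\R\setminus K}(v_\ast). $$
On the complement of $K\times K$ we use the crude bound $\abs{\cdot}\leq 1$, which is dominated by the right-hand side above because at least one of $v$, $v_\ast$ lies outside $K$. Integrating against $W(x,x_\ast)f(x,v,t)f(x_\ast,v_\ast,t)$ and using the normalisation of $f$ then yields
$$ \frac{\partial\cM}{\partial t}(x,t)\leq\cd_W(x)\cM(x,t)+\int_0^1W(x,x_\ast)\cM(x_\ast,t)\,dx_\ast\leq 2\norm{\cM(t)}{\infty}. $$
Integrating in time and taking the supremum in $x$ gives $\norm{\cM(t)}{\infty}\leq\norm{\cM(0)}{\infty}+2\int_0^t\norm{\cM(\tau)}{\infty}\,d\tau$. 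Gr\"{o}nwall's inequality then gives $\norm{\cM(t)}{\infty}\leq\norm{\cM(0)}{\infty}e^{2t}=0$, since $\supp{f^0(x,\cdot)}\subseteq K$. Hence $f(x,\cdot,t)$ gives no mass to $\R\setminus K$, i.e. $\supp{f(x,\cdot,t)}\subseteq K$ for every $x\in [0,\,1]$ and every $t>0$.

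The same computation applies verbatim to~\eqref{eq:Boltz-type.WN} with $W_N$ in place of $W$, since only $0\leq W_N\leq 1$ and the normalisation of $f_N$ are used. The claim for $f_N$ can alternatively be read off componentwise from~\eqref{eq:Boltz-type.weak.graph}.

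The main obstacle is the justification of the non-smooth test function, in particular the measurability of $t\mapsto\norm{\cM(t)}{\infty}$ and the interchange of the limit in $k$ with the time derivative. To handle this we will work with the integrated-in-time weak formulation, taking a fixed $\varphi_k$. For $\varphi_k$ the sign argument above no longer holds exactly, so we use instead the bound $\ave{\varphi_k(v')}\leq\chi_{\R\setminus K}(v)+\chi_{\R\setminus K}(v_\ast)$, which holds because $\varphi_k\leq 1$ and $\varphi_k(v')=0$ on $K\times K$. For the loss term we simply drop it, since it is non-positive. These two bounds give the same Gr\"{o}nwall inequality for $\int_\R\varphi_k f$ bounded in terms of $\cM$. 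We then let $k\to\infty$ before applying Gr\"{o}nwall's inequality.
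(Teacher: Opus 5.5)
Your argument is correct, but it takes a different route from the paper.

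The paper tests~\eqref{eq:Boltz-type.W} with $\chi_K$ and works with the mass $h(x,t)=\int_K f(x,v,t)\,dv$ inside $K$. From the pointwise bound $\ave{\chi_K(v')}\geq\chi_K(v)\chi_K(v_\ast)$ it derives the \emph{quadratic} inequality $\partial_t h+\norm{\cd_W}{\infty}h\geq\norm{\cd_W}{\infty}\bigl(\inf_x h\bigr)^2$. It then closes this with a reversed Gr\"{o}nwall--Bihari comparison to get $\inf_x h(t)\geq 1$.

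You instead track the complementary mass $\cM=1-h$, and follow the same pointwise bound with a union bound: $\chi_{\R\setminus K}(v')\leq 1-\chi_K(v)\chi_K(v_\ast)\leq\chi_{\R\setminus K}(v)+\chi_{\R\setminus K}(v_\ast)$. This makes the inequality linear and homogeneous in $\cM$, so the ordinary Gr\"{o}nwall lemma with zero initial datum finishes the proof.

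Your route has several advantages. It is more elementary, since it needs no nonlinear comparison principle. It gives the quantitative stability estimate $\norm{\cM(t)}{\infty}\leq\norm{\cM(0)}{\infty}e^{2t}$ for free. It is also more careful on two points the paper passes over:
\begin{enumerate*}[label=(\roman*)]
\item the admissibility of the discontinuous observable, where your $\varphi_k=\min\{1,k\operatorname{dist}(\cdot,K)\}$ with the loss term discarded works as written;
\item the fact that ``$\int_K f=1$'' is equivalent to ``$\supp{f}\subseteq K$'' only for closed $K$; for non-closed $K$ both proofs really show that $f(x,\cdot,t)$ is concentrated on $K$.
\end{enumerate*}
The paper's version keeps the exact bilinear structure of the gain term, which is closer to the dynamics but not needed for this invariance statement.

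Finally, the measurability issue for $t\mapsto\norm{\cM(t)}{\infty}$ that you flag can be avoided altogether. If $\cM(y,\tau)\leq c(\tau)$ for all $y$, with $c$ continuous, your integrated inequality gives $\cM(x,t)\leq 2\int_0^t c(\tau)\,d\tau$. Starting from $c\equiv 1$ and iterating yields $\cM(x,t)\leq(2t)^n/n!$ for every $n$, hence $\cM\equiv 0$.
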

\begin{proof}
We prove the statement working explicitly on~\eqref{eq:Boltz-type.W}.

Since $f^0(x,\cdot)$ is a probability measure, the condition $\supp{f^0(x,\cdot)}\subseteq K$ for every $x\in [0,\,1]$ is equivalent to
$$ \int_Kf^0(x,v)\,dv=1, \qquad \forall\,x\in [0,\,1]. $$
At the same time, since $f(x,\cdot,t)$ is a probability measure on $\R$, we have
$$ \int_Kf(x,v,t)\,dv\leq 1, \qquad \forall\,x\in [0,\,1],\ \forall\,t>0. $$
We will show that this inequality is, in fact, an equality.

In~\eqref{eq:Boltz-type.W}, let $\varphi(v)=\chi_K(v)$, the latter being the characteristic function of the set $K$. We get:
\begin{align*}
    \frac{\partial}{\partial t}\int_Kf(x,v,t)\,dv &= \int_0^1\int_\R\int_\R W(x,x_\ast)\ave{\chi_K(v')}f(x,v,t)f(x_\ast,v_\ast,t)\,dv\,dv_\ast\,dx_\ast \\
    &\phantom{=} -\cd_W(x)\int_Kf(x,v,t)\,dv\,dx \\
    &\geq \int_0^1\int_K\int_KW(x,x_\ast)f(x,v,t)f(x_\ast,v_\ast,t)\,dv\,dv_\ast\,dx_\ast \\
    &\phantom{=} -\cd_W(x)\int_Kf(x,v,t)\,dv,
\intertext{where we have used that $\chi_K(v')=\chi_K(\Psi(v,v_\ast,\omega))=1$ for $v,\,v_\ast\in K$ because of~\ref{ass:Psi.K}. Furthermore,}
    &= \int_0^1W(x,x_\ast)\left(\int_Kf(x,v,t)\,dv\right)\left(\int_Kf(x_\ast,v_\ast,t)\,dv_\ast\right)\,dx_\ast \\
    &\phantom{=} -\cd_W(x)\int_Kf(x,v,t)\,dv.
\end{align*}

Introducing, for brevity, the function
$$ h(x,t):=\int_Kf(x,v,t)\,dv, $$
which is such that $0\leq h\leq 1$, we may rewrite the previous relation as
\begin{align*}
    \frac{\partial h}{\partial t}(x,t) &\geq h(x,t)\int_0^1W(x,x_\ast)h(x_\ast,t)\,dx_\ast-\cd_W(x)h(x,t) \\
    &\geq \cd_W(x)h(x,t)\left(\inf_{x_\ast\in [0,\,1]}h(x_\ast,t)-1\right) \\
    &\geq \norm{\cd_W}{\infty}h(x,t)\left(\inf_{x_\ast\in [0,\,1]}h(x_\ast,t)-1\right),
\end{align*}
the last inequality following from the fact that $\inf_{x_\ast\in [0,\,1]}h(x_\ast,t)-1\leq 0$. After a further manipulation, we arrive at
$$ \frac{\partial h}{\partial t}(x,t)+\norm{\cd_W}{\infty}h(x,t)\geq\norm{\cd_W}{\infty}\left(\inf_{x\in [0,\,1]}h(x,t)\right)^2, $$
whence, multiplying both sides by $e^{\norm{\cd_W}{\infty}t}$ and integrating over $[0,\,t]$, with $t>0$,
\begin{align*}
    e^{\norm{\cd_W}{\infty}t}h(x,t) &\geq
        h(x,0)+\norm{\cd_W}{\infty}\int_0^te^{\norm{\cd_W}{\infty}\tau}\left(\inf_{x\in [0,\,1]}h(x,\tau)\right)^2\,d\tau \\
    &= 1+\norm{\cd_W}{\infty}\int_0^te^{\norm{\cd_W}{\infty}\tau}\left(\inf_{x\in [0,\,1]}h(x,\tau)\right)^2\,d\tau
\end{align*}
as $h(x,0)=\int_Kf^0(x,v)\,dv=1$ by assumption. Since the right-hand side is independent of $x$, the same inequality remains valid upon taking the infimum over $x$ on the left-hand side. Altogether, defining $u(t):=e^{\norm{\cd_W}{\infty}t}\inf_{x\in [0,\,1]}h(x,t)$, we obtain:
$$ u(t)\geq 1+\norm{\cd_W}{\infty}\int_0^te^{-\norm{\cd_W}{\infty}\tau}u^2(\tau)\,d\tau, $$
whence the (reversed) Gr\"{o}nwall-Bihari inequality, cf.~\cite{bihari1956AMASH,lasalle1949AM}, yields $u(t)\geq e^{\norm{\cd_W}{\infty}t}$, and therefore $\inf_{x\in [0,\,1]}h(x,t)\geq 1$, for $t>0$. We conclude that $h(x,t)\geq 1$, and consequently that $h(x,t)=1$, for every $x\in [0,\,1]$ and $t>0$, which establishes the claim.
\end{proof}

Theorem~\ref{theo:suppf.K} implies that the homogeneous Boltzmann-type equations~\eqref{eq:Boltz-type.WN} and~\eqref{eq:Boltz-type.W} remain valid in the above form, namely with integrals over $\R$ rather than over $K$, even when the trait $v$ is confined to $K$, since, under assumption~\ref{ass:Psi.K}, the distribution functions $f_N$, $f$ are automatically supported in $K$. Consequently, the conclusion of Theorem~\ref{theo:dense_graph.Lip} continues to hold unchanged for $v\in K$.

\begin{remark}
A noteworthy case in which the conclusions above apply is for linear interaction rules~\eqref{eq:int_rules},~\eqref{eq:lin_int} with $p,\,q\geq 0$ and $K=\R_+$. This setting is typical, for example, of models for wealth distribution and for the transmission of infectious diseases, see e.g.,~\cite{bisi2017BUMI,bisi2009CMS,dimarco2020PRE,lanchier2018JSP,loy2021KRM,loy2021MBE}.
\end{remark}

\section{On the relaxation to equilibrium}
\label{sect:relax_equil}
In the kinetic theory of multi-agent systems, equilibrium solutions to Boltzmann-type equations describe the self-organised configurations emerging spontaneously from mutual pairwise interactions among the agents. Conceptually, they are the analogue of the \textit{Maxwellian distribution} in the classical kinetic theory of gases.

In this section, we address the relaxation to equilibrium of solutions to the homogeneous Boltzmann-type equation~\eqref{eq:Boltz-type.W} on a dense graph derived in Section~\ref{sect:kin_eq.graph}. To be definite, we consider the class of linear symmetric interaction rules~\eqref{eq:int_rules},~\eqref{eq:lin_int}, which allow for quite detailed analytical investigations of the resulting kinetic equation.

The relaxation to equilibrium of the solutions to homogeneous Boltzmann-type equations with linear interactions may be conveniently approached by means of the Fourier distances, which in several cases are contractive in time on the solutions of those equations under suitable assumptions on the interaction coefficients $p,\,q$, cf.~\cite{carrillo2007RMUP}.

In parallel to the family of spaces $\cP_s(\R)$ introduced in Section~\ref{sect:lin_int}, for subsequent purposes it is useful to introduce also the set $\cP_{s,\alpha,\cM_{s+\alpha}}(\R)$ with $\alpha>0$. This is defined as the subset of $\cP_{s+\alpha}(\R)$ made of probability measures $\mu$ which have prescribed moments up to the order $s$ and are such that $\int_\R\abs{v}^{s+\alpha}\,d\mu(v)$ is uniformly bounded with respect to $\mu$ by a constant $\cM_{s+\alpha}\geq 0$. The reason why such a subset is relevant is expressed by the following result, see again~\cite{carrillo2007RMUP}:
\begin{lemma} \label{lemma:P.complete_metric_space}
The set $\cP_{s,\alpha,\cM_{s+\alpha}}(\R)$, endowed with the distance $d_s$, is a complete metric space.
\end{lemma}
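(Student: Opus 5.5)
We have to show two things: that $d_s$ is a genuine metric on $\cP_{s,\alpha,\cM_{s+\alpha}}(\R)$, and that this metric space is complete. We work with the natural convention that all measures in the set share the same prescribed moments $m_0=1,\,m_1,\,\dots,\,m_s$.

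\emph{Metric properties.} For $s\geq 1$ Lemma~\ref{lemma:ds.finite} makes $d_s$ finite on the set, since all its elements have equal moments up to order $s-1$. Symmetry and the triangle inequality follow at once from the definition as a supremum of $\abs{\hat\nu-\hat\mu}/\abs{\xi}^s$. If $d_s(\mu,\nu)=0$, then $\hat\mu=\hat\nu$ on $\R\setminus\{0\}$, and by continuity on all of $\R$. Uniqueness of the Fourier transform of finite Borel measures then gives $\mu=\nu$.

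\emph{Completeness.} Take a $d_s$-Cauchy sequence $\{\mu_n\}$.

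First, for every $\xi$ we have $\abs{\hat\mu_n(\xi)-\hat\mu_m(\xi)}\leq\abs{\xi}^s d_s(\mu_n,\mu_m)$. So $\hat\mu_n$ converges locally uniformly to some continuous function $\Phi$ with $\Phi(0)=1$.

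Second, the uniform bound $\int\abs{v}^{s+\alpha}\,d\mu_n\leq\cM_{s+\alpha}$ together with Chebyshev's inequality gives tightness of $\{\mu_n\}$. By Prokhorov's theorem, a subsequence converges narrowly to some $\mu\in\cP(\R)$. Then $\hat\mu=\Phi$, so by L\'evy's continuity theorem the whole sequence converges narrowly to $\mu$.

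Third, lower semicontinuity of $\mu\mapsto\int\abs{v}^{s+\alpha}\,d\mu$ under narrow convergence gives $\int\abs{v}^{s+\alpha}\,d\mu\leq\cM_{s+\alpha}$.

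Fourth, since $s+\alpha>s$, the functions $v^k$ with $k\leq s$ are uniformly integrable along the sequence. Hence the moments of $\mu$ up to order $s$ are the limits of those of $\mu_n$, which are the prescribed values. Therefore $\mu\in\cP_{s,\alpha,\cM_{s+\alpha}}(\R)$.

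Finally, we show $d_s(\mu_n,\mu)\to0$. Fix $\varepsilon>0$ and choose $n_0$ such that $d_s(\mu_n,\mu_m)<\varepsilon$ for all $n,m\geq n_0$. For each fixed $\xi\neq0$,
$$\frac{\abs{\hat\mu_n(\xi)-\hat\mu_m(\xi)}}{\abs{\xi}^s}<\varepsilon.$$
Letting $m\to\infty$ and using pointwise convergence $\hat\mu_m(\xi)\to\hat\mu(\xi)$ yields
$$\frac{\abs{\hat\mu_n(\xi)-\hat\mu(\xi)}}{\abs{\xi}^s}\leq\varepsilon.$$
Taking the supremum over $\xi$ gives $d_s(\mu_n,\mu)\leq\varepsilon$ for all $n\geq n_0$. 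This pointwise-then-supremum argument needs no control near $\xi=0$.

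The main obstacle is identifying the limit inside the set, namely preserving the prescribed moments up to order $s$. This is exactly where the extra $\alpha>0$ is needed: without it, mass could escape to infinity and the $s$-th moment could fail to pass to the limit. The uniform integrability step above is what handles this.
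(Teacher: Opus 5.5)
Your proof is correct and complete: the metric axioms, the locally uniform convergence of $\hat\mu_n$, tightness from the uniform bound on $\int\abs{v}^{s+\alpha}\,d\mu_n$, identification of the narrow limit via Prokhorov and L\'evy, preservation of the prescribed moments up to order $s$ by uniform integrability (the step that uses $\alpha>0$), and the fixed-$\xi$ passage to the limit in the Cauchy estimate all hold. The paper gives no proof of this lemma and only cites Carrillo--Toscani; your argument is, as far as I recall, the standard proof behind that reference, so you have in effect supplied the proof the paper omits.
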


Equation~\eqref{eq:Boltz-type.W.Fourier}, namely the Fourier-transformed form of~\eqref{eq:Boltz-type.W}, will serve as the basis for the subsequent analysis.

\subsection{Long-time asymptotics}
\label{sect:long-time_asymptotics}
We begin by addressing the long-time behaviour of the solutions to~\eqref{eq:Boltz-type.W}. In particular, for a certain $s\in\N$, $s>0$, to be subsequently discussed, we assume that~\eqref{eq:Boltz-type.W} admits solutions in $\cP_s(\R)$ and we let $f(x,\cdot,t),\,g(x,\cdot,t)\in\cP_s(\R)$ be any two of them for which the distance $d_s$ is well defined, i.e. finite. According to Lemma~\ref{lemma:ds.finite}, this means that $f(x,\cdot,t),\,g(x,\cdot,t)$ have the same moments at least up to the order $s-1$ for every $x\in [0,\,1]$ and every $t>0$. We denote by $f^0(x,\cdot),\,g^0(x,\cdot)\in\cP_s(\R)$ the respective initial conditions.

We begin by establishing the following \textit{a priori} estimate:
\begin{theorem} \label{theo:ds}
In the said assumptions, it results
$$ \sup_{x\in [0,\,1]}d_s(f(x,t),g(x,t))\leq\sup_{x\in [0,\,1]}d_s(f^0(x),g^0(x))e^{\left(\ave{p^s+q^s}\norm{\cd_W}{\infty}-\dWinf\right)t},
    \qquad \forall\,t>0. $$
\end{theorem}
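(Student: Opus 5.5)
We will work entirely with the Fourier-transformed equation~\eqref{eq:Boltz-type.W.Fourier}, written for both $\hat f$ and $\hat g$. Subtracting the two and dividing by $\abs{\xi}^s$, we split the quadratic gain term in the standard way:
$$ \hat f(x,p\xi)\hat f(x_\ast,q\xi)-\hat g(x,p\xi)\hat g(x_\ast,q\xi)
   =\bigl(\hat f(x,p\xi)-\hat g(x,p\xi)\bigr)\hat f(x_\ast,q\xi)+\hat g(x,p\xi)\bigl(\hat f(x_\ast,q\xi)-\hat g(x_\ast,q\xi)\bigr). $$
This gives
\begin{align*}
\frac{\partial}{\partial t}\frac{\hat f(x,\xi,t)-\hat g(x,\xi,t)}{\abs{\xi}^s}
 +\cd_W(x)\frac{\hat f(x,\xi,t)-\hat g(x,\xi,t)}{\abs{\xi}^s}
 &=\int_0^1W(x,x_\ast)\ave*{p^s\frac{\hat f(x,p\xi,t)-\hat g(x,p\xi,t)}{\abs{p\xi}^s}\hat f(x_\ast,q\xi,t)}\,dx_\ast \\
 &\phantom{=}+\int_0^1W(x,x_\ast)\ave*{q^s\hat g(x,p\xi,t)\frac{\hat f(x_\ast,q\xi,t)-\hat g(x_\ast,q\xi,t)}{\abs{q\xi}^s}}\,dx_\ast .
\end{align*}
On the set where $p=0$ or $q=0$ the corresponding term vanishes identically, so dividing by $\abs{p\xi}^s$ or $\abs{q\xi}^s$ is harmless. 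We then use $\abs{\hat f},\abs{\hat g}\le 1$ and bound each difference quotient by $D(t):=\sup_{x}d_s(f(x,t),g(x,t))$. With $\int_0^1W(x,x_\ast)\,dx_\ast=\cd_W(x)\le\norm{\cd_W}{\infty}$, the right-hand side is at most $\ave{p^s+q^s}\norm{\cd_W}{\infty}D(t)$ in modulus.

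Next, we multiply by the integrating factor $e^{\cd_W(x)t}$, take moduli (using $\partial_t\abs{\cdot}\le\abs{\partial_t(\cdot)}$, or more safely integrating the Duhamel formula directly), and integrate over $[0,t]$:
$$ e^{\cd_W(x)t}\frac{\abs{\hat f-\hat g}(x,\xi,t)}{\abs{\xi}^s}\le d_s(f^0(x),g^0(x))+\ave{p^s+q^s}\norm{\cd_W}{\infty}\int_0^te^{\cd_W(x)\tau}D(\tau)\,d\tau. $$
Multiplying by $e^{-\cd_W(x)t}$ and using $e^{-\cd_W(x)t}\le e^{-\dWinf t}$ and $e^{\cd_W(x)(\tau-t)}\le e^{\dWinf(\tau-t)}$, exactly as in the proof of Lemma~\ref{lemma:m1N}, makes the right-hand side independent of $x$ and $\xi$. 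Taking the supremum over $\xi\neq0$ and then over $x$ gives
$$ e^{\dWinf t}D(t)\le D(0)+\ave{p^s+q^s}\norm{\cd_W}{\infty}\int_0^te^{\dWinf\tau}D(\tau)\,d\tau. $$
Gr\"onwall's inequality then yields the claimed bound $D(t)\le D(0)e^{(\ave{p^s+q^s}\norm{\cd_W}{\infty}-\dWinf)t}$.

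The main delicate point is ensuring that $D(t)$ is finite and measurable in $t$, so that Gr\"onwall applies. Finiteness follows from Lemma~\ref{lemma:ds.finite} under the standing assumption that $f,g$ share moments up to order $s-1$. If $D(0)=+\infty$ the statement is trivial, so we may assume $D(0)<\infty$. Should local-in-time finiteness of $D(t)$ be in doubt, we would first run the argument with $D$ replaced by its running supremum $\sup_{\tau\le t}D(\tau)$, or argue on a truncated quantity $\sup_{\abs{\xi}\ge\varepsilon}$ and let $\varepsilon\to0$.

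A secondary subtlety is that the supremum over $\xi$ of an integrated inequality commutes correctly. This is fine because the bound is obtained pointwise in $\xi$ before taking suprema, and the right-hand side is already $\xi$-independent.
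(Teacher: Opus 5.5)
Your proposal is correct and follows essentially the same route as the paper: the same splitting of the gain term after dividing by $\abs{\xi}^s$, the integrating factor $e^{\cd_W(x)t}$ replaced by $e^{\dWinf t}$ exactly as in Lemma~\ref{lemma:m1N}, the supremum over $\xi$ and then over $x$, and Gr\"onwall. The paper keeps the terms $\ave{p^s}d_s(f(x,t),g(x,t))$ and $\ave{q^s}\int_0^1W(x,x_\ast)d_s(f(x_\ast,t),g(x_\ast,t))\,dx_\ast$ separate for one more line before bounding them by the supremum; unlike you, it does not discuss the degenerate cases $p=0$, $q=0$ or the finiteness and measurability of $D(t)$.
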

\begin{proof}
The proof proceeds largely along the same lines as that of Theorem~\ref{theo:dense_graph.linear}, formally with $W_N$ replaced by $W$ and $f_N$ by $g$. However, instead of dividing by $\abs{\xi}$, we now divide by $\abs{\xi}^s$, so that~\eqref{eq:dense_graph_limit.proof} becomes:
\begin{multline*}
    \frac{\partial}{\partial t}\left(e^{\cd_W(x)t}\frac{\abs{\hat{g}(x,\xi,t)-\hat{f}(x,\xi,t)}}{\abs{\xi}^s}\right) \\
    \leq e^{\cd_W(x)t}\left[\norm{\cd_W}{\infty}\ave{p^s}d_s(f(x,t),g(x,t))+\ave{q^s}\int_0^1W(x,x_\ast)d_s(f(x_\ast,t),g(x_\ast,t))\,dx_\ast\right].
\end{multline*}
We now integrate both sides over $[0,\,t]$, with $t>0$, and treat $e^{\cd_W(x)t}$ analogously to the proof of Lemma~\ref{lemma:m1N}:
\begin{align*}
    e^{\dWinf t}\frac{\abs{\hat{g}(x,\xi,t)-\hat{f}(x,\xi,t)}}{\abs{\xi}^s} &\leq d_s(f^0(x),g^0(x)) \\
    &\phantom{\leq} +\int_0^te^{\dWinf\tau}\biggl[\norm{\cd_W}{\infty}\ave{p^s}d_s(f(x,\tau),g(x,\tau)) \\
    &\phantom{\leq} \qquad+\ave{q^s}\int_0^1W(x,x_\ast)d_s(f(x_\ast,\tau),g(x_\ast,\tau))\,dx_\ast\biggr]d\tau \\
    &\leq \sup_{x\in [0,\,1]}d_s(f^0(x),g^0(x)) \\
    &\phantom{\leq} +\ave{p^s+q^s}\norm{\cd_W}{\infty}\int_0^te^{\dWinf\tau}\sup_{x\in [0,\,1]}d_s(f(x,\tau),g(x,\tau))\,d\tau.
\end{align*}
Taking the supremum over $\xi\in\R\setminus\{0\}$ and then over $x\in [0,\,1]$ on the left hand-side produces
\begin{multline*}
    e^{\dWinf t}\sup_{x\in [0,\,1]}d_s(f(x,t),g(x,t))\leq\sup_{x\in [0,\,1]}d_s(f^0(x),g^0(x)) \\
    +\ave{p^s+q^s}\norm{\cd_W}{\infty}\int_0^te^{\dWinf\tau}\sup_{x\in [0,\,1]}d_s(f(x,\tau),g(x,\tau))\,d\tau,
\end{multline*}
whence, applying Gr\"{o}nwall's inequality to the function $t\mapsto e^{\dWinf t}\sup_{x\in [0,\,1]}d_s(f(x,t),g(x,t))$, we obtain the thesis.
\end{proof}

We observe that Theorem~\ref{theo:ds} implies, as a by-product, uniqueness of the solution to~\eqref{eq:Boltz-type.W}. Moreover, it enables us to characterise precisely the asymptotic behaviour of the solutions under suitable assumptions on the coefficients of the linear interaction rules~\eqref{eq:int_rules},~\eqref{eq:lin_int}. More specifically, we have the following:
\begin{theorem} \label{theo:M}
In the same assumptions as those of Theorem~\ref{theo:ds}, if $\dWinf>0$ and
\begin{equation}
    \ave{p^s+q^s}<\frac{\dWinf}{\norm{\cd_W}{\infty}}
    \label{eq:ps+qs}
\end{equation}
there exists at most one equilibrium distribution $M=M(x,v)$ of~\eqref{eq:Boltz-type.W}, with $M(x,\cdot)\in\cP_s(\R)$ for every $x\in [0,\,1]$, which is globally asymptotically stable.
\end{theorem}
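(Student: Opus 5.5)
The plan is to read Theorem~\ref{theo:M} as a direct consequence of the contraction estimate of Theorem~\ref{theo:ds}. Set
$$ \lambda:=\dWinf-\ave{p^s+q^s}\norm{\cd_W}{\infty}. $$
Under $\dWinf>0$ and~\eqref{eq:ps+qs}, this gives $\lambda>0$, so Theorem~\ref{theo:ds} yields
$$ \sup_{x\in [0,\,1]}d_s(f(x,t),g(x,t))\leq e^{-\lambda t}\sup_{x\in [0,\,1]}d_s(f^0(x),g^0(x)) $$
for any two admissible solutions. Admissible means they belong to $\cP_s(\R)$, share moments up to order $s-1$, and have $\sup_x d_s(f^0(x),g^0(x))<+\infty$.

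\textbf{Uniqueness.} Suppose $M_1$ and $M_2$ are two equilibria in the admissible class. Since both are stationary solutions of~\eqref{eq:Boltz-type.W}, the estimate applied to them as time-independent solutions gives $D\leq e^{-\lambda t}D$ for all $t>0$, where $D=\sup_x d_s(M_1(x),M_2(x))$. Hence $D=0$, so $d_s(M_1(x),M_2(x))=0$ for every $x$. Because $d_s$ is a genuine distance (the Fourier transform determines the measure), this gives $M_1(x,\cdot)=M_2(x,\cdot)$ for all $x$.

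\textbf{Global asymptotic stability.} Let $f$ be any solution whose initial datum $f^0$ is admissible with respect to $M$. Taking $g=M$ in the contraction estimate, and using $g^0=M$, we get
$$ \sup_{x}d_s(f(x,t),M(x))\leq e^{-\lambda t}\sup_x d_s(f^0(x),M(x))\to0 $$
exponentially as $t\to+\infty$.

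The step needing care is the admissibility condition, i.e.\ ensuring that $d_s(f(x,t),M(x))$ is finite, which the hypotheses of Theorem~\ref{theo:ds} presuppose. I would argue as follows.
\begin{itemize}
\item The moments of order $n\leq s-1$ satisfy closed linear equations, obtained by taking $\varphi(v)=v^n$ in~\eqref{eq:Boltz-type.W}.
\item So $f$ and $M$ have equal moments up to order $s-1$ for all times whenever they do initially, or whenever those moments are conserved or converge to the same stationary values. This is the sense in which ``in the same assumptions as Theorem~\ref{theo:ds}'' is to be read.
\item Given equal moments, Lemma~\ref{lemma:ds.finite} provides finiteness of $d_s$ at each $x$.
\item A uniform-in-$x$ bound on $\sup_x d_s(f^0(x),M(x))$ follows from a Taylor expansion argument as in~\eqref{eq:Taylor_exp.proof-1}, together with uniform bounds on the $s$-th moments.
\end{itemize}

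The theorem asserts only ``at most one'' equilibrium, so existence is not required here. If it were wanted, the natural route would be a Banach fixed point argument for the solution semigroup on the complete metric space of Lemma~\ref{lemma:P.complete_metric_space}. That route would need uniform bounds on a moment of order $s+\alpha$, which is the genuinely hard part and is deliberately avoided.
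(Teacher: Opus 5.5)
Your argument is correct and is essentially the paper's proof. Apply Theorem~\ref{theo:ds} to two stationary solutions to get $D\le e^{-\lambda t}D$ with $\lambda>0$, hence $D=0$; then take $g=M$ to get convergence of any admissible solution to $M$. Your extra discussion of admissibility (finiteness of $d_s$) goes beyond the paper, which simply absorbs it into the hypotheses of Theorem~\ref{theo:ds}. One small correction: for orders $n\geq 2$ the moment equations are triangular rather than linear, since they contain products of lower-order moments, but this does not affect your conclusion.
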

\begin{proof}
Equilibrium distributions are constant-in-time solutions of~\eqref{eq:Boltz-type.W}. If, for every $x\in [0,\,1]$, we assume that $M_1(x,\cdot),\,M_2(x,\cdot)\in\cP_s(\R)$ are any two of them, from Theorem~\ref{theo:ds} we deduce
$$ \sup_{x\in [0,\,1]}d_s(M_1(x),M_2(x))\leq \sup_{x\in [0,\,1]}d_s(M_1(x),M_2(x))e^{\left(\ave{p^s+q^s}\norm{\cd_W}{\infty}-\dWinf\right)t},
    \qquad \forall\,t>0. $$
But under~\eqref{eq:ps+qs} we have $e^{\left(\ave{p^s+q^s}\norm{\cd_W}{\infty}-\dWinf\right)t}<1$ for $t>0$, therefore
$$ \sup_{x\in [0,\,1]}d_s(M_1(x),M_2(x))=0 $$
and the uniqueness follows.

Moreover, still owing to~\eqref{eq:ps+qs}, for any solution $f(x,\cdot,t)\in\cP_s(\R)$ to~\eqref{eq:Boltz-type.W} Theorem~\ref{theo:ds} implies
$$ \lim_{t\to +\infty}\sup_{x\in [0,\,1]}d_s(f(x,t),M(x))=0 $$
independently of the initial condition $f^0$, whence the global stability and attractiveness of $M$.
\end{proof}

We now discuss the choice of the index $s$ mentioned at the beginning of this section.

\subsubsection{The case~\texorpdfstring{$\boldsymbol{s=1}$}{}}
Since every solution $f$ of~\eqref{eq:Boltz-type.W} is such that $\int_\R f(x,v,t)\,dv=1$ for every $x\in [0,\,1]$ and every $t>0$, cf. the beginning of the proof of Theorem~\ref{theo:dense_graph.linear}, according to Lemma~\ref{lemma:ds.finite} the Fourier distance $d_1$ is always well defined for any pair of solutions of~\eqref{eq:Boltz-type.W}. Moreover, from Lemma~\ref{lemma:m1N} applied formally to $f$ we know that
\begin{equation}
    \norm{\cm_1(t)}{\infty}\leq\norm{\cm_1^0}{\infty}e^{\left(\ave{p+q}\norm{\cd_W}{\infty}-\dWinf\right)t},
    \label{eq:m1}
\end{equation}
which shows that $\cm_1$ is bounded for every $x\in [0,\,1]$ and every $t>0$. Therefore, we conclude that $f(x,\cdot,t)\in\cP_1(\R)$ and that Theorems~\ref{theo:ds},~\ref{theo:M} apply with $s=1$. Specifically, if
$$ \ave{p+q}<\frac{\dWinf}{\norm{\cd_W}{\infty}} $$
there exists at most one equilibrium distribution $M(x,\cdot)\in\cP_1(\R)$, which is globally asymptotically stable. In addition to this,~\eqref{eq:m1} shows that, in this case, $\cm_1(x,t)\to 0$ uniformly in $x$ as $t\to +\infty$. Since the first moment
$$ m_1(x,t):=\int_\R vf(x,v,t)\,dv $$
is such that $\abs{m_1(x,t)}\leq\cm_1(x,t)$ for all $(x,\,t)\in [0,\,1]\times [0,\,+\infty)$, the same is true also for $m_1$, which enables us to conclude that the prospective equilibrium solution $M$ satisfies
$$ \int_\R vM(x,v)\,dv=0, \qquad \forall\,x\in [0,\,1]. $$

\subsubsection{The case~\texorpdfstring{$\boldsymbol{s=2}$}{}}
Substituting $\varphi(v)=v$ into~\eqref{eq:Boltz-type.W}, we observe that the first moment $m_1$ introduced above satisfies the integro-differential equation
$$ \frac{\partial m_1}{\partial t}(x,t)=(\ave{p}-1)\cd_W(x)m_1(x,t)+\ave{q}\int_0^1W(x,x_\ast)m_1(x_\ast,t)\,dx_\ast. $$
By linearity, it is not difficult to see that this equation admits at most one solution for every prescribed initial condition $m_1^0\in L^\infty(0,\,1)$. In particular, if $m_1,\,n_1$ are two solutions issuing from the initial conditions $m_1^0,\,n_1^0$, respectively, the following continuous dependence estimate holds true:
$$ \norm{n_1(t)-m_1(t)}{\infty}\leq \norm{n_1^0-m_1^0}{\infty}e^{\left(\ave{p+q}\norm{\cd_W}{\infty}-\dWinf\right)t},
    \qquad\forall\,t>0. $$
By direct substitution into the previous equation, we discover that if $\ave{p+q}=1$ and the initial condition is constant in $x$, say $m_1^0(x)=\bar{m}_1^0\in\R$ for every $x\in [0,\,1]$, the unique solution is $m_1(x,t)=\bar{m}_1^0$ for every $(x,\,t)\in [0,\,1]\times (0,\,+\infty)$.

Hence, under the assumption $\ave{p+q}=1$ together with an initial condition with fixed constant first moment, the solutions of~\eqref{eq:Boltz-type.W} have equal moments up to the order $1$ for every $x\in [0,\,1]$ and every $t>0$. Therefore, owing to Lemma~\ref{lemma:ds.finite}, their Fourier distance $d_2$ is well defined.

Additionally, such solutions belong to $\cP_2(\R)$ for every $(x,\,t)\in [0,\,1]\times (0,\,+\infty)$ if the initial condition possesses a second moment bounded in $[0,\,1]$. In fact, substituting $\varphi(v)=v^2$ in~\eqref{eq:Boltz-type.W} and invoking the arbitrariness of $\phi$ we see that the second moment
$$ m_2(x,t)=\cm_2(x,t)=\int_\R v^2f(x,v,t)\,dv $$
of any solution $f$ of~\eqref{eq:Boltz-type.W} satisfies
\begin{align*}
    \frac{\partial m_2}{\partial t}(x,t) &= (\ave{p^2}-1)\cd_W(x)m_2(x,t) \\
    &\phantom{\leq} +2\ave{pq}m_1(x,t)\int_0^1W(x,x_\ast)m_1(x_\ast,t)\,dx_\ast
        +\ave{q^2}\int_0^1W(x,x_\ast)m_2(x_\ast,t)\,dx_\ast,
\end{align*}
whence, arguing along the same lines as for $\cm_1$ and making use of Gr\"{o}nwall's inequality in the form provided in~\cite[Lemma~6.2]{loy2026RMUP},
\begin{align*}
    \norm{m_2(t)}{\infty} &\leq \norm{m_2^0}{\infty}e^{\left(\ave{p^2+q^2}\norm{\cd_W}{\infty}-\dWinf\right)t} \\
    &\phantom{\leq} +2\ave{pq}\norm{\cd_W}{\infty}\int_0^t\norm{m_1(\tau)}{\infty}^2e^{\left(\ave{p^2+q^2}\norm{\cd_W}{\infty}-\dWinf\right)(t-\tau)}\,d\tau.
\end{align*}
For a constant $m_1$, this produces further
\begin{align}
    \begin{aligned}[b]
        \norm{m_2(t)}{\infty} &\leq \norm{m_2^0}{\infty}e^{\left(\ave{p^2+q^2}\norm{\cd_W}{\infty}-\dWinf\right)t} \\
        &\phantom{\leq} +\frac{2\ave{pq}\norm{\cd_W}{\infty}{(\bar{m}_1^0)}^2}{\dWinf-\ave{p^2+q^2}\norm{\cd_W}{\infty}}
            \left(1-e^{\left(\ave{p^2+q^2}\norm{\cd_W}{\infty}-\dWinf\right)t}\right),
    \end{aligned}
    \label{eq:m2}
\end{align}
which shows the uniform boundedness of $m_2$ at all times $t>0$ if $\norm{m_2^0}{\infty}<+\infty$.

Therefore, under the assumptions
$$ \ave{p+q}=1, \qquad m_1^0(x)=\bar{m}_1^0\in\R\quad\forall\,x\in [0,\,1], $$
we conclude that $f(x,\cdot,t)\in\cP_2(\R)$ for all $(x,\,t)\in [0,\,1]\times (0,\,+\infty)$ and that Theorems~\ref{theo:ds},~\ref{theo:M} apply with $s=2$. In particular, if
$$ \ave{p^2+q^2}<\frac{\dWinf}{\norm{\cd_W}{\infty}} $$
there exists at most one equilibrium distribution $M(x,\cdot)\in\cP_2(\R)$ with mean $\bar{m}_1^0$, which is globally asymptotically stable.

Unlike the case $s=1$ discussed before, now the mean of $M$ need not be zero. On the other hand, the estimate~\eqref{eq:m2} provides a precise characterisation of the second moment -- viz. the total ``energy'' -- of $M$ only if $\bar{m}_1^0=0$: in such a case, it gives $m_2(x,t)\to 0$ uniformly in $x$ when $t\to +\infty$ and, ultimately, 
$$ M(x,v)=\delta_0(v), \qquad \forall\,x\in [0,\,1], $$
where $\delta_0$ denotes the Dirac delta centred at $0$. Conversely, if $\bar{m}_1^0\neq 0$ then $M$ can be, in principle, any probability distribution with non-zero mean and energy. We will be more precise about this in the next section.

\subsection{Existence of equilibrium distributions}
The strong form of~\eqref{eq:Boltz-type.W} with linear symmetric interaction rules~\eqref{eq:int_rules},~\eqref{eq:lin_int} is, for $p\neq q$,
\begin{equation}
    \frac{\partial f}{\partial t}(x,v,t)=\int_0^1\int_\R\frac{1}{\abs{p^2-q^2}}W(x,x_\ast)\ave{f(x,\pr{v},t)f(x_\ast,\pr{v}_\ast,t)}\,dv_\ast\,dx_\ast
        -\cd_W(x)f(x,v,t),
    \label{eq:Boltz.W-strong}
\end{equation}
where $\abs{p^2-q^2}$ is the Jacobian factor of the transformation $(v,\,v_\ast)\mapsto (v',\,v_\ast')$ and $\pr{v},\,\pr{v}_\ast$ express the \textit{inverse interaction rules}:
$$ \pr{v}=\frac{p}{p^2-q^2}v-\frac{q}{p^2-q^2}v_\ast, \qquad
    \pr{v}_\ast=\frac{p}{p^2-q^2}v_\ast-\frac{q}{p^2-q^2}v, $$
i.e. those yielding the pre-interaction traits $\pr{v},\,\pr{v}_\ast$ as functions of the post-interaction ones $v,\,v_\ast$.

The right-hand side of~\eqref{eq:Boltz.W-strong} is the collision operator $Q(f,f)$. In particular, its first term is the so-called \textit{gain operator} $Q^+(f,f)$:
$$ Q^+(f,f)(x,v,t):=\int_0^1\int_\R\frac{1}{\abs{p^2-q^2}}W(x,x_\ast)\ave{f(x,\pr{v},t)f(x_\ast,\pr{v}_\ast,t)}\,dv_\ast\,dx_\ast, $$
which is such that
\begin{equation}
    \int_\R\varphi(v)Q^+(f,f)(x,v,t)\,dv=\int_0^1\int_\R\int_\R W(x,x_\ast)\ave{\varphi(v')}f(x,v,t)f(x_\ast,v_\ast,t)\,dv\,dv_\ast\,dx_\ast
    \label{eq:Q+.weak}
\end{equation}
for every $\varphi:\R\to\R$.

Equilibrium distributions $M$ are solutions of~\eqref{eq:Boltz.W-strong} independent of $t$, thus such that $\frac{\partial M}{\partial t}\equiv 0$. Therefore, they satisfy
\begin{equation}
    M(x,v)=\frac{1}{\cd_W(x)}Q^+(M,M)(x,v),
    \label{eq:fixed_point}
\end{equation}
i.e. they can be regarded as fixed points of the operator $\frac{1}{\cd_W(x)}Q^+$.

To seek such fixed points, it is convenient to formulate the problem in a complete metric space. For this, we take advantage of Lemma~\ref{lemma:P.complete_metric_space} distinguishing the cases $s=1$ and $s=2$.

\subsubsection{The case~\texorpdfstring{$\boldsymbol{s=1}$}{}}
From Section~\ref{sect:long-time_asymptotics}, we expect that for $s=1$ the prospective equilibrium solution has constant zero mean. Hence, the idea is to work in $\cP_{1,\alpha,\cM_{1+\alpha}}(\R)$ for some $\alpha>0$, fixing $m_1\equiv 0$. To choose $\alpha$, we rely on the following result:
\begin{lemma}
If $\ave{p^2+q^2}<1$, any solution to~\eqref{eq:fixed_point} with $m_1\equiv 0$ has also zero second moment.
\end{lemma}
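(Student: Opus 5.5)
Test the fixed-point equation~\eqref{eq:fixed_point} against $\varphi(v)=v^2$, using the weak form~\eqref{eq:Q+.weak} of the gain operator. Let $M$ solve~\eqref{eq:fixed_point} with $m_1(x)=\int_\R vM(x,v)\,dv\equiv 0$, and put $m_2(x):=\int_\R v^2M(x,v)\,dv$. Multiplying~\eqref{eq:fixed_point} by $\cd_W(x)$ and integrating against $v^2$, with $v'=pv+qv_\ast$, gives
$$ \cd_W(x)m_2(x)=\ave{p^2}\cd_W(x)m_2(x)+2\ave{pq}m_1(x)\int_0^1W(x,x_\ast)m_1(x_\ast)\,dx_\ast+\ave{q^2}\int_0^1W(x,x_\ast)m_2(x_\ast)\,dx_\ast. $$
This is just the stationary version of the second-moment equation derived in Section~\ref{sect:long-time_asymptotics}. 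Since $m_1\equiv 0$, the cross term vanishes, leaving
$$ (1-\ave{p^2})\cd_W(x)m_2(x)=\ave{q^2}\int_0^1W(x,x_\ast)m_2(x_\ast)\,dx_\ast. $$

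Next, integrate this identity over $x\in[0,\,1]$. By the symmetry of $W$, $\int_0^1W(x,x_\ast)\,dx=\cd_W(x_\ast)$, so the right-hand side becomes $\ave{q^2}\int_0^1\cd_W(x_\ast)m_2(x_\ast)\,dx_\ast$. Setting $E:=\int_0^1\cd_W(x)m_2(x)\,dx\geq 0$, we obtain
$$ (1-\ave{p^2+q^2})E=0. $$
Since $\ave{p^2+q^2}<1$, it follows that $E=0$.

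To conclude, note that $m_2\geq 0$, so $E=0$ forces $\cd_W m_2=0$ almost everywhere. Where $\cd_W(x)>0$, which is implicit in~\eqref{eq:fixed_point} because one divides by $\cd_W$, this gives $m_2(x)=0$. Alternatively, one can run a sup argument instead. Taking the sup over $x$ gives $(1-\ave{p^2})\dWinf\norm{m_2}{\infty}\leq\ave{q^2}\norm{\cd_W}{\infty}\norm{m_2}{\infty}$, but this would require the stronger condition of Theorem~\ref{theo:M}, so the integrated identity is preferable.

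The main obstacle is justifying that $m_2$ is finite, so that testing with $v^2$ is legitimate and $(1-\ave{p^2+q^2})E=0$ is not of the form $\infty\cdot 0$. I would handle this by assuming $M(x,\cdot)\in\cP_2(\R)$ with $m_2$ bounded, which is the framework in which the lemma serves to choose $\alpha=1$. Failing that, I would truncate: test with $\min(v^2,R)$ and pass to the limit $R\to\infty$ by monotone convergence, using $(pv+qv_\ast)^2\leq$ the corresponding quadratic expression.
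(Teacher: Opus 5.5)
Your main argument is correct, given the finite second moment you assume, but it is not the paper's route. The paper does not multiply through by $\cd_W(x)$. It writes $m_2(x)=\ave{p^2}m_2(x)+\frac{\ave{q^2}}{\cd_W(x)}\int_0^1W(x,x_\ast)m_2(x_\ast)\,dx_\ast$ and notes that the last term is $\ave{q^2}$ times a $W(x,\cdot)$-weighted average of $m_2$, hence at most $\ave{q^2}\norm{m_2}{\infty}$. It then concludes from $\norm{m_2}{\infty}\leq\ave{p^2+q^2}\norm{m_2}{\infty}$, which is a maximum-principle argument.

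You instead integrate in $x$ and use the symmetry of $W$, so that the weighted energy $E=\int_0^1\cd_W m_2\,dx$ satisfies $(1-\ave{p^2+q^2})E=0$. Your route needs only $E<\infty$ rather than $m_2\in L^\infty(0,\,1)$. On the other hand, it relies on the symmetry of $W$, which the paper's argument does not, and at first it gives $m_2=0$ only almost everywhere. To get every $x$, note that once $m_2=0$ a.e., the right-hand side of $(1-\ave{p^2})\cd_W(x)m_2(x)=\ave{q^2}\int_0^1W(x,x_\ast)m_2(x_\ast)\,dx_\ast$ vanishes for every $x$; then use $\ave{p^2}<1$ and $\cd_W(x)>0$.

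Two of your side remarks are wrong. First, the sup argument does \emph{not} need the condition of Theorem~\ref{theo:M}. Your loss comes from bounding $\cd_W(x)\geq\dWinf$ on the left and $\int_0^1W(x,x_\ast)m_2(x_\ast)\,dx_\ast\leq\norm{\cd_W}{\infty}\norm{m_2}{\infty}$ on the right separately. The pointwise bound $\int_0^1W(x,x_\ast)m_2(x_\ast)\,dx_\ast\leq\cd_W(x)\norm{m_2}{\infty}$ lets $\cd_W(x)$ cancel, and this is exactly the paper's proof, which works under $\ave{p^2+q^2}<1$ alone.

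Second, the truncation fallback cannot succeed, because a finite second moment does not follow from the hypotheses. Take deterministic $p,\,q>0$ with $p^\gamma+q^\gamma=1$ for some $\gamma\in(1,\,2)$, so that $p^2+q^2<1$. Then the $x$-independent symmetric $\gamma$-stable law, with $\hat{M}(\xi)=e^{-\abs{\xi}^\gamma}$, satisfies~\eqref{eq:fixed_point} for every graphon with $\cd_W>0$. It has zero mean and infinite second moment. So finiteness of $m_2$ has to be a hypothesis. It is one implicitly in the paper's last step $\norm{m_2}{\infty}\leq\ave{p^2+q^2}\norm{m_2}{\infty}$, and in your primary plan.
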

\begin{proof}
Recalling~\eqref{eq:Q+.weak}, we multiply both sides of~\eqref{eq:fixed_point} by $v^2$ and integrate over $v$ to find
\begin{align*}
    m_2(x) &= \frac{1}{\cd_W(x)}\int_0^1\int_\R\int_\R W(x,x_\ast)\ave{(pv+qv_\ast)^2}M(x,v)M(x_\ast,v_\ast)\,dv\,dv_\ast\,dx_\ast \\
    &= \ave{p^2}m_2(x)+\frac{\ave{q^2}}{\cd_W(x)}\int_0^1W(x,x_\ast)m_2(x_\ast)\,dx_\ast \\
    &\leq \ave{p^2+q^2}\norm{m_2}{\infty}
\end{align*}
whence $\norm{m_2}{\infty}\leq\ave{p^2+q^2}\norm{m_2}{\infty}$ and the thesis follows.
\end{proof}
Consequently, under the additional assumption $\ave{p^2+q^2}<1$ we can take $\alpha=1$, $\cM_2=0$ and work in $\cP_{1,1,0}(\R)$ with $m_1\equiv 0$ fixed. This space consists solely of the Dirac probability measure $\delta_0(v)$, the equilibrium distribution to which solutions of~\eqref{eq:Boltz-type.W} converge asymptotically in time in the Fourier distance $d_1$ for $\ave{p+q}<\frac{\dWinf}{\norm{\cd_W}{\infty}}$  -- the so-called \textit{cooling} in the jargon of classical kinetic theory. This is the same equilibrium distribution found in the case $s=2$ with $\bar{m}_1^0\equiv 0$.

\subsubsection{The case~\texorpdfstring{$\boldsymbol{s=2}$}{}}
Fixing $\ave{p+q}=1$ and $\bar{m}_1^0\neq 0$ allows us to obtain more interesting long-time asymptotics than just the cooling. Multiplying~\eqref{eq:fixed_point} by $v^2$ and integrating over $v$ by means of~\eqref{eq:Q+.weak} we discover
$$ m_2(x)=\ave{p^2}m_2(x)+2\ave{pq}{(\bar{m}_1^0)}^2+\frac{\ave{q^2}}{\cd_W(x)}\int_0^1W(x,x_\ast)m_2(x_\ast)\,dx_\ast, $$
which, if $\ave{p^2+q^2}<\frac{\dWinf}{\norm{\cd_W}{\infty}}\leq 1$, admits the constant solution $m_2(x)=\bar{m}_2$ with
$$ \bar{m}_2:=\frac{2\ave{pq}}{1-\ave{p^2+q^2}}{(\bar{m}_1^0)}^2. $$
Therefore, this time prospective solutions to~\eqref{eq:fixed_point} may be sought among the distributions with fixed and constant first and second moments prescribed as specified above (in particular, not necessarily equal to $0$). The idea is thus to work in $\cP_{2,\alpha,\cM_{2+\alpha}}(\R)$ setting $m_1\equiv\bar{m}_1^0$ and $m_2\equiv\bar{m}_2$, provided a convenient value of $\alpha>0$ can be identified. To this end, we rely on the following fact:
\begin{lemma} \label{lemma:m3}
If $\ave{p^3+q^3}<1$, for any solution to~\eqref{eq:fixed_point} with $m_1\equiv\bar{m}_1^0$ and $m_2\equiv\bar{m}_2$ the quantity
$$ \cm_3(x):=\int_\R\abs{v}^3M(x,v)\,dv $$
 is uniformly bounded with respect to $M$.
\end{lemma}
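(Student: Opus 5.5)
We test the fixed-point relation~\eqref{eq:fixed_point} with $\varphi(v)=\abs{v}^3$, using the weak form~\eqref{eq:Q+.weak} of the gain operator. This gives
\begin{equation*}
    \cm_3(x)=\frac{1}{\cd_W(x)}\int_0^1\int_\R\int_\R W(x,x_\ast)\ave{\abs{pv+qv_\ast}^3}M(x,v)M(x_\ast,v_\ast)\,dv\,dv_\ast\,dx_\ast.
\end{equation*}
Since $p,q\geq 0$, we bound the integrand pointwise by
\begin{equation*}
    \abs{pv+qv_\ast}^3\leq(p\abs{v}+q\abs{v_\ast})^3=p^3\abs{v}^3+3p^2q\,v^2\abs{v_\ast}+3pq^2\abs{v}v_\ast^2+q^3\abs{v_\ast}^3.
\end{equation*}
The pure cubic terms reproduce $\cm_3$. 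Integrated against $W(x,x_\ast)/\cd_W(x)$, they give $\ave{p^3}\cm_3(x)+\frac{\ave{q^3}}{\cd_W(x)}\int_0^1W(x,x_\ast)\cm_3(x_\ast)\,dx_\ast$, and this is at most $\ave{p^3+q^3}\norm{\cm_3}{\infty}$.

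The mixed terms involve only moments of order at most two, which are fixed in the class under consideration. We use $\int_\R\abs{v}M\,dv\leq\sqrt{m_2}=\sqrt{\bar{m}_2}$ by Cauchy--Schwarz, together with $\int_\R v^2M\,dv=\bar{m}_2$. Each product then factorises over the two independent integrals in $v$ and $v_\ast$, and the $x_\ast$-integral of $W$ cancels the factor $1/\cd_W(x)$. This yields the bound $3\ave{p^2q+pq^2}\bar{m}_2^{3/2}$, which depends only on the prescribed moments $\bar{m}_1^0$, $\bar{m}_2$ and on the law of $p,q$, not on $M$. Here we need $\cd_W(x)>0$, which holds since $\dWinf>0$ in this regime.

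Taking the supremum over $x$, we obtain
\begin{equation*}
    \norm{\cm_3}{\infty}\leq\ave{p^3+q^3}\norm{\cm_3}{\infty}+3\ave{p^2q+pq^2}\bar{m}_2^{3/2}.
\end{equation*}
If $\norm{\cm_3}{\infty}$ is finite, the assumption $\ave{p^3+q^3}<1$ lets us rearrange this to
\begin{equation*}
    \norm{\cm_3}{\infty}\leq\frac{3\ave{p^2q+pq^2}}{1-\ave{p^3+q^3}}\,\bar{m}_2^{3/2}.
\end{equation*}
The right-hand side is independent of $M$, which is the desired uniform bound; it can serve as $\cM_3$ with $\alpha=1$.

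The main obstacle is the absorption step, which is only legitimate when $\norm{\cm_3}{\infty}<+\infty$. We handle it by restricting, as in the framework of Lemma~\ref{lemma:P.complete_metric_space}, to fixed points with finite third absolute moment, i.e.\ with $M(x,\cdot)\in\cP_3(\R)$ and $\cm_3$ bounded in $x$. If one prefers not to assume this, the alternative is to truncate. Replace $\abs{v}^3$ by $\min\{\abs{v},R\}^3$ in the test function and check that the same algebraic inequality holds up to harmless terms, which works because $\min\{\abs{pv+qv_\ast},R\}\leq p\min\{\abs{v},R\}+q\min\{\abs{v_\ast},R\}$ whenever $p+q\geq 1$. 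Then let $R\to\infty$ by monotone convergence. Apart from this, the remaining steps are direct expansions and factorisations.
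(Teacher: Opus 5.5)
Your main argument is the paper's proof: the same test function $\abs{v}^3$, the bound $\abs{pv+qv_\ast}^3\leq(p\abs{v}+q\abs{v_\ast})^3$, Cauchy--Schwarz $\cm_1\leq\bar{m}_2^{1/2}$, and absorption, giving the same constant $\frac{3\ave{pq(p+q)}}{1-\ave{p^3+q^3}}\bar{m}_2^{3/2}$. The paper also needs $\norm{\cm_3}{\infty}<+\infty$ at the absorption step and does not say so, so your explicit restriction is the correct reading.

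Drop the truncation fallback, however, because its key inequality is false. The claim $\min\{\abs{pv+qv_\ast},R\}\leq p\min\{\abs{v},R\}+q\min\{\abs{v_\ast},R\}$ fails for $p=q=\frac{1}{2}$, $v_\ast=0$, $\abs{v}\geq 2R$: the left side is $R$ and the right side is $R/2$. The step $\min\{ps,R\}\leq p\min\{s,R\}$ requires $p\geq 1$, which $\ave{p+q}=1$ rules out.

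If you want to avoid the a priori assumption, argue a posteriori under the hypotheses of Theorem~\ref{theo:M.s=2}. Let $\bar{M}$ be the fixed point constructed there, and let $M$ be any fixed point with the prescribed first and second moments. Equal first and second moments give $\sup_{x\in[0,\,1]}d_2(M(x),\bar{M}(x))\leq\bar{m}_2<+\infty$. The $d_2$-contraction in that proof uses no third moments, so it forces $M=\bar{M}$, and hence $\norm{\cm_3}{\infty}\leq\cM_3$.
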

\begin{proof}
We multiply both sides of~\eqref{eq:fixed_point} by $\abs{v}^3$ and integrate with respect to $v$, again making use of~\eqref{eq:Q+.weak}:
\begin{align*}
    \cm_3(x) &= \frac{1}{\cd_W(x)}\int_0^1\int_\R\int_\R W(x,x_\ast)\ave{\abs{pv+qv_\ast}^3}M(x,v)M(x_\ast,v_\ast)\,dv\,dv_\ast\,dx_\ast \\
    &\leq \ave{p^3}\cm_3(x)+\frac{1}{\cd_W(x)}\int_0^1W(x,x_\ast)\Bigl(3\bigl(\ave{p^2q}\cm_1(x_\ast)
        +\ave{pq^2}\cm_1(x)\bigr)\bar{m}_2+\ave{q^3}\cm_3(x_\ast)\Bigr)\,dx_\ast.
\end{align*}
The Cauchy--Schwarz inequality implies
$$ \cm_1(x)=\int_\R\abs{v}M(x,v)\,dv\leq{\left(\int_\R v^2M(x,v)\,dv\right)}^{1/2}=\bar{m}_2^{1/2}, $$
therefore the previous estimate specialises further as
\begin{align*}
    \cm_3(x) &\leq \ave{p^3}\cm_3(x)+3\ave{pq(p+q)}\bar{m}_2^{3/2}+\frac{\ave{q^3}}{\cd_W(x)}\int_0^1W(x,x_\ast)\cm_3(x_\ast)\,dx_\ast \\
    &\leq \ave{p^3+q^3}\norm{\cm_3}{\infty}+3\ave{pq(p+q)}\bar{m}_2^{3/2}
\end{align*}
and finally, since $\ave{p^3+q^3}<1$ by assumption,
$$ \norm{\cm_3}{\infty}\leq\frac{3\ave{pq(p+q)}}{1-\ave{p^3+q^3}}\bar{m}_2^{3/2}
    =\frac{3\ave{pq(p+q)}}{1-\ave{p^3+q^3}}{\left(\frac{2\ave{pq}}{1-\ave{p^2+q^2}}\right)}^{3/2}\abs{\bar{m}_1^0}^3. $$
This shows that $\cm_3$ is bounded and that the bound is uniform with respect to any distribution $M$ having the specified first and second moments.
\end{proof}

Owing to Lemma~\ref{lemma:m3}, we can choose $\alpha=1$ and work with distributions in $\cP_{2,1,\cM_3}(\R)$ having fixed first and second moments as set above plus
\begin{equation}
    \cM_3:=\frac{3\ave{pq(p+q)}}{1-\ave{p^3+q^3}}{\left(\frac{2\ave{pq}}{1-\ave{p^2+q^2}}\right)}^{3/2}\abs{\bar{m}_1^0}^3.
    \label{eq:M3}
\end{equation}
Lemma~\ref{lemma:P.complete_metric_space} guarantees that such a space with the distance $d_2$ is complete. We take advantage of such completeness to prove:
\begin{theorem} \label{theo:M.s=2}
Let $\ave{p+q}=1$, $\ave{p^2+q^2}<\frac{\dWinf}{\norm{\cd_W}{\infty}}$ with $\dWinf>0$, and $\ave{p^3+q^3}<1$. Fix $\bar{m}_1^0\in\R$. There exists a unique solution $M(x,\cdot)\in\cP_{2,1,\cM_3}(\R)$ to~\eqref{eq:fixed_point} with mean $\bar{m}_1^0$, the constant $\cM_3$ being fixed as in~\eqref{eq:M3}.
\end{theorem}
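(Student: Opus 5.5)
The plan is to apply the Banach fixed-point theorem to the operator $T(M):=\frac{1}{\cd_W(x)}Q^+(M,M)$ on the set $X$ of measurable families $x\mapsto M(x,\cdot)\in\cP_{2,1,\cM_3}(\R)$ with $m_1\equiv\bar{m}_1^0$ and $m_2\equiv\bar{m}_2$. We equip $X$ with the metric $\sup_{x\in[0,1]}d_2(M_1(x),M_2(x))$. By Lemma~\ref{lemma:ds.finite} this metric is finite, since elements of $X$ share moments up to order $1$. By Lemma~\ref{lemma:P.complete_metric_space}, applied pointwise in $x$ together with the uniform-in-$x$ supremum, $X$ is complete. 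The division by $\cd_W(x)$ is legitimate because $\cd_W(x)\geq\dWinf>0$.

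\textbf{Invariance of $X$ under $T$.} First, $T(M)(x,\cdot)$ is a probability measure: take $\varphi\equiv1$ in~\eqref{eq:Q+.weak}, which gives total mass $\cd_W(x)/\cd_W(x)=1$. Nonnegativity is clear. Next, with $\varphi(v)=v$ the mean is
$$\ave{p}\bar{m}_1^0+\frac{\ave{q}}{\cd_W(x)}\int_0^1W(x,x_\ast)\bar{m}_1^0\,dx_\ast=\ave{p+q}\bar{m}_1^0=\bar{m}_1^0,$$
using $\ave{p+q}=1$. With $\varphi(v)=v^2$ the second moment is
$$\ave{p^2+q^2}\bar{m}_2+2\ave{pq}(\bar{m}_1^0)^2=\bar{m}_2,$$
by the very definition of $\bar{m}_2$. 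Finally, the computation in the proof of Lemma~\ref{lemma:m3} applied to $T(M)$ instead of $M$ gives
$$\cm_3[T(M)]\leq\ave{p^3+q^3}\norm{\cm_3[M]}{\infty}+3\ave{pq(p+q)}\bar{m}_2^{3/2}\leq\ave{p^3+q^3}\cM_3+3\ave{pq(p+q)}\bar{m}_2^{3/2}=\cM_3.$$
The last equality holds precisely because $\cM_3$ in~\eqref{eq:M3} is the fixed point of this affine map. This uses $\ave{p^3+q^3}<1$. Measurability in $x$ is inherited from $W$ and $M$.

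\textbf{Contraction.} In Fourier variables,
$$\widehat{T(M)}(x,\xi)=\frac{1}{\cd_W(x)}\int_0^1W(x,x_\ast)\ave{\hat M(x,p\xi)\hat M(x_\ast,q\xi)}\,dx_\ast.$$
For $M_1,M_2\in X$ we split the difference of products as $\hat M_1(x,p\xi)-\hat M_2(x,p\xi)$ times $\hat M_1(x_\ast,q\xi)$, plus $\hat M_2(x,p\xi)$ times the difference at $(x_\ast,q\xi)$. We then use $\abs{\hat M}\leq1$, divide by $\abs{\xi}^2$, and write $\abs{\xi}^{-2}=p^2\abs{p\xi}^{-2}$ (and similarly with $q$). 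The case $p=0$ or $q=0$ is trivial, since then the corresponding difference term vanishes. This yields
$$\sup_x d_2(TM_1(x),TM_2(x))\leq\ave{p^2}\sup_x d_2+\frac{\ave{q^2}}{\cd_W(x)}\int_0^1W(x,x_\ast)\,dx_\ast\sup_x d_2=\ave{p^2+q^2}\sup_x d_2(M_1(x),M_2(x)).$$
The contraction constant is $\ave{p^2+q^2}<\dWinf/\norm{\cd_W}{\infty}\leq1$. Here the exact cancellation of $\cd_W(x)$ is what makes the estimate clean; one could even use the cruder bound $\norm{\cd_W}{\infty}/\dWinf$ for the second term, and the hypothesis still gives $\ave{p^2}+\ave{q^2}\norm{\cd_W}{\infty}/\dWinf<1$ after a small check.

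\textbf{Conclusion and main obstacle.} Banach's theorem then gives existence and uniqueness of the fixed point in $X$, which is the required $M$. The step I expect to need the most care is the completeness of $X$ with the $\sup_x$-metric, together with the uniqueness claim. For completeness, a Cauchy sequence converges pointwise in $x$ by Lemma~\ref{lemma:P.complete_metric_space}, the limit is measurable in $x$, and convergence is uniform in $x$ by the standard argument. For uniqueness, note that any solution of~\eqref{eq:fixed_point} in $\cP_{2,1,\cM_3}(\R)$ with mean $\bar{m}_1^0$ automatically has second moment $\bar{m}_2$. This follows from the moment equation: the difference $m_2-\bar{m}_2$ satisfies a homogeneous relation whose sup norm contracts by $\ave{p^2+q^2}<1$, exactly as in the zero-mean lemma. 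Hence such a solution lies in $X$, and uniqueness in $X$ implies uniqueness in the stated class.
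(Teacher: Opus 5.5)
Your proposal is correct and follows the paper's proof essentially step by step. You apply Banach's fixed-point theorem to $\frac{1}{\cd_W(x)}Q^+$ on the families with $m_1\equiv\bar m_1^0$, $m_2\equiv\bar m_2$, $\norm{\cm_3}{\infty}\leq\cM_3$, prove invariance by the moment computations of Lemma~\ref{lemma:m3}, and obtain a contraction in $\sup_x d_2$ with constant $\ave{p^2+q^2}$ thanks to the exact cancellation of $\cd_W(x)$.

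The extra details you supply are correct refinements that the paper leaves implicit: mass preservation, completeness in the $\sup_x d_2$ metric, and the fact that any solution with mean $\bar m_1^0$ automatically has second moment $\bar m_2$. Your fixed-point identity for $\cM_3$ is also right; the paper's display has a sign slip, writing $-1$ where $+1$ is meant.
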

\begin{proof}
Owing to the previous discussion, we seek $M$ in the complete metric space $(\cP_{2,1,\cM_3}(\R),\,d_2)$ with
\begin{equation}
    m_1=\bar{m}_1^0, \qquad m_2=\frac{2\ave{pq}}{1-\ave{p^2+q^2}}{(\bar{m}_1^0)}^2, \qquad
        \norm{\cm_3}{\infty}\leq\cM_3.
    \label{eq:m1.m2.m3}
\end{equation}
For this, we apply Banach fixed point theorem to the operator $\frac{1}{\cd_W(x)}Q^+$, cf.~\eqref{eq:fixed_point}.
\begin{enumerate}[label=(\roman*)]
\item First, we check that $\frac{1}{\cd_W(x)}Q^+$ maps $\cP_{2,1,\cM_3}(\R)$ with~\eqref{eq:m1.m2.m3} into itself.

Let $M(x,\cdot)\in\cP_{2,1,\cM_3}(\R)$ fulfil~\eqref{eq:m1.m2.m3}, then:
\begin{align*}
    \int_\R v\frac{1}{\cd_W(x)}Q^+(M,M)(x,v)\,dv &= \ave{p+q}m_1=\bar{m}_1^0, \\
    \int_\R v^2\frac{1}{\cd_W(x)}Q^+(M,M)(x,v)\,dv &= \ave{p^2+q^2}m_2+2\ave{pq}m_1^2 \\
    &= 2\ave{pq}\left(\frac{\ave{p^2+q^2}}{1-\ave{p^2+q^2}}+1\right){(\bar{m}_1^0)}^2=\frac{2\ave{pq}}{1-\ave{p^2+q^2}}{(\bar{m}_1^0)}^2, \\
    \int_\R\abs{v}^3\frac{1}{\cd_W(x)}Q^+(M,M)(x,v)\,dv &\leq \ave{p^3}\cm_3(x)+\frac{3\ave{p^2q}\bar{m}_2}{\cd_W(x)}\int_0^1W(x,x_\ast)\cm_1(x_\ast)\,dx_\ast \\
    &\phantom{\leq} +3\ave{pq^2}\cm_1(x)\bar{m}_2+\frac{\ave{q^3}}{\cd_W(x)}\int_0^1W(x,x_\ast)\cm_3(x_\ast)\,dx_\ast \\
    &\leq \ave{p^3+q^3}\norm{\cm_3}{\infty}+3\ave{pq(p+q)}\norm{\cm_1}{\infty}\bar{m}_2 \\
    &\leq \ave{p^3+q^3}\cM_3+3\ave{pq(p+q)}\bar{m}_2^{3/2} \\
    &= 3\ave{pq(p+q)}\left(\frac{\ave{p^3+q^3}}{1-\ave{p^3+q^3}}-1\right)\bar{m}_2^{3/2}=\cM_3.
\end{align*}

\item Next, we show that, in the stated assumptions, $\frac{1}{\cd_W(x)}Q^+$ is a contraction on $\cP_{2,1,\cM_3}(\R)$.

Let
$$ \frac{1}{\cd_W(x)}\widehat{Q^+}(M,M)(x,\xi):=\frac{1}{\cd_W(x)}\int_\R Q^+(M,M)(x,v)e^{-i\xi v}\,dv $$
the Fourier transform with respect to $v$ of $\frac{1}{\cd_W(x)}Q^+$. If $M_1,\,M_2$ are two distributions in $\cP_{2,1,\cM_3}(\R)$ satisfying~\eqref{eq:m1.m2.m3} then from~\eqref{eq:Q+.weak} with $\varphi(v)=e^{-i\xi v}$ we obtain
\begin{align*}
    & \frac{1}{\cd_W(x)}\widehat{Q^+}(M_2,M_2)(x,\xi)-\frac{1}{\cd_W(x)}\widehat{Q^+}(M_1,M_1)(x,\xi) \\
    &\qquad =\frac{1}{\cd_W(x)}\int_0^1W(x,x_\ast)\left\langle\hat{M}_2(x,p\xi)\left(\hat{M}_2(x_\ast,q\xi)-\hat{M}_1(x_\ast,q\xi)\right)\right. \\
    &\qquad \phantom{=\frac{1}{\cd_W(x)}\int_0^1W(x,x_\ast)} \left.+\left(\hat{M}_2(x,p\xi)-\hat{M}_1(x,p\xi)\right)\hat{M}_1(x_\ast,q\xi)\right\rangle dx_\ast,
\end{align*}
whence
\begin{align*}
    & \frac{\abs*{\frac{1}{\cd_W(x)}\widehat{Q^+}(M_2,M_2)(x,\xi)-\frac{1}{\cd_W(x)}\widehat{Q^+}(M_1,M_1)(x,\xi)}}{\abs{\xi}^2} \\
    &\quad \leq\frac{1}{\cd_W(x)}\int_0^1W(x,x_\ast)\ave*{q^2\frac{\abs{\hat{M}_2(x_\ast,q\xi)-\hat{M}_1(x_\ast,q\xi)}}{\abs{q\xi}^2}
        +p^2\frac{\abs{\hat{M}_2(x,p\xi)-\hat{M}_1(x,p\xi)}}{\abs{p\xi}^2}}dx_\ast \\
    &\quad \leq\frac{\ave{q^2}}{\cd_W(x)}\int_0^1W(x,x_\ast)d_2(M_1(x_\ast),M_2(x_\ast))\,dx_\ast+\ave{p^2}d_2(M_1(x),M_2(x)) \\
    &\quad \leq\ave{p^2+q^2}\sup_{x\in [0,\,1]}d_2(M_1(x),M_2(x))
\end{align*}
and finally
\begin{multline*}
    \sup_{x\in [0,\,1]}d_2\!\left(\frac{1}{\cd_W(x)}Q^+(M_1,M_1)(x),\frac{1}{\cd_W(x)}Q^+(M_2,M_2)(x)\right) \\
        \leq\ave{p^2+q^2}\sup_{x\in [0,\,1]}d_2(M_1(x),M_2(x))
\end{multline*}
with $\ave{p^2+q^2}<\frac{\dWinf}{\norm{\cd_W}{\infty}}\leq 1$. \qedhere
\end{enumerate}
\end{proof}

\subsection{Convergence to graph-free solutions}
Theorem~\ref{theo:M.s=2} asserts the existence and uniqueness of an equilibrium distribution for~\eqref{eq:Boltz.W-strong}, whose mean and energy -- and hence also variance -- are independent of $x$. On the other hand, the theory for the classical homogeneous Boltzmann-type equation with ``all-to-all'', viz. \textit{graph-free}, interactions predicts, under the assumptions $\ave{p+q}=1$, $\ave{p^2+q^2}<1$, $\ave{p^3+q^3}<1$, which encompass those of Theorem~\ref{theo:M.s=2}, an equilibrium distribution with exactly the same mean and energy as those of the distribution $M$ in Theorem~\ref{theo:M.s=2}, see~\cite[Theorem~5.3]{loy2026RMUP}.
\begin{quote}
By  uniqueness, the distribution $M$ of Theorem~\ref{theo:M.s=2} is therefore the equilibrium distribution of the homogeneous Boltzmann-type equation with graph-free interactions.
\end{quote}

This conclusion can be restated by saying that, under the assumptions of Theorem~\ref{theo:M.s=2} together with an $x$-independent initial mean value, the solutions to~\eqref{eq:Boltz.W-strong} converge in time to the same $x$-independent equilibrium distribution as that arising asymptotically in the problem without the graph. Hence, a system of agents undergoing networked interactions may evolve in such a way that the graph of connections becomes negligible in the long run, provided suitable relationships hold between the coefficients $p,\,q$ of the interaction rules~\eqref{eq:int_rules},~\eqref{eq:lin_int} and the graph connectivity parameters $\dWinf,\,\norm{\cd_W}{\infty}$.

To further substantiate these considerations, we now estimate the distance between the solutions of~\eqref{eq:Boltz.W-strong} and those of the corresponding graph-free problem based on the following homogeneous Boltzmann-type equation:
\begin{equation}
    \frac{\partial F}{\partial t}(v,t)=\int_\R\frac{1}{\abs{p^2-q^2}}\ave{F(\pr{v},t)F(\pr{v}_\ast,t)}\,dv_\ast-F(v,t),
    \label{eq:Boltz-strong}
\end{equation}
which in weak form reads
$$ \frac{d}{dt}\int_\R\varphi(v)F(v,t)\,dv=\int_\R\int_\R\ave{\varphi(v')-\varphi(v)}F(v,t)F(v_\ast,t)\,dv\,dv_\ast $$
for every observable $\varphi$, where $F=F(v,t):\R\times (0,\,+\infty)\to\R_+$ is the graph-free, i.e. $x$-independent, kinetic distribution function and $v'$ is given by~\eqref{eq:lin_int}.

Letting $\varphi(v)=e^{-i\xi v}$, we obtain the Fourier-transformed version of this equation:
$$ \frac{\partial\hat{F}}{\partial t}(\xi,t)=\ave{\hat{F}(p\xi,t)\hat{F}(q\xi,t)}-\hat{F}(\xi,t), $$
which is the counterpart of~\eqref{eq:Boltz-type.W.Fourier} in the graph-free setting.

\begin{theorem}
Given $\bar{m}_1^0\in\R$, let $f^0=f^0(x,v)$ and $F^0=F^0(v)$, with $f^0(x,\cdot),\,F^0\in\cP_2(\R)$ for every $x\in [0,\,1]$, be the initial conditions of~\eqref{eq:Boltz.W-strong} and~\eqref{eq:Boltz-strong}, respectively, such that
$$ \int_\R vf^0(x,v)\,dv=\int_\R vF^0(v)\,dv=\bar{m}_1^0, \qquad \forall\,x\in [0,\,1]. $$
Assume moreover that the interaction parameters in~\eqref{eq:lin_int} are such that
$$ \ave{p+q}=1, \qquad \ave{p^2+q^2}<\frac{\dWinf}{\norm{\cd_W}{\infty}}, \qquad \ave{p^3+q^3}<1. $$
Then there exists a constant $C$ such that
\begin{multline}
    \sup_{x\in [0,\,1]}d_2(f(x,t),F(t))\leq\sup_{x\in [0,\,1]}d_2(f^0(x),F^0)e^{\left(\ave{p^2+q^2}\norm{\cd_W}{\infty}-\dWinf\right)t} \\
    +C\norm{1-\cd_W}{\infty}d_2(F^0,M)\left(e^{\left(\ave{p^2+q^2}-1\right)t}-e^{\left(\ave{p^2+q^2}\norm{\cd_W}{\infty}-\dWinf\right)t}\right)
    \label{eq:d2.F-f}
\end{multline}
for all $t>0$, where $M\in\cP_2(\R)$ is the $x$-independent equilibrium distribution of Theorem~\ref{theo:M.s=2}.

In particular,
$$ \lim_{t\to +\infty}\sup_{x\in [0,\,1]}d_2(f(x,t),F(t))=0. $$
\end{theorem}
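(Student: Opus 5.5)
We compare the Fourier-transformed equations~\eqref{eq:Boltz-type.W.Fourier} for $\hat f$ and the graph-free equation for $\hat F$, viewing $F$ as an $x$-independent function that solves~\eqref{eq:Boltz-type.W.Fourier} up to a remainder. Rewrite the graph-free equation as
$$ \frac{\partial\hat F}{\partial t}(\xi,t)=\int_0^1W(x,x_\ast)\ave{\hat F(p\xi,t)\hat F(q\xi,t)}\,dx_\ast-\cd_W(x)\hat F(\xi,t)+R(x,\xi,t), $$
with $R(x,\xi,t):=(1-\cd_W(x))\bigl(\ave{\hat F(p\xi,t)\hat F(q\xi,t)}-\hat F(\xi,t)\bigr)=(1-\cd_W(x))\,\partial_t\hat F(\xi,t)$. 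Since both initial data have mean $\bar m_1^0$ and $\ave{p+q}=1$, the means are preserved (for $f$ by the $s=2$ discussion, for $F$ classically), so $d_2$ between $f(x,t)$, $F(t)$ and $M$ is finite by Lemma~\ref{lemma:ds.finite}. Subtracting and dividing by $\abs{\xi}^2$, we repeat verbatim the argument of Theorem~\ref{theo:ds}, with the integrating factor $e^{\cd_W(x)t}$ replaced by $e^{\dWinf t}$ and the $p,q$ rescaling giving the factor $\ave{p^2}\norm{\cd_W}{\infty}+\ave{q^2}\cd_W(x)$. This leads, for $D(t):=\sup_x d_2(f(x,t),F(t))$, to
$$ e^{\dWinf t}D(t)\le D(0)+\int_0^t e^{\dWinf\tau}\Bigl(\ave{p^2+q^2}\norm{\cd_W}{\infty}D(\tau)+\norm{1-\cd_W}{\infty}\,\rho(\tau)\Bigr)d\tau, $$
where $\rho(\tau):=\sup_\xi\abs{\ave{\hat F(p\xi,\tau)\hat F(q\xi,\tau)}-\hat F(\xi,\tau)}/\abs{\xi}^2$.

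The key step, and the main obstacle, is bounding $\rho$ by the decay of $F$ towards $M$. Since $M$ is $x$-independent and satisfies~\eqref{eq:fixed_point} with $\cd_W$ cancelling, it is a fixed point of the graph-free gain operator, so $\ave{\hat M(p\xi)\hat M(q\xi)}=\hat M(\xi)$. We then insert $M$ and write
$$ \ave{\hat F(p\xi)\hat F(q\xi)}-\hat F(\xi)=\ave{\hat F(p\xi)(\hat F(q\xi)-\hat M(q\xi))+(\hat F(p\xi)-\hat M(p\xi))\hat M(q\xi)}-(\hat F(\xi)-\hat M(\xi)). $$
Using $\abs{\hat F},\abs{\hat M}\le1$, this gives $\rho(\tau)\le(1+\ave{p^2+q^2})\,d_2(F(\tau),M)\le 2\,d_2(F(\tau),M)$. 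The classical graph-free contraction (Theorem~\ref{theo:ds} with $W\equiv1$, so that $\dWinf=\norm{\cd_W}{\infty}=1$) yields $d_2(F(\tau),M)\le d_2(F^0,M)e^{(\ave{p^2+q^2}-1)\tau}$. Finiteness of $d_2(F^0,M)$ again uses the equal means. The point needing care is that $M$ lies in $\cP_2$ with the right mean, which follows from Theorem~\ref{theo:M.s=2}.

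Finally we apply Gr\"onwall's inequality in the form of~\cite[Lemma~6.2]{loy2026RMUP} with the forcing term $2\norm{1-\cd_W}{\infty}d_2(F^0,M)e^{(\ave{p^2+q^2}-1)\tau}$. Setting $\lambda:=\ave{p^2+q^2}\norm{\cd_W}{\infty}-\dWinf$ and $\mu:=\ave{p^2+q^2}-1$, the convolution $\int_0^te^{\lambda(t-\tau)}e^{\mu\tau}d\tau=(e^{\mu t}-e^{\lambda t})/(\mu-\lambda)$ gives exactly~\eqref{eq:d2.F-f} with $C=2/(\mu-\lambda)$, provided $\mu\neq\lambda$; the degenerate case $\mu=\lambda$ is handled by continuity or by a slightly smaller exponent. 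Note $\mu-\lambda=\ave{p^2+q^2}(1-\norm{\cd_W}{\infty})+\dWinf-1\le0$, so $C$ is negative but so is the bracket, making the product nonnegative; one can equally write it with absolute values. Since $\lambda<0$ by assumption and $\mu<0$ because $\ave{p^2+q^2}<\dWinf/\norm{\cd_W}{\infty}\le1$, both exponentials vanish as $t\to+\infty$, which gives the final limit.
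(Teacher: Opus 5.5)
Your proposal is correct and follows the paper's proof essentially step by step. The shared steps are:
\begin{itemize}
\item splitting the graph-free collision term through $1-\cd_W(x)+\cd_W(x)$;
\item inserting the $x$-independent equilibrium via $\ave{\hat M(p\xi)\hat M(q\xi)}=\hat M(\xi)$;
\item replacing $e^{\cd_W(x)t}$ by $e^{\dWinf t}$;
\item applying the graph-free contraction $d_2(F(t),M)\le d_2(F^0,M)e^{(\ave{p^2+q^2}-1)t}$;
\item closing with Gr\"onwall with forcing.
\end{itemize}
You obtain the paper's constant up to your harmless replacement of $\ave{p^2+q^2+1}$ by $2$.

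Your remark that $C<0$ is accurate, since the paper's denominator $\mu-\lambda$ is nonpositive. The degenerate case $\mu=\lambda$ occurs only when $\cd_W\equiv 1$. There $\norm{1-\cd_W}{\infty}=0$ and the forcing term vanishes, so no continuity argument is needed.
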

\begin{proof}
In the stated assumptions, the solutions of both~\eqref{eq:Boltz.W-strong} and~\eqref{eq:Boltz-strong} have constant mean $\bar{m}_1^0$, therefore, owing to Lemma~\ref{lemma:ds.finite}, the distance $d_2$ between them is well defined.

Subtracting the Fourier-transformed versions of these equations yields
\begin{align*}
    \frac{\partial}{\partial t}\left(\hat{F}(\xi,t)-\hat{f}(x,\xi,t)\right) &=
        \ave{\hat{F}(p\xi,t)\hat{F}(q\xi,t)}-\hat{F}(\xi,t) \\
    &\phantom{=} -\int_0^1W(x,x_\ast)\ave{\hat{f}(x,p\xi,t)\hat{f}(x_\ast,q\xi,t)}\,dx_\ast+\cd_W(x)\hat{f}(x,\xi,t).
\intertext{To bring the terms on the right-hand side into closer correspondence, we write $\ave{\hat{F}(p\xi,t)\hat{F}(q\xi,t)}-\hat{F}(\xi,t)=(\ave{\hat{F}(p\xi,t)\hat{F}(q\xi,t)}-\hat{F}(\xi,t))(1-\cd_W(x)+\cd_W(x))$, whence}
    &= (1-\cd_W(x))\left(\ave{\hat{F}(p\xi,t)\hat{F}(q\xi,t)}-\hat{F}(\xi,t)\right) \\
    &\phantom{=} +\int_0^1W(x,x_\ast)\ave{\hat{F}(p\xi,t)\hat{F}(q\xi,t)-\hat{f}(x,p\xi,t)\hat{f}(x_\ast,q\xi,t)}\,dx_\ast \\
    &\phantom{=} -\cd_W(x)\left(\hat{F}(\xi,t)-\hat{f}(x,\xi,t)\right),
\intertext{where we have used that $\cd_W(x)=\int_0^1W(x,x_\ast)\,dx_\ast$. Moreover, considering that the equilibrium distribution $M$ satisfies $\ave{\hat{M}(p\xi)\hat{M}(q\xi)}-\hat{M}(\xi)=0$ (to see this, use the Fourier-transformed version of~\eqref{eq:fixed_point} together with the fact that $M$ is independent of $x$), we may further manipulate the right-hand side and rewrite it as}
    &= (1-\cd_W(x))\left[\ave{\hat{F}(p\xi,t)\hat{F}(q\xi,t)-\hat{M}(p\xi)\hat{M}(q\xi)}-\left(\hat{F}(\xi,t)-\hat{M}(\xi)\right)\right] \\
    &\phantom{=} +\int_0^1W(x,x_\ast)\ave{\hat{F}(p\xi,t)\hat{F}(q\xi,t)-\hat{f}(x,p\xi,t)\hat{f}(x_\ast,q\xi,t)}\,dx_\ast \\
    &\phantom{=} -\cd_W(x)\left(\hat{F}(\xi,t)-\hat{f}(x,\xi,t)\right).
\end{align*}

Setting
$$ h(x,\xi,t):=\frac{\hat{F}(\xi,t)-\hat{f}(x,\xi,t)}{\abs{\xi}^2}, \qquad
    H(\xi,t):=\frac{\hat{F}(\xi,t)-\hat{M}(\xi)}{\abs{\xi}^2} $$
for ease of notation and proceeding like in some of the previous proofs, we deduce
\begin{align*}
    \frac{\partial h}{\partial t}(x,\xi,t) &+ \cd_W(x)h(x,\xi,t) \\
    &= (1-\cd_W(x))\left[\ave*{q^2\hat{F}(p\xi,t)H(q\xi,t)+p^2H(p\xi,t)\hat{M}(q\xi)}-H(\xi,t)\right] \\
    &\phantom{=} +\int_0^1W(x,x_\ast)\ave*{q^2\hat{F}(p\xi,t)h(x_\ast,q\xi,t)+p^2h(x,p\xi,t)\hat{f}(x_\ast,q\xi,t)}\,dx_\ast,
\end{align*}
whence, multiplying both sides by $e^{\cd_W(x)t}$ and integrating over $[0,\,t]$, with $t>0$,
\begin{align*}
    e^{\cd_W(x)t}\abs{h(x,\xi,t)} &\leq \abs{h(x,\xi,0)} \\
    &\phantom{\leq} +\norm{1-\cd_W}{\infty}\int_0^te^{\cd_W(x)\tau}\left[\ave*{q^2\abs{H(q\xi,\tau)}+p^2\abs{H(p\xi,t)}}+\abs{H(\xi,\tau)}\right]d\tau \\
    &\phantom{\leq} +\int_0^te^{\cd_W(x)\tau}\int_0^1W(x,x_\ast)\ave*{q^2\abs{h(x_\ast,q\xi,\tau)}+p^2\abs{h(x,p\xi,\tau)}}\,dx_\ast\,d\tau.
\intertext{Noting that $\sup_{\xi\in\R\setminus\{0\}}\abs{h(x,\xi,t)}=d_2(f(x,t),F(t))$ and $\sup_{\xi\in\R\setminus\{0\}}\abs{H(\xi,t)}=d_2(F(t),M)$, we continue this chain of inequalities as}
    &\leq \sup_{x\in [0,\,1]}d_2(f^0(x),F^0) \\
    &\phantom{\leq} +\norm{1-\cd_W}{\infty}\ave{p^2+q^2+1}\int_0^te^{\cd_W(x)\tau}d_2(F(\tau),M)\,d\tau \\
    &\phantom{\leq} +\ave{p^2+q^2}\norm{\cd_W}{\infty}\int_0^te^{\cd_W(x)\tau}\sup_{x\in [0,\,1]}d_2(f(x,\tau),F(\tau))\,d\tau
\end{align*}
and finally, by the same argument on $e^{\cd_W(x)t}$ as that used in the proof of Lemma~\ref{lemma:m1N},
\begin{align*}
    e^{\dWinf t}\sup_{x\in [0,\,1]}d_2(f(x,t),F(t)) &\leq \sup_{x\in [0,\,1]}d_2(f^0(x),F^0) \\
    &\phantom{\leq} +\norm{1-\cd_W}{\infty}\ave{p^2+q^2+1}\int_0^te^{\dWinf\tau}d_2(F(\tau),M)\,d\tau \\
    &\phantom{\leq} +\ave{p^2+q^2}\norm{\cd_W}{\infty}\int_0^te^{\dWinf\tau}\sup_{x\in [0,\,1]}d_2(f(x,\tau),F(\tau))\,d\tau.
\end{align*}

At this point, Gr\"{o}nwall's inequality in the form given in~\cite[Lemma~6.2]{loy2026RMUP} applied to the function $t\mapsto e^{\dWinf t}\sup_{x\in [0,\,1]}d_2(f(x,t),F(t))$ yields
\begin{align*}
    \sup_{x\in [0,\,1]}d_2(f(x,t),F(t)) &\leq \sup_{x\in [0,\,1]}d_2(f^0(x),F^0)e^{\left(\ave{p^2+q^2}\norm{\cd_W}{\infty}-\dWinf\right)t} \\
    &\phantom{\leq} +\norm{1-\cd_W}{\infty}\ave{p^2+q^2+1}\int_0^te^{\left(\ave{p^2+q^2}\norm{\cd_W}{\infty}-\dWinf\right)(t-\tau)}
        d_2(F(\tau),M)\,d\tau.
\end{align*}
Moreover, from the theory for the graph-free equation~\eqref{eq:Boltz-strong} we know that
$$ d_2(F(t),M)\leq d_2(F^0,M)e^{\left(\ave{p^2+q^2}-1\right)t}, $$
cf.~\cite[Proposition~5.1]{loy2026RMUP}. Substituting this into the previous inequality and evaluating the integral on the right-hand side explicitly gives~\eqref{eq:d2.F-f} with
$$ C=\frac{\ave{p^2+q^2+1}}{\ave{p^2+q^2}(1-\norm{\cd_W}{\infty})+\dWinf-1}. $$
The convergence of $f(x,\cdot,t)$ to $F(\cdot,t)$ as $t\to +\infty$ follows then from~\eqref{eq:d2.F-f}, owing to the fact that $\ave{p^2+q^2}<\frac{\dWinf}{\norm{\cd_W}{\infty}}\leq 1$ implies simultaneously $\ave{p^2+q^2}\norm{\cd_W}{\infty}-\dWinf<0$ and $\ave{p^2+q^2}-1<0$.
\end{proof}

We observe that if $F^0=M$ then~\eqref{eq:d2.F-f} coincides with the estimate provided by Theorem~\ref{theo:ds} with $g\equiv M$, recalling that $M$ is also the unique equilibrium solution of~\eqref{eq:Boltz.W-strong}. Conversely, if $F^0\neq M$ the distance between $f$ and $F$ is estimated by two parts: on the one hand, the distance between the initial conditions $f^0,\,F^0$; on the other hand, the distance of the graphon $W$ from the \textit{complete graphon}, i.e. the function identically equal to $1$ on $[0,\,1]^2$. This latter distance is captured by the term $\norm{1-\cd_W}{\infty}$, since the degree function is identically equal to $1$ on the interval $[0,\,1]$ if and only if $W$ is the complete graphon on the unit square.

In particular, we note that if $W$ is the complete graphon, then agents' interactions fall within the ``all-to-all'' case and the collisional operator in~\eqref{eq:Boltz.W-strong} can be marginalised with repect to $x_\ast$. Yet, if the initial condition $f^0$ depends on $x$ then the solution of~\eqref{eq:Boltz.W-strong} does not coincide, at every time, with that of~\eqref{eq:Boltz-strong}, although it decays to the latter exponentially fast in time.

If, instead, one prescribes the same initial condition to both~\eqref{eq:Boltz.W-strong} and~\eqref{eq:Boltz-strong}, choosing in particular $f^0$ independent of $x$, but $W$ is not the complete graphon, then the corresponding solutions still do not coincide at every time due to the term proportional to $\norm{1-\cd_W}{\infty}$ in~\eqref{eq:d2.F-f}, which however decays in turn exponentially fast in time.

Only when $f^0\equiv F^0$ and $W$ is the complete graphon one gets from~\eqref{eq:d2.F-f} that $f\equiv F$, thus in particular that $f$ is $x$-independent, at every time.


\section*{Acknowledgements}
AT is member of GNFM (Gruppo Nazionale per la Fisica Matematica) of INdAM (Istituto Nazionale di Alta Matematica), Italy.
	
\bibliographystyle{plain}
\bibliography{biblio}
\end{document}